\def\ps@pprintTitle{%
 \let\@oddhead\@empty
 \let\@evenhead\@empty
 \def\@oddfoot{}%
 \let\@evenfoot\@oddfoot}
\journal{}
\newcommand{\PP }{G}
\newcommand{\W }{W}
\newcommand{\N}{\mathbb{N}}
\newcommand{\R}{\mathbb{R}}
\newcommand{\q}{z}
\newtheorem{theorem}{Theorem}
\newtheorem{definition}{Definition}
\newtheorem{assumption}{Assumption}
\newtheorem{proposition}{Proposition}
\newtheorem{lemma}{Lemma}
\newtheorem{corollary}{Corollary}
\newtheorem{remark}{Remark}
\newproof{proof}{Proof}
\declaretheorem[style=definition]{example}
\renewcommand\thmcontinues[1]{Continued}
\def \Z{\mathbb{Z}}
\begin{document}%

\begin{frontmatter}

\title{ A variational inequality framework for network games: Existence, uniqueness, convergence and sensitivity analysis.  \tnoteref{mytitlenote}}
\tnotetext[mytitlenote]{Research supported by the SNSF grant numbers P2EZP2$\_$168812, P300P2$\_$177805  and partially supported by ARO MURI W911NF-12-1-0509.}

\author{Francesca Parise}
\ead{parisef@mit.edu}
\author{Asuman Ozdaglar}
\ead{asuman@mit.edu}
\address{Laboratory for Information and Decision Systems, Massachusetts Institute of Technology, Cambridge, MA, USA.}

\begin{abstract}
We provide a unified variational inequality framework for the study of fundamental properties of the Nash equilibrium in network games. We identify several conditions on the underlying network (in terms of spectral norm, infinity norm and minimum eigenvalue of its adjacency matrix) that guarantee existence, uniqueness, convergence and continuity of equilibrium in general network games with multidimensional and possibly constrained strategy sets. We delineate the relations between these conditions and characterize classes of networks that satisfy each of these conditions.
\end{abstract}

\begin{keyword}
network games, variational inequalities, strong monotonicity, uniform P-function, Nash equilibrium, existence and uniqueness, best response dynamics, sensitivity analysis
\end{keyword}

\end{frontmatter}

\section{Introduction}

In many social and economic settings, decisions of individuals are affected by the actions of their friends, colleagues, and peers. Examples include adoption of new products and innovations, opinion formation and social learning, public good provision, financial exchanges and international trade. Network games have emerged as a powerful framework to study these settings with particular focus on how the underlying patterns of interactions, governed by a network, affect the economic outcome.\footnote{See for instance\  \cite{galeotti2010network,goyal2012connections,nagurney2013network,jackson2014games,bramoulle2015games,jackson2016networks}.  \cite{durlauf2004neighborhood} offers a survey of the theoretical and empirical literature on peer effects.} 
 For tractability reasons, many of the works in this area studied games with special structure (e.g., quadratic cost functions, scalar non-negative strategies) or special properties (e.g., games of strategic complements or substitutes for which the best response of each agent is increasing or decreasing in the other agents strategies).\footnote{See  \cite{bulow1985multimarket,milgrom1990rationalizability, vives1990nash}.}
These works relied on disparate techniques for their analysis which then lead to different focal network properties for existence and uniqueness of equilibria. For example, papers on network games of strategic complements typically relate  equilibrium properties  to
 the \textit{spectral radius} of the (weighted) adjacency matrix of the network (see e.g. { \cite{ballester2006s,jackson2014games,acemoglu2015networks}}) while  papers considering network games of strategic substitutes  highlight the role of the \textit{minimum eigenvalue} of the adjacency matrix  (see e.g. \cite{bramoulle2014strategic,allouch2015private}).
Moreover, there has been relatively little work on network games that feature neither strategic complements nor substitutes and  possibly involve multidimensional strategy sets  (e.g., models where agents decide their effort level on more than one activity, \cite{chen2017multiple}).

In this paper, we provide a unified framework based on a variational inequality reformulation of the Nash equilibrium to study equilibrium properties of network games including existence  and uniqueness, convergence of the best response dynamics and comparative statics. Our framework extends the literature in multiple dimensions. It applies to games of strategic complements, substitutes as well as games with  mixed strategic interactions. It provides a systematic understanding of which spectral properties of the network (spectral norm, minimum eigenvalue or infinity norm of the adjacency matrix) are relevant in establishing fundamental properties of the equilibrium. Moreover, it covers network games with multidimensional and constrained strategy sets. 

 Our work is built on the key observation that the analysis of network games can be performed in two steps.
In the first step we focus on the operator of the variational inequality associated with the game, which is typically referred to as the {\it game Jacobian} and we derive sufficient conditions on the network and on the cost functions to guarantee that  it possesses either one of three fundamental  properties: strong monotonicity (which is a stronger version than the strict diagonal concavity condition used in \cite{rosen1965existence}), uniform block P-function  and uniform P-function.\footnote{ While sufficient conditions for the former two properties in terms of the gradient of the game Jacobian have been studied in the literature (positive definiteness and the $P_\Upsilon$ condition discussed in \cite{scutari2014real}, respectively), we here suggest  a novel sufficient condition for the uniform P-function property (which we term uniform P-matrix condition). Our result extends the P-matrix condition used in  \cite[Proposition 3.5.9(a)]{facchinei2007finite} and is needed to  guarantee existence and uniqueness of the solution of a variational inequality without imposing boundedness assumptions on its closed convex set.}  Our sufficient conditions are formulated in terms of different network measures, i.e., spectral norm, infinity norm and minimum eigenvalue of the weighted adjacency matrix, as detailed in Assumptions \ref{ass:gen}a), \ref{ass:gen}b) and \ref{ass:gen}c). We highlight the relations between these conditions and  we show that for symmetric networks, the condition in terms of the minimum eigenvalue is the least restrictive and hence is satisfied by the largest set of networks, whereas for asymmetric networks conditions in terms of spectral norm and infinity norm cannot be compared and are satisfied by different sets of networks. While the conditions that involve spectral norm and minimum eigenvalue have appeared in the study of special network games, the condition in terms of the infinity norm is new and arises naturally for  games played over asymmetric networks where each agent is only influenced by a relatively small number of neighbors.

In the second step, we combine our sufficient conditions (connecting network properties to properties of the game Jacobian) with the theory of  variational inequalities to derive equilibrium properties.  A summary of our results is presented in Table \ref{fig:summary2}.  Specifically, we extend previous literature results on existence and uniqueness of the Nash equilibrium to games with multidimensional constrained strategy sets and  mixed strategic effects, we  provide sufficient conditions  to guarantee   convergence  of continuous and discrete  best response dynamics and finally we study how  parameter variations affect the Nash equilibrium.  Our first result in this context is to guarantee Lipschitz continuity of the Nash equilibrium under the block P-function property.\footnote{This result extends \cite[Theorem 2.1]{dafermos1988sensitivity} which holds under the more restrictive strong monotonicity condition.} We then build on a sensitivity analysis result for variational inequalities applied to general games given in  \cite{facchinei201012,friesz2016foundations} to establish conditions under which the Nash equilibrium is differentiable and we derive an explicit formula for its sensitivity.\footnote{
Contrary to previous works, our formula for the sensitivity of the Nash equilibrium depends  on  primal variables only. The only  result of such type that we are aware of is    \cite[Theorem 3.1]{dafermos1988sensitivity}.  The formula  therein is however obtained following geometric arguments and consequently depends on the orthogonal projection operator on the set of active constraints.} 

\begin{table}[h]
\begin{center}
\resizebox{\textwidth}{!}{
\begin{tabular}{|c|c|c|c|c|} \hline
\vphantom{\Large{{ \ding{51}}}}  &\textbf{ Assumption \ref{ass:gen}a} & \textbf{Assumption \ref{ass:gen}b } &\multicolumn{2}{c|}{\textbf{Assumption \ref{ass:gen}c}}  \\ 
\vphantom{\Large{{ \ding{51}}}}&& &$K^i(x)=\tilde K(x)\succeq 0$  & $n=1,K^i(x)>\nu> 0$ \\ 
& {\small ( Thm. \ref{thm:norm})}&{\small(Thm. \ref{thm:inf})} &{\small(Thm. \ref{thm:min})} &{\small (Thm. \ref{thm:scalar})} \\ \hline
\vphantom{\Large{{ \ding{51}}}}\textbf{Existence and} &\large{ \ding{51}}&\large{ \ding{51}} &\large{ \large{ \ding{51}}} & \large{ \large{ \ding{51}}}\\
\textbf{uniqueness}& {\small (Prop. \ref{prof:ex}a)}&{\small(Prop. \ref{prof:ex}b)} &{\small(Prop. \ref{prof:ex}a)} &{\small (Prop. \ref{prof:ex}c)} \\ \hline
\vphantom{\Large{{ \ding{51}}}}\textbf{BR dynamics} &\large{ \large{ \ding{51}}}&  \large{ \ding{51}}&\large{ \large{ \ding{51}}}   &\large{ \ding{51}} \\
\textbf{(continuous)} &{\small (Thm. \ref{cor:br})}&{\small(Thm. \ref{cor:br})} &{\small(see assumptions of Thm. \ref{thm:br_smon})} &{\small (see assumptions of Thm. \ref{lemma:br_sub}) } \\ \hline
\vphantom{\Large{{ \ding{51}}}} \textbf{BR dynamics}&\large{ \ding{51}} &\large{ \ding{51}}&\large{ \ding{55}} & \large{ \ding{55}}\\
\textbf{(discrete, simult.)}  & {\small(Thm. \ref{cor:br})}& {\small (Thm. \ref{cor:br})}& {\small(Ex. \ref{ex:br}A)  }& {\small(Ex. \ref{ex:br}A )} \\ \hline
\vphantom{\Large{{ \ding{51}}}}\textbf{Lipschitz} &\large{ \ding{51}}&\large{ \ding{51}}  & \large{ \large{ \ding{51}}}  & \large{ \large{ \ding{51}}} \\
\textbf{ continuity} & {\small (Thm. \ref{cor:Lipschitz})}& {\small(Thm. \ref{cor:Lipschitz})}&{\small(Thm. \ref{cor:Lipschitz}) }&{\small (Thm. \ref{cor:Lipschitz})} \\ \hline
\end{tabular}}
\end{center}
\caption{Summary of the relation between the considered network and cost conditions (i.e. Assumption \ref{ass:gen}a, \ref{ass:gen}b, \ref{ass:gen}c) and properties of the Nash equilibrium of the game.}
\label{fig:summary2}
\end{table}%

 To illustrate our theory we consider three running examples. First, we consider \textit{scalar linear quadratic network games}  for which the best response is a (truncated) linear function. These games have been extensively studied in the literature as illustrated in the seminal papers  \cite{ballester2006s}, where a model featuring local complements and global substitutes  is considered (mainly motivated by crime applications) and  \cite{bramoulle2014strategic}, where  games with  mixed strategic effects are considered. The results in \cite{bramoulle2014strategic} are derived by using the approach of  potential games, first introduced in \cite{monderer:shapley:96}. 
 Expanding on  \cite{bramoulle2014strategic}, we show that for scalar linear quadratic games the existence of a potential function is related to a condition on the symmetry of the network. When such condition is violated the  potential approach suggested in \cite{bramoulle2014strategic} cannot be applied. The variational inequality  approach presented in this work can be seen as an extension of the potential approach for cases when a potential function is not available. In fact,  potential games whose potential function is  strongly convex are a subclass of strongly monotone games. By leveraging on the theory of variational inequalities   (instead of convex optimization) we show how the results in \cite{ballester2006s} and \cite{bramoulle2014strategic} can be recovered and extended to  linear quadratic models that are not potential.   

A main feature of linear quadratic games, that significantly simplifies their analysis, is  the fact that the best response function is  linear. A few  works in the literature have extended the analysis of network games with scalar strategies  to nonlinear settings by focusing on cases where the  best response function is nonlinear but monotone. For example, \cite{belhaj2014network} considers a case where the best response function is increasing leading to a supermodular game. On the other hand, \cite{allouch2015private} focuses on a model of public good games where the best response function is decreasing. Conditions for existence and uniqueness in both these cases have been derived using  techniques tailored to the special structure of the problem. As  second motivating example,  we consider network games with scalar non-negative strategies where the \textit{best response is nonlinear and non-monotone}, representing e.g. races and tournaments where agents work hardest when network externalities are small  (neck and neck race) while they reduce their efforts if they fall behind (discouragement effect). 

Finally, all the games mentioned above feature scalar strategies. As third motivating example, we consider a network game where each agent decides on how much effort to exert in \textit{more than one activity} at the time. This model of multiple activities over networks was first suggested in \cite{chen2017multiple} and can be used to study agents engagement in activities that are interdependent such as crime and drug use (if activities are complements) or crime and education (if activities are substitutes).  \cite{chen2017multiple} focuses on the particular case when the network effects within the same activity are complements and derives results on existence and uniqueness of the Nash equilibrium by using a similar approach as in the single activity case studied in \cite{ballester2006s}.  Our framework enables the study of cases where  network effects within the same activity can  be substitutes and    interdependencies are present not only in the cost function but also in the strategy sets (e.g. because of budget constraints).

In addition to the papers cited above, our paper is most related to the seminal works of \cite{harker1990finite, facchinei2007finite, facchinei201012, scutari2014real,friesz2016foundations} and references therein. These works discuss the use of  variational inequalities to analyze equilibria of general games. Our novel contribution  is to  focus on network games and  unfold the effect of network properties on the game Jacobian and consequently on the equilibrium properties.

The only papers that we are aware of that study properties of the Nash equilibrium in network games by using variational inequalities are  \cite{ui2016bayesian}, \cite{melo2017variational} and \cite{naghizadehuniqueness}.  All these  works  consider network games with scalar non-negative strategies. 
 Specifically, \cite{ui2016bayesian} considers Bayesian network games with linear quadratic cost functions.  \cite{melo2017variational} considers existence, uniqueness and comparative statics for scalar network games with symmetric unweighted networks, by focusing on strong monotonicity of the game Jacobian.  For games with strategic substitutes, we show via a counterexample that strong monotonicity cannot be  guaranteed under the minimum eigenvalue condition, not even for scalar network games with  symmetric networks (see Example \ref{ex3}). In fact our analysis suggests that for scalar games of strategic substitutes, the natural property is the uniform P-matrix condition. In addition to this substantial difference, our paper is distinct from \cite{melo2017variational} in the following ways: i) we do not consider only strong monotonicity but  also uniform (block) P-functions, ii) we investigate how  properties of the game Jacobian relate to network properties considering not only the minimum eigenvalue but also  the spectral and  infinity norm, iii) we consider networks that might be asymmetric and agents with possibly multidimensional strategy sets, iv) we consider games with mixed strategic effects for which the best response  might not be monotone as a function of the neighbor aggregate, v) besides uniqueness and comparative statics, we also study convergence of best response dynamics.  The recent paper  \cite{naghizadehuniqueness} focuses on a special case of scalar network games and derives conditions in terms of the absolute value of the elements of the adjacency matrix for uniqueness of Nash equilibrium. This may be overly restrictive since taking the absolute value loses structural properties associated with games of strategic substitutes.  \cite{naghizadeh2017provision} studies public good network games by using  linear complementarity problems (which are a subclass of variational inequalities).
We  remark that variational inequalities have been used to study specific network economic models such as spatial price equilibrium models, traffic networks, migration models and  market equilibria, as reviewed for example  in \cite{nagurney2013network, facchinei2007finite}. In these settings however the network typically appears as part of the constraints and not in the cost function. Consequently, the  network effects studied in this paper do not appear.  Finally, \cite{xu2017nexus} uses variational inequality theory to study contest games whose cost function does not possess the aggregative structure considered in this paper (i.e. the cost depends on the strategies of the other agents individually and not on their aggregate).

It is also worth highlighting that  network games have  similarities with aggregative games. Equilibrium properties in aggregative games have been extensively studied in \cite{novshek1985existence,kukushkin1994fixed,jensen2005existence,acemoglu2013aggregate,jensen2010aggregative,kukushkin:04,cornes2012fully,dubey2006strategic}. 
Moreover, motivated by technological applications such as demand-response energy markets or communication networks, several papers have studied distributed dynamics for  convergence to the equilibria in aggregative games (see \cite{chen2014autonomous,koshal2012gossip,koshal2016distributed,grammatico:parise:colombino:lygeros:14, paccagnan2016aggregative}). The main difference   is that while in aggregative games each agent is affected by the same aggregate of the other agents strategies,  in network games this aggregate is agent and network dependent. 

 Our paper is organized as follows. In Section \ref{sec:net} we introduce the framework of network games and three motivational examples.
In Section \ref{sec:vi} we recall the connection between variational inequalities and game theory, we summarize properties that guarantee existence and uniqueness of the solution to a variational inequality and we present our technical results on the uniform P-matrix condition.
Section \ref{sec:overview} presents an overview of our results relating network and cost conditions to properties of the game Jacobian and illustrates these conditions for several networks of interest. 
Sections \ref{sec:ex},  \ref{sec:br_dynamics} and \ref{sec:comparative} exploit  the results of Section~\ref{sec:overview} to study existence and uniqueness, convergence of best response dynamics and comparative statics, respectively. Section \ref{sec:conc} concludes the paper. 
Some basic matrix properties and lemmas used in the main text are summarized in Appendix \ref{appendix:def}.
Appendix~\ref{sec:game_jacobian} provides the technical statements of the results anticipated in Section \ref{sec:overview} and Appendix \ref{appendix:proof} their proofs.
Appendix \ref{appendix:uniform} proves our technical result on the uniform P-matrix condition. Appendices \ref{appendix:br} and \ref{appendix:comp} expand on Sections \ref{sec:br_dynamics} and \ref{sec:comparative}, respectively. 
Definitions, examples and technical statements provided in the appendices are labeled with the corresponding letter.

\subsubsection*{Notation:}
 We denote the gradient of  a function $f(x):\R^n\rightarrow \R^d$  by $\nabla_x f(x) \in \R^{d \times n}.$ Given $a,b\in\N$, $\N[a,b]$ denotes the set of integer numbers in the interval $[a,b]$.
 $I_n$ denotes the $n$-dimensional identity matrix, $\mathbbm{1}_n$ the vector of unit entries and $e_i$ the $i$th canonical vector.
Given $A\in\mathbb{R}^{n\times n}$, $A\succ0$ ($\succeq0$) $\Leftrightarrow$ $x^\top A x>0~(\ge0),$ $\forall x\neq 0$,    $A_{(k,:)}$ denotes the $k$th row of $A$, $\rho(A)$ denotes the spectral radius of $A$ and $\Lambda(A)$ the  spectrum. $A\otimes B$ denotes the Kronecker product and $[A;B]:=[A^\top, B^\top ]^\top$. Given $N$ matrices $\{A^i\}_{i=1}^N$, $\mbox{blkd}(A^1,\ldots,A^N)=\mbox{blkd}[A^i]_{i=1}^N$ is the block diagonal matrix whose $i$th block is  $A^i$. Given $N$ vectors $x^i\in\mathbb{R}^{n}$, $x:=[x^1;\ldots;x^N]:=[x^i]_{i=1}^N:=[{x^1}^\top,\ldots ,{x^N}^\top]^\top\in\R^{Nn}$. Note that $x^i$ is the $i$th block component of $x$. We instead denote by $[x]_h$, for $h\in\N[1,Nn]$, the $h$-th scalar element of $x$.   The symbols $\ge_e$ and $\le_e$ denote element-wise ordering relations for vectors.  $\partial \mathcal{X}$ is the boundary of the set $\mathcal{X}$. $\Pi_{\mathcal{X}}^Q(x)$ denotes  the projection of the vector $x$ in the closed and convex set $\mathcal{X}$ according to the weighted norm $\|\cdot\|_Q$; $\Pi_{\mathcal{X}}(x):=\Pi_{\mathcal{X}}^I(x)$. For the definition of vector and matrix norms see Appendix A.

\section{Network games}
\label{sec:net}

Consider a network game $\mathcal{G}$ with $N$ players interacting over a weighted directed graph  described by the \textit{non-negative} adjacency matrix $\PP $. The element $\PP_{ij}$ represents the influence of agent $j$'s strategy on the cost function of agent $i$. We assume $\PP _{ii}=0$ for all $i\in\N[1,N]$. We say that $j$ is an in-neighbor of $i$ if $\PP _{ij}>0$. We denote by $\mathcal{N}^i$ the set of in-neighbors of agent $i$. From here on we   refer to in-neighbors simply as neighbors since we do not consider out-neighbors and  we use the terms network and graph interchangeably.
Each player $i\in\N[1,N]$ aims at selecting a vector strategy $x^i\in\R^n$ in its feasible set $\mathcal{X}^i\subseteq \R^n$  to \textit{minimize} a  cost function

\begin{equation}\label{gamey}
J^i(x^i,\q^i(x))
\end{equation}
which depends on its own strategy $x^i$ and on the aggregate of the neighbors strategies, $\q^i(x)$, obtained as the weighted
linear combination  

 $$ \q^i(x):=\sum_{j=1}^N \PP _{ij}x^j,$$
where  $x:=[x^i]_{i=1}^N\in\R^{Nn}$ is  a vector whose $i$-th block component is equal to the  strategy of agent~$i$.
The best response of agent $i$ to the neighbor aggregate $\q^i(x)$ is defined as

$$B^i(\q^i(x)):= \arg\min_{x^i\in\mathcal{X}^i} J^i(x^i,\q^i(x)),$$
where we recall that $\q^i(x)$ does not depend on $x^i$ since $\PP_{ii}=0$.
A set of strategies in which no agent has an incentive for unilateral deviations (i.e., each agent is playing a best response to  other agents strategies) is a Nash equilibrium,  that is,  $\{ x^{\star i} \}_{i=1}^N$, with $ x^{\star i}\in \mathcal{X}^i$ for all $i$, is a Nash equilibrium  if for all players $i\in\N[1,N]$ it holds $x^{\star i}\in B^i(\q^i(x^\star))$.

 Network games can be used to model a vast range of economic settings. Nonetheless, for tractability of analysis and for generating crisp insights  the literature adopted specific structural assumptions on best response functions. The next three  examples illustrate such assumptions  and show how our framework can unify previous literature results and enable consideration of richer economic interactions. 

\begin{example}[label=ex:lq] \textup{\textbf{(Scalar linear quadratic games)}}
Consider a  network game where each agent chooses a scalar  non-negative strategy $x^i\in \mathcal{X}^i=\R_{\ge 0}$ (representing for instance how much effort he exerts on a specific activity) to minimize the  linear quadratic cost function
\begin{equation}\label{eq:quad_cost}
J^i(x^i, z^i(x))=\frac12 (x^i)^2+ [K^i z^i(x)-a^i]x^i,
\end{equation}
with $K^i, a^i\in\R$. 
{Network games with payoffs of this form have been widely used in the literature to study various economic settings including private provision of public goods (e.g., innovation, R\&D, health investments in particular vaccinations; see \cite{bramoulle2007public,bramoulle2014strategic}) and games with local payoff complementarities but global substitutability  (e.g., \cite{ballester2006s} suggests that  the cost of engaging in a criminal activity is lower when friends participate in such behavior,  yet may be higher when overall  crime levels increase).
For these games, the best response of  agent~$i$ is given by a  (truncated) linear function of the neighbor aggregate $z^i(x)$:
\begin{equation}\label{eq:br_quad}B^i(\q^i(x)):=\max\{0, a^i- K^i z^i(x)]\}.\end{equation}
 From Eq. \eqref{eq:br_quad} it is evident that the payoff parameter $K^i$ captures how much the neighbor aggregate $z^i(x)$ affects the  equilibrium strategy of agent $i$. 
 When the $K^i$ are the same for all $i$, we say that the game has \textup{homogeneous weights}. We remark that even in this case, the  neighbor aggregate $z^i(x)$  may be heterogeneous across agents.   We also note that if for each agent $i$ $K^i\ge 0$ ($K^i\le 0$),  this is a game of strategic substitutes (complements), meaning that each agent's best response   decreases (increases) in  other agents actions. 

 Linear quadratic network games have been studied in \cite{bramoulle2014strategic} using the theory of potential games. This approach entails constructing a proxy maximization problem  whose stationary points provide the set of Nash equilibria. 
 By \cite{monderer:shapley:96} a necessary condition for the existence of an exact potential function  is  
\begin{equation}\label{pot}\frac{\partial^2{J^i(x^i,z^i(x))}}{\partial x^ix^j}= \frac{\partial^2{J^j(x^j, z^j(x))}}{\partial x^j x^i}.
\end{equation}
For linear quadratic network games, this amounts to the  restriction that
\begin{equation}\label{potQ}
\frac{\partial^2{J^i(x^i, z^i(x))}}{\partial x^ix^j}=K^i \PP_{ij} = K^j\PP_{ji} =\frac{\partial^2{J^j(x^j, z^j(x))}}{\partial x^j x^i}.\end{equation}
\cite{bramoulle2014strategic} focuses on symmetric networks $G=G^\top$ and (with the exception of the last section)\footnote{  Section 6 in \cite{bramoulle2014strategic} shows that for heterogeneous weights, there exists a linear change of variables such that the game \textit{in the new coordinates} is potential. More in general, for linear quadratic games this happens if  there exists a vector $\beta\in\R^N_{>0}$ such that 
 $\frac{K^i}{\beta^i} \PP_{ij} = \frac{ K^j}{\beta^j}\PP_{ji}$, which is a generalization of \eqref{pot}. See Lemma \ref{lemma:pot} for more details.} on homogeneous weights. In this case  condition \eqref{potQ} is met, enabling the use of the potential approach. Our subsequent analysis studies asymmetric networks and heterogeneous weights.\footnote{ Instead, we do not explicitly consider cases where the externalities may have different signs for the same agent but different neighbors. This case, corresponds to a matrix $G$ with both positive and negative entries. Most of our results hold also in this case, but we find that this additional generality obfuscates the overall presentation. } 
}
\end{example}

The linear quadratic model described above has been extensively studied in the literature, partially due to the fact that the linearity of its best response function allows for a simple yet informative analysis. As  the second main example, we describe a variation of the previous model, that features a nonlinear  best response.

\begin{example}[label=competition] \textup{\textbf{(Races and tournaments)}}
Consider a network game where each player has a scalar strategy $x^i\in[a^i, b^i]$ and a  nonlinear best response (as a function of the neighbor aggregate, denoted by $z^i$ for simplicity)
\begin{equation}\label{eq:non-mon}
B^i(z^i)=\min\{a^i+\phi_i(z^i),b^i\}.
\end{equation}
We assume $0<a^i<b^i$, $\phi_i(0)= 0$ and $\phi_i(z^i)\ge 0$. Special cases of this model have been considered in the literature for example in \cite{belhaj2014network}  with the additional assumption $\phi_i'\ge0$ (so that the best response is increasing in other agents actions, leading to a game of strategic complements) or in \cite{allouch2015private} with $\phi_i'\le 0$ (so that the best response is decreasing in other agents actions, leading to a game of strategic substitutes).\footnote{ More in detail, therein the author focuses on a public good game for which $a^i=0, b^i=\infty$ and  $\phi_i(z^i)=\gamma_i(w_i+z^i)-z^i$ where $\gamma_i$ is consumer'$i$ Engel curve and $w_i$ is agent's $i$ income. } 
We do not impose any monotonicity assumption on  $\phi_i$  and  focus on cases where the sign of $\phi_i'$ may change (either for different values of $z^i$ or across different agents). 

To illustrate the richer class of strategic interactions that can be modeled by using non-monotone best response functions, we consider the special case where  $a^i=a$, $b^i=b$ and $\phi_i(z^i)$ has the form
\begin{equation}\label{eq:non-mon-phi}
\phi_i(z^i)=\phi(z^i)=\gamma z^i(b-z^i) \quad \mbox{for all } i,
\end{equation}
with $\gamma>0$.  For simplicity, we additionally assume that the network $\PP$ is such that $\sum_j \PP_{ij}=1$ (so that the neighbor aggregate $z^i(x)$ is given by a convex combination of the neighbor strategies). 
{The corresponding best response function  is illustrated in  Figure \ref{fig:example2} and can be used to model  races and tournaments. In the initial phase, when $z_i\le \frac{b}{2}$, the players increasing effort motivates a neck and neck race which is then followed by a second phase,  when $z_i\ge \frac{b}{2}$, where agent's effort level declines capturing a discouragement effect.}
 \begin{figure}[h]
\begin{center}
\includegraphics[height=0.23\textwidth]{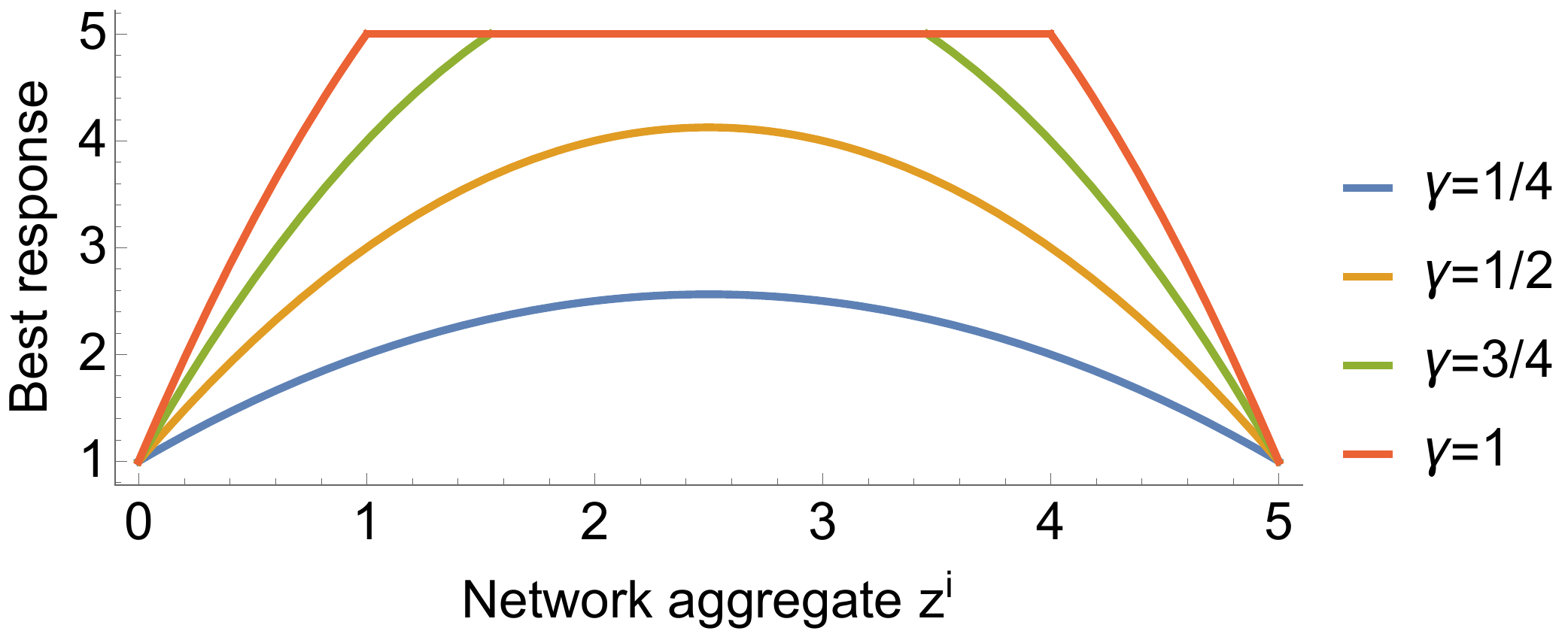}
\end{center}
\caption{ The best response function given in \eqref{eq:non-mon} for $\phi_i$ as in \eqref{eq:non-mon-phi} plotted for  $a=1$, $b=5$ and different values of $\gamma$.}
\label{fig:example2}
\end{figure}
 \end{example}
 
   In Examples \ref{ex:lq} and \ref{competition}  agents have a single decision variable. Our last motivating example features agents that engage in  multiple activities, as studied in \cite{chen2017multiple}.

 \begin{example}[label=ex:multiple] \textup{\textbf{(Multiple activities in networks)}}  Consider a network game where each player $i$ 
 has a strategy vector $x^i=[x^i_A, x^i_B]\in\R^2$ with $x^i_A, x^i_B$ representing his level of engagement in two interdependent activities $A$ and $B$, such as crime and education. We assume that each player $i$ has  lower and upper constraints for its engagement in each activity ($\underline{x}^i_A, \bar{x}^i_A,\underline{x}^i_B, \bar{x}^i_B$) and an overall lower and upper budget constraint ($\underline{x}^i_\textup{tot}, \bar{x}^i_\textup{tot}$) so that 
$$\mathcal{X}^i:=\{x^i=[x^i_A,x^i_B]\in\R^2 \mid \underline{x}^i_A\le x^i_A \le \bar{x}^i_A,\quad  \underline{x}^i_B\le x^i_B \le \bar{x}^i_B,\quad \underline{x}^i_\textup{tot}\le x^i_A+x^i_B\le \bar{x}^i_\textup{tot} \}.$$
This generalizes  \cite{chen2017multiple}, where no constraints are considered, and \cite{belhaj2014competing} where a strict budget constraint is imposed (i.e., $\underline{x}^i_\textup{tot}= \bar{x}^i_\textup{tot}=1$) so that effectively the strategy of each agent is a scalar (since $x^i_B=1-x^i_A$). 
Each agent selects its level of engagement in activities $A$ and $B$ to maximize the following quantity
$$\underbrace{a^i_Ax^i_A -\frac12(x^i_A)^2+\delta x^i_A z^i_A(x)}_{\textup{net proceeds from activity A}}+ \underbrace{a^i_Bx^i_B-\frac12(x^i_B)^2+\delta x^i_B z^i_B(x)}_{\textup{net proceeds from activity B}}  + \underbrace{\mu x^i_Az^i_B(x)+\mu x^i_Bz^i_A(x)- \beta^i x^i_Ax^i_B \vphantom{\frac12}}_{\textup{ interdependence  of activities}},$$
{where the parameter $\delta$ weights the effect of the neighbor aggregate within each activity,  $\mu$ weights the effect of the neighbor aggregate  across different activities and $\beta^i$ captures the interdependence of the two activities for each agent $i$.  \cite{chen2017multiple}  assumes  $\delta>0$ so that the effort of each agent and its neighbors are strategic complements within each activity, while $\beta^i\in(-1,1)$ can be   negative (modeling two complementary activities  such as crime and drug use) or  positive (modeling two substitutable activities such as crime and education).}\footnote{ \label{fot:sign} For the model to be consistent from an  economic perspective, the sign of $\mu$ should be consistent with the signs of $\delta$ and $\beta^i$ in the sense that  $\textup{sign}(\mu)=-\textup{sign}(\delta)\textup{sign}(\beta^i)$. To see this, consider the case  $\delta<0$ and  $\beta^i>0$. In this case an increase of  $z^i_A$ leads to a decrease in $x^i_A$ (since $\delta<0$) which leads to an increase in $x^i_B$ (since $\beta^i>0$). Overall, $z^i_A$ and $x^i_B$ are therefore complements implying that $\mu$ must be positive.} In our analysis, we will also consider the case $\delta<0$.
\end{example}

\section{Connection between game theory and variational inequalities}
\label{sec:vi}

The main goal of this paper is to provide a unified framework to study existence and uniqueness of the Nash equilibrium, convergence of the best response dynamics and  continuity of the Nash equilibrium with respect to parametric variations for general network games.  Throughout the paper we use the following assumption.

\begin{assumption}\label{cost}
The set $\mathcal{X}^i\subseteq \R^n$ is nonempty, closed and convex for all $i\in\N[1,N]$.
The function $ J^i(x^i,\q^i(x))$ is  continuously differentiable and  convex in $x^i$  for all $i\in\N[1,N]$  and for all $x^{j}\in\mathcal{X}^{j}$, $j\in\mathcal{N}^i$.  Moreover, $J^i(x^i, z^i)$ is  twice  differentiable in $[x^i; z^i]$ and $\nabla_{x^i} J^i(x^i, z^i)$ is Lipschitz in $[x^i; z^i]$. 
\end{assumption}

Our approach relies on the theory of variational inequalities as defined next.

\begin{definition}[Variational Inequality (VI)]\label{vi}
A vector $\bar x\in\R^{d}$ solves the variational inequality VI$(\mathcal{X},F)$ with set $\mathcal{X}\subseteq\R^{d}$ and operator 
$F:\mathcal{X} \rightarrow \R^{d}$ if and only if

\begin{equation}
F(\bar x)^\top (x-\bar x) \ge 0, \quad \hbox{for all } x\in\mathcal{X}.\label{eq:VI}
\end{equation}
\end{definition}

In the following we consider the VI with set 

\begin{equation}\label{eq:X}
\mathcal{X}:=\mathcal{X}^1\times\ldots\times\mathcal{X}^N\subseteq\R^{Nn}
\end{equation}
 obtained as the cartesian product of the local strategy sets $\mathcal{X}^i$ and  operator $F:\mathcal{X}\rightarrow \R^{Nn}$ whose $i$-th block component $F^i(x)$ is the gradient of the cost function of agent $i$ with respect to its own strategy, i.e.,

\begin{align}\label{eq:F}
F(x)&:=[F^i(x)]_{i=1}^N:=[\nabla_{x^i} J^i(x^i,\q^i(x))^\top]_{i=1}^N.
\end{align}
This operator is sometimes referred to as the \textit{game Jacobian}. The relevance of this VI in characterizing Nash equilibria of general games (i.e., not necessarily network games) comes from the following well-known relation, see e.g. \cite[Proposition 1.4.2]{facchinei2007finite}. 

\begin{proposition}[VI reformulation]\label{prop:kkt}
Suppose that Assumption \ref{cost} holds. A vector of strategies $x^\star$ is a Nash equilibrium for the game  $\mathcal{G}$ if and only if it solves the VI$(\mathcal{X},F)$ with $\mathcal{X}$ as in \eqref{eq:X} and $F$ as in \eqref{eq:F}. 
\end{proposition}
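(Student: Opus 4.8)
The plan is to prove the two implications separately, in each case exploiting the two structural features of the problem: the convexity of each $J^i$ in its own block together with the convexity of $\mathcal{X}^i$ (both from Assumption \ref{cost}), and the Cartesian product structure of $\mathcal{X}$ in \eqref{eq:X}. The central tool is the first-order (minimum-principle) characterization of a minimizer of a differentiable convex function over a convex set: for a fixed neighbor aggregate $z^i(x^\star)$ --- which does not depend on $x^i$ since $\PP_{ii}=0$ --- the point $x^{\star i}$ lies in $B^i(z^i(x^\star))=\arg\min_{x^i\in\mathcal{X}^i}J^i(x^i,z^i(x^\star))$ if and only if
$$\nabla_{x^i}J^i(x^{\star i},z^i(x^\star))^\top(x^i-x^{\star i})\ge 0\qquad\text{for all }x^i\in\mathcal{X}^i,$$
which, recalling the definition \eqref{eq:F}, is precisely the block inequality $F^i(x^\star)^\top(x^i-x^{\star i})\ge 0$ for all $x^i\in\mathcal{X}^i$.

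For the direction ``Nash equilibrium $\Rightarrow$ VI solution'', I would start from a Nash equilibrium $x^\star$, so that by definition $x^{\star i}\in B^i(z^i(x^\star))$ for every $i\in\N[1,N]$, and apply the minimum principle above to obtain the $N$ block inequalities. Given any $x=[x^i]_{i=1}^N\in\mathcal{X}$, summing these inequalities over $i$ yields
$$F(x^\star)^\top(x-x^\star)=\sum_{i=1}^N F^i(x^\star)^\top(x^i-x^{\star i})\ge 0,$$
so $x^\star$ solves VI$(\mathcal{X},F)$.

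For the converse I would fix an arbitrary player $i$ and an arbitrary $\tilde x^i\in\mathcal{X}^i$, and use the product structure \eqref{eq:X} to build the feasible test vector $x\in\mathcal{X}$ whose $i$-th block equals $\tilde x^i$ and whose remaining blocks equal $x^{\star j}$, $j\neq i$. Substituting this $x$ into \eqref{eq:VI} collapses all but the $i$-th summand, giving $F^i(x^\star)^\top(\tilde x^i-x^{\star i})\ge 0$; since $\tilde x^i\in\mathcal{X}^i$ is arbitrary and $J^i(\cdot,z^i(x^\star))$ is convex, the minimum principle can be applied in the reverse direction to conclude $x^{\star i}\in B^i(z^i(x^\star))$, and as $i$ was arbitrary this holds for all players, so $x^\star$ is a Nash equilibrium. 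The only non-routine ingredient is the minimum principle itself, whose sufficiency direction (stationarity implies \emph{global} optimality) rests essentially on the convexity of $J^i$ in $x^i$; everything else is the bookkeeping that the Cartesian structure of $\mathcal{X}$ makes the aggregate VI \eqref{eq:VI} decouple blockwise, which is exactly what lets the per-player optimality conditions be assembled into, and recovered from, the single variational inequality. I therefore expect no genuine obstacle beyond correctly invoking this standard convex-analysis fact.
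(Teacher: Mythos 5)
Your proof is correct and is exactly the standard argument behind this equivalence: the paper does not prove the proposition itself but cites \cite[Proposition 1.4.2]{facchinei2007finite}, whose proof proceeds precisely as you describe (minimum principle blockwise, summation for one direction, single-block test vectors exploiting the Cartesian structure for the other). No gaps; the convexity of $J^i$ in $x^i$ is indeed the only place where more than bookkeeping is needed, and you invoke it in the right place (the sufficiency direction of the minimum principle).
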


{ Proposition \ref{prop:kkt} can be seen as a generalization of  potential games as introduced in \cite{monderer:shapley:96}. In fact, it follows from \cite[Lemma 4.4]{monderer:shapley:96} that a game is potential with potential function $U(x)$ if and only if $\nabla_x U(x)=F(x)$.  In other words, a game has an exact potential if and only if the game Jacobian $F(x)$  is integrable. In such case, by the minimum principle, the  VI condition given in \eqref{eq:VI}  coincides with  the necessary optimality conditions for the optimization problem $\min_{x\in\mathcal{X}} U(x)$, whose  stationary points  are the Nash equilibria of the game, see e.g. \cite[Lemma 1]{bramoulle2014strategic}.   The VI approach enables the analysis of games that are not potential (i.e. games for which $F(x)$ is not integrable).   }
Specifically, Proposition \ref{prop:kkt} allows one to analyze the Nash equilibria in terms of properties of the game Jacobian $F$ and of the set $\mathcal{X}$. To this end, we recall the following definitions.  
\begin{definition}\label{def:mon}
An operator $F:\mathcal{X}\subseteq \R^{Nn}\rightarrow \R^{Nn}$ is
\begin{itemize}
\item[a)] \textup{Strongly monotone:} if there exists $\alpha>0$ such that $(F(x)-F(y))^\top(x-y)\ge \alpha \|x-y\|_2^2$ for all $x,y\in\mathcal{X}$. 
\item[b)] \textup{A uniform block P-function with respect to the partition $\mathcal{X}=\mathcal{X}^1\times\ldots\times\mathcal{X}^N$:} if there exists $\eta>0$ such that $\max_{i\in\N[1,N]} [F^i(x)-F^i(y)]^\top[x^i-y^i]\ge \eta \|x-y\|_2^2$
for all $x,y\in\mathcal{X}$.
\item[c)] \textup{A uniform P-function:} if there exists $\eta>0$ such that 
$\max_{h\in\N[1,Nn]} [F(x)-F(y)]_h[x-y]_h\ge \eta \|x-y\|_2^2$ for all $x, y\in\mathcal{X}$.
\end{itemize}
\end{definition}
The properties in the previous definition are stated in decreasing order of strength, as illustrated in the first line of Figure \ref{fig:rel}.\footnote{
 In the literature a block P-function is typically referred to as a P-function with respect to the partition $\mathcal{X}=\mathcal{X}^1\times\ldots\times \mathcal{X}^N$; we added the term ``block'' to avoid any confusion. We note that for $n=1$ (i.e., for scalar games)  the definitions of uniform block P-function and uniform  P-function coincide.} 
{ We note that in the case of potential games  \textit{strong monotonicity} of $F(x)$ is equivalent to strong convexity of the potential function $U(x)$.} For general games (i.e., not necessarily potential games),  such condition  is a slightly  stronger  version of the \textit{diagonal strict concavity} condition used in the seminal work of \cite{rosen1965existence} (see more details in Appendix \ref{appendix:def}). The reason why we focus on strong monotonicity in this work is that it allows us to prove existence of the Nash equilibrium without assuming compactness, as instead assumed in \cite{rosen1965existence}.
Our interest in the properties listed in Definition \ref{def:mon} stems from the following classical result for 
existence and uniqueness of the solution of a VI, see e.g.  \cite[Theorem 2.3.3(b) and Proposition 3.5.10(b)]{facchinei2007finite}.\footnote{Statement c) is a consequence of  
statement b). In fact if $\mathcal{X}$ is a rectangle then it can be partitioned as $\mathcal{X}=\mathcal{X}^1\times \hdots\times \mathcal{X}^{Nn}$ with $\mathcal{X}^h\subseteq\R$ for all $h\in\N[1,Nn]$ and if $F$ is a uniform P-function then it is a uniform block P-function with respect to a block  partition with blocks of dimension one. We here write this condition separately to avoid any confusion with the block partition $\mathcal{X}=\mathcal{X}^1\times \hdots\times \mathcal{X}^{N}$ induced by the $N$ players.}

\begin{proposition}[Existence and uniqueness for VIs]  \label{prof:ex}
Consider the VI$(\mathcal{X},F)$ where $F$ is  continuous and  $\mathcal{X}$ is nonempty, closed and convex. The  VI$(\mathcal{X},F)$  admits a unique solution under any of the following statements:
\begin{itemize}
\item[a)] $F$ is strongly monotone.
\item[b)] The set $\mathcal{X}$ is a cartesian product $\mathcal{X}^1\times\ldots\times \mathcal{X}^N$ and $F$ is a uniform block P-function with respect to the same partition.
\item[c)] The set $\mathcal{X}$ is a  rectangle and $F$ is a uniform P-function.
\end{itemize}
\end{proposition}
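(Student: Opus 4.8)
The plan is to treat the three statements together, separating in each case the claim into \emph{uniqueness} and \emph{existence}, and to reduce statement c) to statement b) at the outset: when $\mathcal{X}$ is a rectangle it is the cartesian product of the $Nn$ scalar intervals $\mathcal{X}^h\subseteq\R$, and a uniform P-function is exactly a uniform block P-function with respect to this one-dimensional partition, so c) is the specialization of b) to blocks of size one (this is the content of the footnote following the proposition). It therefore suffices to prove a) and b). The easy ingredient, used in both existence proofs, is the classical fact that for a \emph{bounded} closed convex set a continuous operator always yields a solvable VI: the natural map $x\mapsto \Pi_{\mathcal{X}}(x-F(x))$ is a continuous self-map of a compact convex set, so Brouwer's theorem gives a fixed point, and a fixed point of this map is precisely a solution of VI$(\mathcal{X},F)$. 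The real work is thus to handle the possibly \emph{unbounded} $\mathcal{X}$.

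For uniqueness I would argue by contradiction from two solutions $\bar x\neq\bar y$. Inserting $x=\bar y$ into the VI \eqref{eq:VI} written at $\bar x$, and $x=\bar x$ into the VI written at $\bar y$, and adding the two inequalities gives $(F(\bar x)-F(\bar y))^\top(\bar x-\bar y)\le 0$. Under a) this contradicts strong monotonicity, which forces the left-hand side to be at least $\alpha\|\bar x-\bar y\|_2^2>0$. Under b) I would first use that on a product set the VI decouples: for each block $i$, choosing the test point equal to $\bar x$ in every block except the $i$-th shows that $F^i(\bar x)^\top(x^i-\bar x^i)\ge0$ for all $x^i\in\mathcal{X}^i$, and likewise for $\bar y$. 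Adding the two block inequalities yields $[F^i(\bar x)-F^i(\bar y)]^\top(\bar x^i-\bar y^i)\le0$ for \emph{every} $i$, hence the maximum over $i$ is nonpositive, contradicting the uniform block P-function bound $\eta\|\bar x-\bar y\|_2^2>0$.

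For existence I would fix a reference point $x_0\in\mathcal{X}$ and pass through a family of compact truncations, the point being to show their solutions stay uniformly bounded. In case a) I take $\mathcal{X}_R:=\mathcal{X}\cap\{x:\|x-x_0\|_2\le R\}$; the bounded case above gives a solution $x_R$, and the standard interiority criterion says that if $\|x_R-x_0\|_2<R$ then $x_R$ already solves VI$(\mathcal{X},F)$. Testing the VI at $x_R$ with $x=x_0$ gives $F(x_R)^\top(x_R-x_0)\le0$, while strong monotonicity and Cauchy--Schwarz give $F(x_R)^\top(x_R-x_0)\ge \alpha\|x_R-x_0\|_2^2-\|F(x_0)\|_2\|x_R-x_0\|_2$; the two are compatible only if $\|x_R-x_0\|_2\le\|F(x_0)\|_2/\alpha$, so for $R$ larger than this bound the truncated solution is interior and we are done.

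The main obstacle, and the step requiring the most care, is existence in case b), because the defining inequality of a (block) P-function controls only the \emph{maximum} over blocks, not the sum $F(x)^\top(x-x_0)$ that drives the strongly monotone coercivity estimate; individual blocks may contribute arbitrarily negative terms. The remedy I would use is to truncate \emph{each block separately}, setting $\mathcal{X}_R:=\prod_{i=1}^N\big(\mathcal{X}^i\cap\{x^i:\|x^i-x_0^i\|_2\le R\}\big)$, so that the truncated set is still a product and its VI still decouples block-by-block. At a solution $x_R$, testing each block VI with $x^i=x_0^i$ gives $F^i(x_R)^\top(x_R^i-x_0^i)\le0$ for every $i$ simultaneously. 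Feeding this into the uniform block P-function inequality with $y=x_0$, the nonpositive self-terms can be dropped from the maximand, leaving $\eta\|x_R-x_0\|_2^2\le\max_i\big(-F^i(x_0)^\top(x_R^i-x_0^i)\big)\le\big(\max_i\|F^i(x_0)\|_2\big)\|x_R-x_0\|_2$, which yields a uniform bound $\|x_R-x_0\|_2\le(\max_i\|F^i(x_0)\|_2)/\eta$ independent of $R$. Taking $R$ beyond this bound makes every block interior, so $x_R$ solves the original VI; uniqueness from the previous step then completes the proof, and statement c) follows as the one-dimensional-block specialization already noted.
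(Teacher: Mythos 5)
Your proof is correct. The paper does not prove this proposition itself---it cites \cite[Theorem 2.3.3(b) and Proposition 3.5.10(b)]{facchinei2007finite} for parts a) and b) and reduces c) to b) exactly as you do (a rectangle is the cartesian product of $Nn$ scalar intervals, for which the uniform P-function property is the uniform block P-function property with one-dimensional blocks)---and your argument, with the blockwise truncation, the coercivity bound obtained by testing at $x_0$, and the interiority criterion, is precisely the standard proof underlying those citations, so there is nothing to correct.
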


Note that the stronger  the condition on $F$ is, the weaker  the requirement on the set $\mathcal{X}$ is.
Proposition \ref{prof:ex} together with Proposition \ref{prop:kkt} allows one to derive sufficient conditions for existence and uniqueness of the Nash equilibrium in terms of properties of the game Jacobian and of the constraint sets. Our main contribution in Section \ref{sec:overview}  is to derive sufficient conditions that guarantee that the game Jacobian of a network game has one of the properties listed in  Definition \ref{def:mon}, by imposing conditions on the cost functions of the players and on the spectral properties of the network. To this end, we exploit  sufficient conditions for the properties in Definition \ref{def:mon}  to hold in terms of the gradient $\nabla_x F(x)$ as detailed in the next subsection.

\subsection{\textbf{Sufficient conditions in terms of $\boldsymbol{\nabla_x F(x)}$}}

 We start by introducing the definitions of \textit{P-matrices}, \textit{P$_\Upsilon$ condition} and \textit{uniform P-matrix condition}.
\begin{definition}[P-matrix] \cite[Theorem 3.3]{fiedler1962matrices} \label{def:P_matrix}
A matrix $A\in\R^{d\times d}$ is a P-matrix if all its principal minors have positive determinant. Equivalently,  $A$ is a  P-matrix if and only if for any $w\in\R^d$, $w\neq0$ there exists  a diagonal matrix $H_w\succ 0$ such that $w^\top H_w A w>0$.
\end{definition}
 A special class of P-matrices are positive definite matrices. In fact, if a matrix $A$ is positive definite then Definition \ref{def:P_matrix} holds with $H_w=I$. The opposite holds true if the matrix $A$ is symmetric.\footnote{ It is shown in \cite[Theorem 3.3]{fiedler1962matrices} that if $A$ is a  $P$-matrix then all its real eigenvalues are positive. If $A$ is symmetric all its eigenvalues are real. These two properties imply that $A$ is positive definite.}

 The sufficient conditions detailed in the following Proposition \ref{prop:suff} amount to ensuring that $\nabla_xF(x)$, which is  a matrix valued function,  possesses  suitable properties ``uniformly'' in $x$. For example, strong monotonicity of $F(x)$ can be guaranteed if the gradient $\nabla_xF(x)$ is uniformly positive definite. Similarly, to guarantee that $F(x)$ is a uniform P-function we show that it is sufficient to assume that $\nabla_xF(x)$ satisfies what we term a ``uniform P-matrix condition'' which is given in full generality in  Definition \ref{def:P_matrix_uniform} in  the appendix. We here report the corresponding definition for the affine case $F(x)=Ax+a$. Intuitively, this  corresponds to assuming that  $\nabla_xF(x)$ is a ``uniform'' P-matrix for all values of $x$. 
 \begin{definition}[Uniform P-matrix condition - affine case]
Consider an operator $F(x)=Ax+a$ where $ A\in\R^{d\times d}, a\in\R^d$. Then $\nabla_x F(x)$ satisfies the uniform P-matrix condition if and only if there exists $\eta>0$ such that for any $w\in\R^d$ there exists a diagonal matrix $H_w\succ 0$ such that $w^\top H_w A w \ge \eta \|w\|^2$ and $\max_w \lambda_{\textup{max}}(H_w)<\infty$. 
\end{definition}
 
The condition  above is formulated in terms of the full matrix $\nabla_x F(x)$ which, in the case of agents with multidimensional strategies, has dimension $Nn\times Nn$. The $P_\Upsilon$ condition introduced next is instead used to summarize the effect that each pair of agents has on one another with a scalar number.  In fact the condition in Definition \ref{def:P_up}  relates the properties of a  matrix $A(x)\in\R^{Nn\times Nn}$ with $N$ blocks each of dimension $n$ with the properties of a smaller matrix $\Upsilon\in\R^{N\times N}$, where the effect of each block is summarized with a scalar number $\kappa_{i,j}$ (independent of $x$).  Specifically, let $A_{i,j}(x)\in\R^{n\times n}$ be the block in position $(i,j)$. Suppose that for all $i\in\N[1,N]$, $A_{i,i}(x)=A_{i,i}(x)^\top \succ 0$ and let $\kappa_{i,i}=\inf_{x\in\mathcal{X}} \lambda_{\textup{min}}(A_{i,i}(x))$. Moreover, set $\kappa_{i,j}=\sup_{x\in\mathcal{X}} \|A_{i,j}(x)\|_2$ for $j\neq i$. Then $\Upsilon\in\R^{N\times N}$ is constructed as follows.
\begin{equation}\label{upsi} A(x)=\left[\begin{array}{cccc}A_{1,1}(x) & A_{1,2}(x) & \hdots & A_{1,N}(x) \\A_{2,1}(x) & A_{2,2}(x) & \hdots & A_{2,N}(x) \\\vdots & \vdots & \ddots & \vdots \\A_{N,1}(x) & A_{N,2}(x) & \hdots & A_{N,N}(x)\end{array}\right]\qquad \rightarrow \qquad \Upsilon= \left[\begin{array}{cccc}\kappa_{1,1} & -\kappa_{1,2} & \hdots & -\kappa_{1,N} \\-\kappa_{2,1} & \kappa_{2,2} & \hdots & -\kappa_{2,N} \\\vdots & \vdots & \ddots & \vdots \\-\kappa_{N,1} & -\kappa_{N,2} & \hdots & \kappa_{N,N}\end{array}\right] \end{equation}

\begin{definition}[$\boldsymbol{P_\Upsilon}$ condition] \cite{scutari2014real} \label{def:P_up}
A  matrix valued function $x\mapsto A(x)\in\R^{Nn\times Nn}$ satisfies the $P_\Upsilon$ condition over $\mathcal{X}$ if $A_{i,i}(x)=A_{i,i}(x)^\top \succ 0$, $\kappa_{i,j}<\infty$ for all $i,j$ and  $\Upsilon$ in \eqref{upsi}
is a P-matrix.
\end{definition}

The following proposition relates the previous conditions to the properties in Definition~\ref{def:mon}.

\begin{proposition}[Sufficient conditions for strong monotonicity and  (block) P-functions]\label{prop:suff}
If the operator $F$ is continuously differentiable and the set $\mathcal{X}$ is nonempty, closed and convex, the following  holds.
\begin{itemize}
\item[a)] $F$ is strongly monotone if and only if there exists $\alpha>0$ such that $\frac{\nabla_x F(x)+\nabla_x F(x)^\top}{2} \succeq \alpha I$.
\item[b)] If $\nabla_x F(x)$ satisfies the $P_\Upsilon$ condition (as by Definition \ref{def:P_up}), then $F$  is a uniform block P-function.
\item[c)] If $\nabla_x F(x)$ satisfies the uniform P-matrix condition (as by  Definition \ref{def:P_matrix_uniform}  in Appendix \ref{appendix:uniform}), then $F$ is a uniform P-function.
\end{itemize}
\end{proposition}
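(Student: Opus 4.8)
The plan is to prove all three parts by reducing each monotonicity-type inequality to a pointwise spectral condition on $\nabla_x F$ via the fundamental theorem of calculus, writing $F(x)-F(y)=\int_0^1 \nabla_x F(z_t)\,(x-y)\,dt$ with $z_t:=y+t(x-y)$ and $d:=x-y$, where convexity of $\mathcal{X}$ guarantees $z_t\in\mathcal{X}$. For part (a), for the ``if'' direction I would compute
\[(F(x)-F(y))^\top d=\int_0^1 d^\top \nabla_x F(z_t)^\top d\,dt=\int_0^1 d^\top \tfrac{\nabla_x F(z_t)+\nabla_x F(z_t)^\top}{2}\,d\,dt,\]
using $d^\top M d=d^\top\tfrac{M+M^\top}{2}d$; the hypothesis $\tfrac{\nabla_x F+\nabla_x F^\top}{2}\succeq \alpha I$ then bounds the integrand below by $\alpha\|d\|_2^2$, giving strong monotonicity with the same $\alpha$. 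For the ``only if'' direction I would set $x=y+tw$, divide the strong monotonicity inequality by $t^2>0$, and let $t\to0^+$; continuous differentiability gives $\tfrac1t(F(y+tw)-F(y))\to\nabla_x F(y)w$, so the limit yields $w^\top\tfrac{\nabla_x F(y)+\nabla_x F(y)^\top}{2}w\ge\alpha\|w\|_2^2$ for every $w$, i.e. the claimed uniform positive definiteness (taking $y$ over $\mathcal{X}$, with boundary points handled by continuity of $\nabla_x F$).

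For part (b), applying the fundamental theorem of calculus blockwise to $F^i$ and using $\nabla_{x^j}F^i(x)=A_{i,j}(x)$ gives, with $d^i:=x^i-y^i$,
\[ [F^i(x)-F^i(y)]^\top d^i=\int_0^1\Big[(d^i)^\top A_{i,i}(z_t)d^i+\textstyle\sum_{j\neq i}(d^i)^\top A_{i,j}(z_t)d^j\Big]dt.\]
I would lower-bound the diagonal term by $\kappa_{i,i}\|d^i\|_2^2$ (using symmetry and $\lambda_{\min}(A_{i,i})\ge\kappa_{i,i}$) and upper-bound each off-diagonal term in absolute value by $\kappa_{i,j}\|d^i\|_2\|d^j\|_2$ (Cauchy--Schwarz and $\|A_{i,j}\|_2\le\kappa_{i,j}$). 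Writing $u\in\R^N$ with $u_i:=\|d^i\|_2$, this shows $[F^i(x)-F^i(y)]^\top d^i\ge u_i(\Upsilon u)_i$ for every $i$, hence $\max_i[F^i(x)-F^i(y)]^\top d^i\ge\max_i u_i(\Upsilon u)_i$. The final ingredient is a quantitative P-matrix estimate: since $\Upsilon$ is a P-matrix, the characterization in Definition \ref{def:P_matrix} yields $\max_i u_i(\Upsilon u)_i>0$ for all $u\neq0$, and minimizing the continuous map $u\mapsto\max_i u_i(\Upsilon u)_i$ (positively homogeneous of degree two) over the compact unit sphere produces $\eta:=\min_{\|u\|_2=1}\max_i u_i(\Upsilon u)_i>0$, so that $\max_i u_i(\Upsilon u)_i\ge\eta\|u\|_2^2=\eta\|d\|_2^2$, which is exactly the uniform block P-function inequality.

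For part (c) I would first treat the affine case $F(x)=Ax+a$, where $F(x)-F(y)=Ad$ so that $[F(x)-F(y)]_h[x-y]_h=d_h(Ad)_h$. Taking the diagonal $H_d\succ0$ supplied by the uniform P-matrix condition for $w=d$ gives
\[\textstyle\sum_h (H_d)_{hh}\,d_h(Ad)_h=d^\top H_d A d\ge\eta\|d\|_2^2.\]
Since $(H_d)_{hh}>0$ and each $d_h(Ad)_h\le\max_h d_h(Ad)_h=:M$, the left side is at most $\mathrm{tr}(H_d)\,M\le Nn\,C\,M$ with $C:=\sup_w\lambda_{\max}(H_w)<\infty$ (here the boundedness clause is essential); combined with $\mathrm{tr}(H_d)>0$ this forces $M\ge\tfrac{\eta}{Nn\,C}\|d\|_2^2$, which is the uniform P-function property. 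The general non-affine case follows the same template after inserting the fundamental theorem of calculus representation and integrating over $t\in[0,1]$ the pointwise inequality $d^\top H_d\,\nabla_x F(z_t)\,d\ge\eta\|d\|_2^2$ supplied by the general Definition \ref{def:P_matrix_uniform}.

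I expect the main obstacle to be the quantitative strengthening of the P-matrix property in part (b)---passing from the existential sign condition $\max_i u_i(\Upsilon u)_i>0$ to the uniform lower bound $\eta\|u\|_2^2$---and, in part (c), correctly exploiting $\sup_w\lambda_{\max}(H_w)<\infty$, which is precisely what upgrades a pointwise P-matrix scaling into the uniform constant $\eta/(Nn\,C)$. Both rely on homogeneity together with compactness rather than on any new structural input, so once these two estimates are isolated the remaining calculations are routine.
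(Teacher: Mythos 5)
Your parts a) and b) are correct; the paper does not actually prove them (it cites \cite{facchinei2007finite} and \cite{scutari2014real}), and your arguments --- the integral representation plus the $t\to0^+$ limit for a), and the blockwise estimate $[F^i(x)-F^i(y)]^\top d^i\ge u_i(\Upsilon u)_i$ with $u_i=\|d^i\|_2$ followed by positive homogeneity and compactness of the unit sphere for b) --- are the standard ones and are sound. Your affine case of c) is also correct and is essentially the paper's Lemma \ref{lemma:part2} run forwards rather than by contradiction.

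The genuine gap is in your reduction of the general case of c) to the affine template. You propose to integrate over $t\in[0,1]$ a ``pointwise inequality'' $d^\top H_d\,\nabla_x F(z_t)\,d\ge\eta\|d\|_2^2$, but Definition \ref{def:P_matrix_uniform} does not supply a single diagonal matrix $H_d$ valid along the whole segment: the scaling matrix it provides depends on the evaluation points, so at best you obtain $H_{d,t}$ varying with $t$. With $t$-dependent weights the trace argument breaks down and you cannot pull the $\max$ over components out of the integral: for each fixed $t$ some component $h(t)$ satisfies $d_{h(t)}[\nabla_xF(z_t)d]_{h(t)}\ge \frac{\eta}{Nn\,C}\|d\|_2^2$, but the favorable index may change with $t$ while the remaining components are arbitrarily negative, so no fixed $h$ need satisfy $\int_0^1 d_h[\nabla_xF(z_t)d]_h\,dt\ge\eta'\|d\|_2^2$. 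The paper's proof (Theorem \ref{lem:proof}) avoids this by applying the mean value theorem to each scalar component $\Phi_h(t)=F^h(y+t(x-y))[x-y]_h$ \emph{separately}, producing a different intermediate point $x_{[h]}=y+\theta_h(x-y)$ for each row; the resulting identity $[F(x)-F(y)]_h[x-y]_h=[x-y]_h[A_x(x-y)]_h$ involves a matrix $A_x$ assembled row-by-row from $\nabla_xF$ evaluated at these distinct points, and this is precisely why Definition \ref{def:P_matrix_uniform} is formulated for such mixed-row matrices (see Remark \ref{remark}, whose footnote records that whether a purely per-point version of the condition suffices is an open problem --- a problem your integration route would implicitly resolve if it worked). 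Replacing your integration step by this componentwise mean value theorem argument and then invoking your affine-case estimate on $A_x$ closes the gap.
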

Statements a) and b) can be found in \cite[Proposition 2.3.2(c)]{facchinei2007finite} and \cite[Proposition 5(e)]{scutari2014real}.  The proof of statement c) is given in Appendix~\ref{appendix:uniform}.
The conditions in Proposition \ref{prop:suff} are illustrated in the second line of Figure \ref{fig:rel}. 

\begin{figure}[H]
\begin{center}
\begin{tabular}{ccccc}
$F$ is strongly monotone&  $\Rightarrow$&$F$ is a uniform block P-function& $\Rightarrow$& $F$ is a uniform P-function\\[0.2cm]
$\Updownarrow$ && $\Uparrow$ && $\Uparrow$ \\[0.2cm]
 $\exists \alpha>0 $ s.t. && $\nabla_x F(x)$ satisfies the  && $\nabla_x F(x)$ satisfies the uniform  \\
  $\frac{\nabla_x F(x)+\nabla_x F(x)^\top}{2}\succeq \alpha I$ && $P_\Upsilon$  condition as by Def.  \ref{def:P_up} && P-matrix condition as by Def. \ref{def:P_matrix_uniform}
\end{tabular}
\end{center}
\caption{Relation between properties of $F$ and sufficient conditions in terms of $\nabla_x F(x)$, see Proposition \ref{prop:suff}.}
\label{fig:rel}
\end{figure}

It is important to stress that while both strong monotonicity and the $P_\Upsilon$ condition are sufficient to guarantee that $F$ is a uniform block P-function (and hence uniqueness of the Nash equilibrium), there is in general no relation between the two. In fact there are strongly monotone functions whose gradient  does not satisfy the $P_\Upsilon$ condition (see Example \ref{ex4}) and there are functions whose gradient satisfies the $P_\Upsilon$ condition but  are not strongly monotone (see Example  \ref{ex:PnotS}). Besides existence and uniqueness, we  show in Section \ref{sec:br_dynamics} that the $P_\Upsilon$ condition guarantees convergence of the best response dynamics, while we show in Section \ref{sec:comparative} that strong monotonicity is useful for sensitivity  analysis. Finally, we  employ the uniform P-function condition for the analysis of scalar  games of strategic substitutes. Consequently, it is important to understand  conditions under which each of the properties in Figure~\ref{fig:rel} is satisfied.

\section{ Properties of the game Jacobian in network games}
\label{sec:overview}

 Our goal in this section is to find sufficient conditions  in terms  of the influence of agent $i$ on its marginal cost (i.e., $\nabla^2_{x^ix^i}J^i(x^i, z^i)$),  the influence of the neighbor aggregate on the cost of agent $i$ (i.e.,  $\nabla^2_{x^i z^i}J^i(x^i, z^i)$) and  the network $\PP$ to guarantee that  the game Jacobian $F(x)$ posses the properties detailed in Definition~\ref{def:mon}. 
To this end, we exploit the sufficient conditions on $\nabla_x F(x)$ detailed in Proposition \ref{prop:suff}. 
In the case of network games, $\nabla_x F(x)$ can be rewritten as  
\begin{align}\label{eq:grad_F}
\nabla_x F(x)= D(x)+ K(x) \W ,
\end{align}
where $\W :=\PP \otimes I_n$ and

\begin{equation}\label{eq:DK}
\begin{aligned}
D(x)&:=\mbox{blkd}[D^i(x)]_{i=1}^N :=\mbox{blkd}[\nabla^2_{x^ix^i}J^i(x^i,z^i)\mid_{z^i=z^i(x)}]_{i=1}^N \\
K(x)&:=\mbox{blkd}[ K^i(x) ]_{i=1}^N :=\mbox{blkd}[ \nabla^2_{x^iz^i}J^i(x^i,z^i)\mid_{z^i=z^i(x)} ]_{i=1}^N. 
\end{aligned}
\end{equation}
Note that by the properties of the Hessian, $D^i(x)=D^i(x)^\top$, hence $D(x)=D(x)^\top$. Let us define

\begin{equation}\label{k1}
\begin{aligned}
\kappa_1^i&:=\min_{x}   \ \lambda_{\textup{min}}( D^i(x) ),\quad  &\ \kappa_1:=\min_{i} \ \kappa_1^i, \\ 
\kappa_2^i&:=\max_{x}\  \|K^i(x)\|_2,\quad  & \kappa_2:=\max_{i} \ \kappa_2^i.
\end{aligned}
\end{equation}
With this notation it is clear that for network games the quantities  in Definition \ref{def:P_up} can be rewritten as  $\kappa_{i,i}=\kappa_1^i$ and $\kappa_{i,j}=\kappa_2^i \PP_{ij}$.  The next example provides some intuition.

\begin{example}[continues=ex:lq]
Consider  Example \ref{ex:lq} with $a^i=0$ for simplicity.
According to Proposition~\ref{prop:kkt} a vector $x^\star$ is a Nash equilibrium if and only if it solves the VI$(\mathcal{X},F)$ where $\mathcal{X}=\R_{\ge0}\times \ldots\times \R_{\ge0}=\R^N_{\ge0}$ and 
\begin{equation*} \textstyle F(x)=[\nabla_{x^i} J^i(x^i,z^i(x))]_{i=1}^N=[x^i+K^i z^i(x)]_{i=1}^N=[x^i+K^i \sum_{j=1}^N \PP_{ij} x^j]_{i=1}^N=[ I_N + K \PP] x,\end{equation*}
with $K:=\mbox{blkd}[K^i]_{i=1}^N$.
From $\nabla_x F(x)= I_N + K \PP$ we obtain $D^i(x)=1$ and $K^i(x)=K^i$ for all $i\in\N[1,N]$. Consequently, the quantities in \eqref{k1} are $\kappa_1^i=\kappa_1=1$ for all $i\in\N[1,N]$ and $\kappa_2^i=|K^i|$, $\kappa_2=\max_i |K^i|$.
\end{example}

To get an intuition on the role of the quantities introduced in \eqref{k1} in our subsequent analysis, let us consider the strongly monotone case. By Proposition \ref{prop:suff}a, to prove strong monotonicity of $F$ one needs to show that $\frac{\nabla_x F(x)+\nabla_x F(x)^\top}{2}$  is positive definite for all $x$.
By using the gradient structure highlighted  in~\eqref{eq:grad_F} it is immediate to see that for network games

\begin{equation}\label{eq:main}
\frac{\nabla_x F(x)+\nabla_x F(x)^\top}{2}= D(x)+ \frac{K(x) \W+ \W^\top K(x)^\top}{2} \succeq \lambda_{\textup{min}}(D(x)) I+ \frac{K(x) \W+ \W^\top K(x)^\top}{2}.
\end{equation}
 Consequently, the positive  definiteness of $\frac{\nabla_x F(x)+\nabla_x F(x)^\top}{2}$ can be guaranteed by bounding the minimum eigenvalue of the sum of the two matrices on the right hand side of \eqref{eq:main}. It is clear that such a bound should depend  on  $\kappa_1=\min_x \lambda_{\textup{min}}(D(x))$, $\kappa_2= \max_x \|K(x)\|_2$ as defined in \eqref{k1} and on the properties of $\PP$, since $W=\PP \otimes I_n$. This approach motivates us to study conditions of the form
 
\begin{equation}\alpha_w:=\kappa_1-\kappa_2 w(\PP)>0\label{conditionw}\end{equation}
where $w(\PP)$ is a scalar that captures the network effect.
 Condition \eqref{conditionw} can  be understood as a bound on how large can the effect of the neighbor aggregate (measured by $\kappa_2$) be on an agent payoff  with respect to the effect of its own action (measured by $\kappa_1$) where $w(G)$ is  a weight that scales the magnitude of the two effects.
In the following, we consider three different network measures, as detailed next.

\begin{assumption}[Sufficient conditions in terms of $\kappa_1,\kappa_2$ and $\PP$]\label{ass:gen}
Suppose that at least one of the following conditions holds:
\begin{equation}
\alpha_2:= \kappa_1 - \kappa_2 \|\PP\|_2 >0 ,\tag{Assumption 2a} 
\end{equation}
\begin{equation}
\alpha_\infty:= \kappa_1 - \kappa_2 \|\PP\|_\infty >0, \tag{Assumption 2b} 
\end{equation}
\begin{equation}
\alpha_{\textup{min}}:= \kappa_1 - \kappa_2 |\lambda_{\textup{min}} (\PP)|>0. \tag{Assumption 2c} 
\end{equation}
\end{assumption}

Assumption \ref{ass:gen}a and \ref{ass:gen}c allows us to recover and extend conditions derived in the  literature for special instances of network games by using a  different type of  analysis. Assumption \ref{ass:gen}b is new to our knowledge and provides a natural summary of the network effect since  $\|\PP\|_\infty$ corresponds to the maximum row sum of $\PP$ and is thus a measure of  the maximum aggregate  influence that  the neighbors  have on  each  agent. 

 Our main technical result in this paper is to show that each of the three conditions in   Assumption \ref{ass:gen} guarantees a different set of properties of the game Jacobian, as summarized  in Table \ref{fig:summary1}.

\begin{table}[H]
\begin{center}
\resizebox{\textwidth}{!}{
\begin{tabular}{|c|c|c|c|c|}\hline
  \vphantom{\Large{{ \ding{51}}}}& \textbf{Assumption \ref{ass:gen}a} & \textbf{Assumption \ref{ass:gen}b } &\multicolumn{2}{c|}{\textbf{Assumption \ref{ass:gen}c}}  \\ 
\vphantom{\Large{{ \ding{51}}}}&& &{\small $K^i(x)=\tilde K(x)\succeq 0\ \forall i$}  &{\small $n=1,K^i(x)\ge \nu> 0\ \forall i$} \\ \hline 
\vphantom{\Large{{ \ding{51}}}}\textbf{Strong mon.} &\large{ \large{ \ding{51}}}&  \large{ \ding{55}}&\large{ \large{ \ding{51}}}   &\large{ \ding{55}} \\
&{\small (Thm. \ref{thm:norm})}&{\small(Ex. \ref{ex:PnotS})} &{\small(Thm. \ref{thm:min})} &{\small (Ex. \ref{ex3} ) } \\ \hline
\vphantom{\Large{{ \ding{51}}}}$\boldsymbol{P_\Upsilon}$ \textbf{condition }&\large{ \ding{51}} &\large{ \ding{51}}&\large{ \ding{55}} & \large{ \ding{55}}\\
& {\small(Thm. \ref{thm:norm})}& {\small (Thm. \ref{thm:inf})}& {\small(Ex. \ref{ex4})  }& {\small(Ex. \ref{ex3} )} \\ \hline
\vphantom{\Large{{ \ding{51}}}}\textbf{Unif. P-function } &\large{ \ding{51}}&\large{ \large{ \ding{51}}} &\large{ \large{ \ding{51}}} & \large{ \large{ \ding{51}}} \\
&{\small (Fig. \ref{fig:rel})}& {\small(Fig. \ref{fig:rel})}  &{\small (Fig. \ref{fig:rel}) }& {\small(Thm. \ref{thm:scalar}) }\\ \hline
\end{tabular}}
\end{center}
\caption{Summary of the relation between Assumptions \ref{ass:gen}a, \ref{ass:gen}b, \ref{ass:gen}c and properties of the game Jacobian. { The technical statements are provided in Appendix \ref{sec:game_jacobian}.} { In all cases we suppose that Assumption \ref{cost} holds}. For negative cases we provide a counter-example  in Appendix  \ref{sec:game_jacobian}. We list the $P_\Upsilon$ condition instead of the block P-function property because we show in Section~\ref{sec:br_dynamics} that the former is required for convergence of the discrete best response dynamics.
 Note that Assumptions \ref{ass:gen}c alone does not guarantee any property, in fact multiple Nash equilibria may arise (see Ex. \ref{ex2}). }
\label{fig:summary1}
\end{table}%

  Before providing some intuitions on the results of Table \ref{fig:summary1}, we outline the relation between the conditions in   Assumption \ref{ass:gen} and delineate which classes of networks satisfy each one of them.

\subsection{\textbf{Some comments on Assumption \ref{ass:gen}}}

Table \ref{fig:summary1} shows that Assumption \ref{ass:gen}a  guarantees both strong monotonicity  and the $P_\Upsilon$ condition. However, depending on the network, this assumption might be  restrictive. To see this, note that the larger the value of $\|\PP\|_2$ is, the more restrictive Assumption~\ref{ass:gen}a becomes. Figure~\ref{fig:net} shows that for some representative networks such as the complete network (Figure \ref{fig:net}a) or the asymmetric  star (Figure~\ref{fig:net}d), $\|\PP\|_2$ might actually grow unbounded in the number of players $N$.  This means that, for such networks, Assumption \ref{ass:gen}a  may not hold as the number of agents grows. We argue in the next subsections that Assumption \ref{ass:gen}b and \ref{ass:gen}c may be used as alternative conditions in these cases  by showing that for symmetric networks Assumption \ref{ass:gen}c holds for a broader range of networks, while for asymmetric networks  Assumption \ref{ass:gen}a and \ref{ass:gen}b allow addressing different set of networks.

\begin{figure}[h]
\begin{center}
\hspace{2.2cm}  \includegraphics[width=0.7\textwidth]{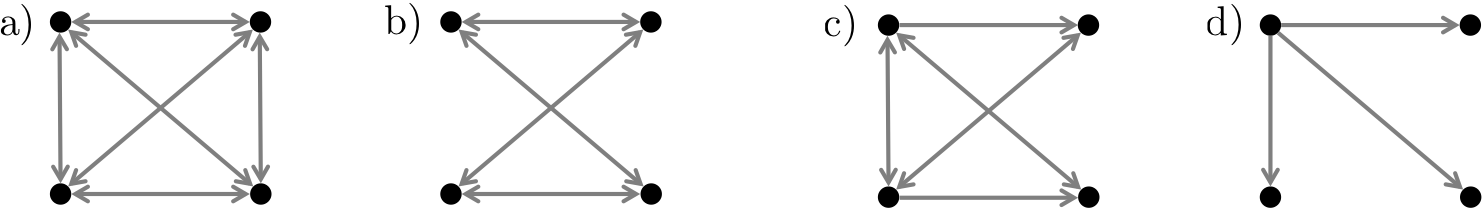}
\end{center}

\hspace{0.2cm}  \begin{tabular}{|p{0.7cm}p{1.5cm}|p{2.6cm}p{2.9cm}|p{2.7cm}p{2.7cm}|}\hline
 \multicolumn{2}{|c|}{} & \multicolumn{2}{c|}{\small Symmetric networks}&  \multicolumn{2}{c|}{\small Asymmetric networks}\\\hline
  \multicolumn{2}{|c|}{} & \hspace{1.1cm} a) &\hspace{1cm}  b) & \hspace{1cm} c) &\hspace{1cm}  d)\\\hline
\textbf{(\ref{ass:gen}a)}:&$\|\PP\|_2$ & \hspace{0.4cm} $N-1=$ 3 & \hspace{1cm} 2  &\hspace{0.6cm}  2.2882 &\hspace{0.cm}  $\sqrt{N-1}\approx$ 1.7\\  \hline
\textbf{(\ref{ass:gen}b)}:&$\|\PP\|_\infty$&\hspace{0.4cm}  $N-1= $ 3 &\hspace{1cm}  2 &\hspace{1cm}  2 &\hspace{1cm}  1\\ \hline
\textbf{(\ref{ass:gen}c)}:&$|\lambda_{\textup{min}}(\PP)|$ &\hspace{1cm}  1 &\hspace{1cm}  2 & \hspace{1cm} - &\hspace{1cm}  -\\ \hline
\end{tabular}
\caption{Spectral properties of some representative networks with $N=4$. a) Complete network: $\|\PP\|_2=\|\PP\|_\infty=N-1$, $|\lambda_{\textup{min}}(\PP)|=1$; b) Bipartite network; c) $2$-regular network; d) Asymmetric star: $\|\PP\|_2=\sqrt{N-1}$, $\|\PP\|_\infty=1$. We use the convention that $\PP_{ij}$ corresponds to an arrow from $j$ to $i$, since $\PP_{ij}>0$ means that agent $i$ is affected by the strategy of agent $j$. For example in the network d), all the agents are affected by the strategy of the agent in the top left corner and he is not affected by any other agent. In all the networks above, we use unitary edge weight. }
\label{fig:net}
\end{figure}

\subsubsection{\textbf{Relation between $\|\PP\|_2$ and $\|\PP\|_\infty$}}
\label{sec:overview1}

Since $\PP_{ii}=0$ it follows from Gershgorin theorem that $\rho(\PP)\le \|\PP\|_\infty$.  If $\PP$ is symmetric, $\|\PP\|_2=\rho(\PP)$
(see Lemma \ref{lem:s} in the Appendix). Consequently, we get $\|\PP\|_2=\rho(\PP)\le \|\PP\|_\infty$. This shows that  whenever $\PP$ is symmetric Assumption \ref{ass:gen}b is more restrictive than Assumption \ref{ass:gen}a. Hence Assumption \ref{ass:gen}b is a viable alternative to Assumption~\ref{ass:gen}a only for asymmetric networks. 
For asymmetric networks, there is no relation between $\|\PP\|_2$ and $\|\PP\|_\infty$, i.e.,  there are networks for which $\|\PP\|_2<\|\PP\|_\infty$ and networks for which  $ \|\PP\|_\infty<\|\PP\|_2$.
Figures \ref{fig:net}c) and  \ref{fig:net}d) show some examples of the latter case, thus justifying our interest for Assumption~\ref{ass:gen}b. 
Note for example that for the asymmetric star network, we have $\|\PP\|_2=\sqrt{N-1}$ while $\|\PP\|_\infty=1$ independently of $N$, thus Assumption~\ref{ass:gen}b can hold for large number of players while Assumption~\ref{ass:gen}a does not.
More generally, we note that, for the case of regular networks where each agent has the same number $d$ of in-neighbors, one gets immediately that $\|\PP\|_\infty=d$. Lemma \ref{lem:d_reg} in Appendix \ref{appendix:def} shows that  $\rho(\PP)= d$. Since for asymmetric matrices $\rho(\PP)\le \|\PP\|_2 $ we obtain $\|\PP\|_\infty=d=\rho(\PP)\le \|\PP\|_2 $. Hence for asymmetric regular networks Assumption~\ref{ass:gen}b is a relaxation of Assumption~\ref{ass:gen}a. Figure \ref{fig:net}c)  shows an example where the inequality is strict.

\begin{figure}[h]
\begin{center}
\hspace{1.5cm}  \includegraphics[width=0.6\textwidth]{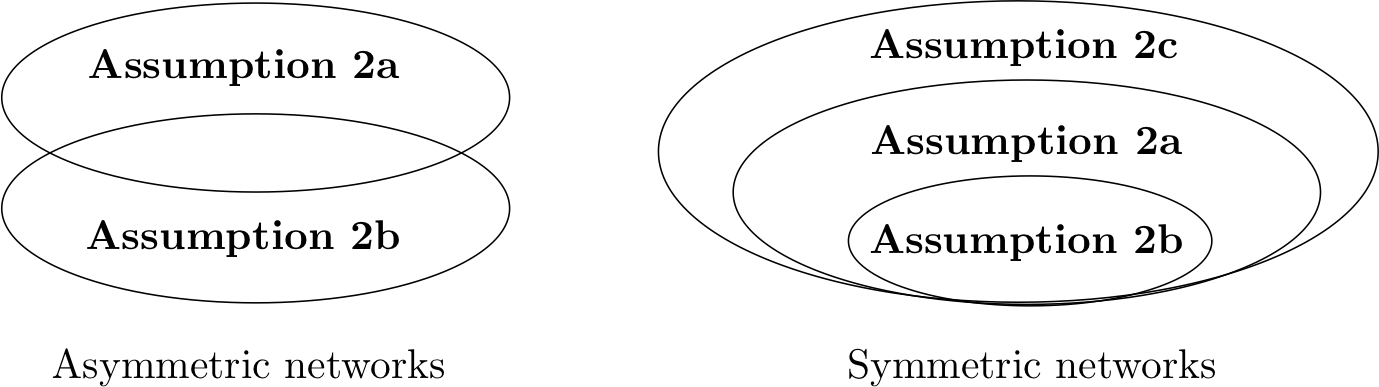}
\end{center}
\caption{Summary of the relations between Assumptions \ref{ass:gen}a, \ref{ass:gen}b and \ref{ass:gen}c, for asymmetric and symmetric networks. }
\label{fig:sets}
\end{figure}

\subsubsection{\textbf{Relation between $\|\PP\|_2$ and $|\lambda_{\textup{min}}(\PP)|$}}
\label{sec:overview2}
If the network is symmetric, then all the eigenvalues of $\PP$ are real and $\lambda_{\textup{min}}(\PP)$ is well defined. In this case, we have already mentioned that $\|\PP\|_2\le \|\PP\|_\infty$. The relation between $\|\PP\|_2$ and $\lambda_{\textup{min}}(\PP)$ is  detailed in the next lemma.

\begin{lemma}\label{lemma:P}
Suppose that $\PP =\PP ^\top$ is a non-zero and non-negative matrix entry-wise with zero diagonal. Then $\lambda_{\textup{min}}(\PP )<0$ and $\lambda_{\textup{max}}(\PP )=\rho(\PP )=\|\PP\|_2$. Consequently, $|\lambda_{\textup{min}}(\PP)|\le \|\PP\|_2$.
\end{lemma}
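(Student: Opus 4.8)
The plan is to treat the three assertions separately, exploiting throughout that $\PP=\PP^\top$ forces every eigenvalue of $\PP$ to be real, so that $\lambda_{\textup{min}}(\PP)$ and $\lambda_{\textup{max}}(\PP)$ are well defined. First I would dispose of the norm identity $\rho(\PP)=\|\PP\|_2$, which for symmetric matrices is exactly Lemma \ref{lem:s} in the appendix; concretely, $\|\PP\|_2=\sqrt{\lambda_{\textup{max}}(\PP^\top\PP)}=\sqrt{\lambda_{\textup{max}}(\PP^2)}=\max_{\lambda\in\Lambda(\PP)}|\lambda|=\rho(\PP)$. This already reduces the final consequence $|\lambda_{\textup{min}}(\PP)|\le\|\PP\|_2$ to the trivial observation that $|\lambda_{\textup{min}}(\PP)|\le\max_{\lambda\in\Lambda(\PP)}|\lambda|=\rho(\PP)$.

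Next I would show $\lambda_{\textup{max}}(\PP)=\rho(\PP)$. Here symmetry alone is not enough, since $\rho(\PP)=\max_{\lambda}|\lambda|$ could a priori be attained at $\lambda_{\textup{min}}(\PP)$; I would instead invoke the Perron--Frobenius theorem, which guarantees that for an entrywise non-negative matrix the spectral radius $\rho(\PP)$ is itself an eigenvalue. Being real, this eigenvalue satisfies $\rho(\PP)\le\lambda_{\textup{max}}(\PP)$, and together with the elementary reverse bound $\lambda_{\textup{max}}(\PP)\le\rho(\PP)$ this yields the equality.

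For the strict negativity $\lambda_{\textup{min}}(\PP)<0$ I would use the zero-diagonal hypothesis. Since $\PP$ has zero diagonal, $\textup{tr}(\PP)=\sum_{\lambda\in\Lambda(\PP)}\lambda=0$. Because $\PP\neq 0$, not all its eigenvalues vanish, and a finite collection of real numbers that sums to zero without being identically zero must contain a strictly negative member; hence $\lambda_{\textup{min}}(\PP)<0$. Collecting the pieces gives all three claims, the last inequality following as noted above.

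The only genuinely non-routine step is the identification $\lambda_{\textup{max}}(\PP)=\rho(\PP)$, which is precisely where the non-negativity of $\PP$ (via Perron--Frobenius) does real work rather than symmetry; the remaining arguments are standard trace and norm bookkeeping.
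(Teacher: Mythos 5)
Your proof is correct and follows essentially the same route as the paper's: Perron--Frobenius to identify $\lambda_{\textup{max}}(\PP)$ with $\rho(\PP)$, the zero-trace argument to force $\lambda_{\textup{min}}(\PP)<0$, and symmetry (Lemma \ref{lem:s}) for $\rho(\PP)=\|\PP\|_2$. You are somewhat more explicit than the paper about why non-zero plus symmetric rules out all eigenvalues vanishing, but the substance is identical.
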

\begin{proof}
Since $\PP $ is non-negative, it follows from Perron-Frobenius theorem that $\lambda_{\textup{max}}(\PP )=\rho(\PP )>0$. Since $\mbox{Tr}(\PP )=\sum_i \lambda_i(\PP )=0$ and $\lambda_{\textup{max}}(\PP )>0$, it must be $\lambda_{\textup{min}}(\PP )<0$.
\end{proof}

In other words, Lemma \ref{lemma:P} proves that if $\PP$ is symmetric Assumption \ref{ass:gen}c is  less restrictive than Assumption~\ref{ass:gen}a. Figure \ref{fig:net} shows networks where this is strictly the case. Note for example that for the complete network in Figure \ref{fig:net}a) we have $\|\PP\|_2=N-1$ while $\lambda_{\textup{min}}(\PP)=-1$ independently of $N$, thus Assumption~\ref{ass:gen}c can hold for large number of players while Assumption~\ref{ass:gen}a does not.  We stress that there are graphs of interest for which  $|\lambda_{\textup{min}}(\PP )|=\|\PP\|_2$. Some examples are the undirected ring with an even number of nodes or any bipartite graph (see Figure \ref{fig:net}b). For these cases,  Assumption~\ref{ass:gen}c and  Assumption~\ref{ass:gen}a  coincide.
 For general $d$-regular graphs, it is known that $|\lambda_{\textup{min}}(\PP )|=\|\PP\|_2$ if and only if $\PP$ is bipartite. The gap between $|\lambda_{\textup{min}}(\PP )|$  and $\|\PP\|_2$ when $\PP$ is not  bipartite is studied e.g. in  \cite{desai1994characterization}.
 Figure \ref{fig:sets} shows a summary of the relations between the three assumptions for asymmetric and symmetric networks.

 \subsection{\textbf{Some intuitions on the results in Table \ref{fig:summary1}}}

 The technical statements relative to the results  in Table \ref{fig:summary1} and their proofs are provided in  Appendices~B and C. We here provide an overview and some intuition on these  results.

\subsubsection{ \textbf{Strong monotonicity}}\label{diss_smon}

According to Proposition \ref{prop:suff}a, $F(x)$ is strongly monotone if and only if $\nabla_x F(x)$ is uniformly positive definite. We have already mentioned at the beginning of this section - see Eq. \eqref{eq:main} - that such a condition can be guaranteed by lower bounding the smallest eigenvalue of the Jordan product 
$\frac{K(x)W+W^\top K(x)^\top}{2}.$ 

 In Theorem \ref{thm:norm}a) we show that for any matrix $K(x)$ and $W$ it holds
\begin{equation}\label{main2}
\textstyle \lambda_{\textup{min}}\left(\frac{K(x)W+W^\top K(x)^\top}{2}\right)\ge - \|K(x)\|_2\|G\|_2=- \kappa_2\|G\|_2.
\end{equation}
This immediately implies that strong monotonicity holds under Assumption~\ref{ass:gen}a. 

 We show in Example \ref{ex:PnotS} that instead Assumption~\ref{ass:gen}b does not guarantee strong monotonicity. A loose intuition can be provided by noting that when $n=1$ by Gershgorin  theorem, 
 $$\textstyle \lambda_{\textup{min}}\left(\frac{K(x)G+G^\top K(x)}{2}\right) \ge -\frac12 \| K(x)G+G^\top K(x) \|_\infty   \ge -\frac{\kappa_2}{2} (\| G \|_\infty+\|G \|_1).$$
 When $G$ is asymmetric, Assumption~\ref{ass:gen}b provides a bound on $\|G\|_\infty$ but  it does not impose any assumption on $\|G\|_1$.  When $G$ is symmetric, we  recall that Assumption~\ref{ass:gen}a is less strict than Assumption~\ref{ass:gen}b.

 Finally, Assumption~\ref{ass:gen}c can be used to guarantee strong monotonicity if additionally
  $G=G^\top$ and $K^i(x)=\tilde K(x)\succ 0$ for all agents $i$. In this case,  we  show in Theorem \ref{thm:min}  that the lower bound in \eqref{main2} can be further improved to
 $$\textstyle \lambda_{\textup{min}}\left(\frac{K(x)W+W^\top K(x)^\top}{2}\right)\ge - \|\tilde K(x)\|_2 |\lambda_{\textup{min}}(G)|=- \kappa_2|\lambda_{\textup{min}}(G)|.$$
 To illustrate this case, consider the scalar linear quadratic game in Example~\ref{ex:lq}, so that $K(x)=K\in\R^{N\times N}$. If the weights $K^i$ are homogeneous and positive, then $K(x)$ is a scalar positive matrix, that is, $K(x)=\kappa_2 I_N$. Using this fact and the symmetry of the network  it is immediate to see that 
 $$\textstyle \lambda_{\textup{min}}\left(\frac{KG+G^\top K}{2}\right) =\lambda_{\textup{min}}\left(\frac{\kappa_2G+\kappa_2G^\top }{2}\right)=\lambda_{\textup{min}}\left(\kappa_2G\right) = -\kappa_2 | \lambda_{\textup{min}}(G)|.$$
The two assumptions that the game features strategic substitutes (i.e. $K^i>0$) and the weights $K^i$ are homogeneous (i.e. $K^i=\kappa_2\ \forall i$) are critical in this argument. In the case of complements one would instead get  $K(x)=-\kappa_2 I_N$ and the negative sign would imply $\lambda_{\textup{min}}\left(-\kappa_2G\right) = -\kappa_2  \lambda_{\textup{max}}(G)= -\kappa_2  \|G\|_2$ leading back to the more restrictive Assumption~\ref{ass:gen}a.  Instead, when the weights are heterogeneous,  $K$ is a positive diagonal matrix with different diagonal elements and   it might happen that
 $$\textstyle \lambda_{\textup{min}}\left(\frac{KG+G^\top K}{2}\right) <  -\kappa_2 |\lambda_{\textup{min}}(G)|,$$
as illustrated in Example \ref{ex3}. If this happens, Assumption~\ref{ass:gen}c is not sufficient to guarantee strong monotonicity,  even in the case of scalar games of strategic substitutes. We show instead that in this case Assumption~\ref{ass:gen}c guarantees that the game Jacobian satisfies the weaker uniform P-function property.

\subsubsection{ \textbf{$\boldsymbol{P_\Upsilon}$ condition}}
The main advantage of the ${P_\Upsilon}$ condition is that it is expressed on the reduced matrix $\Upsilon$, which has size $N\times N$, instead of the gradient $\nabla_x F(x)$, which has size $Nn\times Nn$. This dimension reduction is achieved by summarizing the matrix of cross agents interactions with a single scalar number
$$\textstyle \frac{\partial^2 J^i(x)}{\partial x^i \partial x^j}= K^i(x) G_{ij} \quad \rightarrow \quad \kappa_{i,j}=\max_x \|K^i(x)\|_2 G_{ij},\quad  \forall i\neq j.$$
This operation  loses the sign properties of  $K^i(x)$. Consequently, any distinction between game of strategic complements or substitutes is lost in the $\Upsilon$ matrix. To clarify this, consider again the linear quadratic game in Example~\ref{ex:lq} with $K^i=-\delta$ for all $i$ (recall that this is a game of strategic complements if $\delta>0$ and substitutes if $\delta<0$). In both cases $\Upsilon= I -|\delta| G$.  This is the fundamental reason why Assumption ~\ref{ass:gen}c, which is typical of games of strategic substitutes, cannot be used to ensure the ${P_\Upsilon}$ condition. The fact that instead Assumption ~\ref{ass:gen}a and  ~\ref{ass:gen}b are both sufficient is proven in Theorems \ref{thm:norm} and \ref{thm:inf} by showing that, under these assumptions, the  matrix $M:=\kappa_1 I -\kappa_2 G$ is  a P-matrix. The conclusion then follows from the fact that $\Upsilon$ is a Z-matrix and is element-wise greater than $M$.

 \subsubsection{ \textbf{Uniform P-function}}
The previous two subsections highlight that under Assumption ~\ref{ass:gen}a and  ~\ref{ass:gen}b, the game Jacobian is either strongly monotone or it satisfies   the $P_\Upsilon$ condition (which are stronger properties than the uniform P-function property discussed here). Assumption~\ref{ass:gen}c is instead  sufficient to prove strong monotonicity  if   the network weights $K^i(x)$ are homogeneous and positive definite. 
With heterogeneous weights, we focus on the special class of scalar games of strategic substitutes (i.e. $n=1$, $K^i(x)\ge \nu>0$). In this case, even though the game Jacobian might not be strongly monotone (see Example \ref{ex3}), we show in Theorem \ref{thm:scalar} that it is a uniform P-function. 
 Understanding whether any of the properties  in Definition \ref{def:mon} holds under Assumption \ref{ass:gen}c  when $n>1$ and the weights are heterogeneous is an open problem.

\section{Existence and uniqueness}
\label{sec:ex}
 This section uses the properties established in Table \ref{fig:summary1}, together with Propositions \ref{prop:kkt} and \ref{prof:ex}, to present existence and uniqueness results for the Nash equilibrium of special classes of network games. We organize our discussion by first focusing on linear quadratic (multidimensional) network games, whereby we also explain how our theory recovers and extends  existing results, and then  we discuss  the less explored class of nonlinear network games.

\subsection{\textbf{Linear quadratic network games}}\label{sec:a_linear}
We start by considering linear quadratic games with general convex closed strategy sets $\mathcal{X}^i\subseteq \R^n$ and cost function

\begin{equation}\label{cost_qg}
J^i(x^i,z^i(x)):= \frac{1}{2} (x^i)^\top Q^i x^i + ( K^i z^i(x) +  a^i)^\top x^i,
\end{equation}
where $Q^i,K^i\in\R^{n\times n}$, $Q^i=(Q^i)^\top \succ 0$ and $a^i\in\R^{n}$. Note that  for such games $D^i(x)=Q^i, K^i(x)=K^i$. Consequently,
$
D(x)=\textup{blkd}[Q^i]_{i=1}^N$ and 
$K(x)=\textup{blkd}[K^i]_{i=1}^N
$ are both independent of $x$.   
The results of Table~\ref{fig:summary1} immediately yield the following two corollaries; the first one for general linear quadratic games (using Theorem \ref{thm:norm} and \ref{thm:inf}) and the second for linear quadratic games of strategic substitutes (using Theorems \ref{thm:min} and \ref{thm:scalar}).

\begin{corollary}[{Linear  quadratic games}]\label{LQ_gen}
Consider a linear  quadratic game with cost functions as in \eqref{cost_qg}. Suppose that at least one of the following  conditions holds:

\begin{equation}\label{eq:QG} \tag{LQ$_2$}
 \min_i \left[\lambda_{\textup{min}}(Q^i) \right]- \max_i\left[\| K^i\|_2\right] \|\PP\|_2>0, 
 \end{equation}
 \begin{equation}\label{eq:QG2} \tag{LQ$_\infty$}
 \min_i \left[\lambda_{\textup{min}}(Q^i) \right]- \max_i\left[\| K^i\|_2\right] \|\PP\|_\infty>0. 
 \end{equation}
Then there exists a unique  Nash equilibrium. 
\end{corollary}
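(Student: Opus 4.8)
The plan is to recognize that for the quadratic cost \eqref{cost_qg} the two hypotheses \eqref{eq:QG} and \eqref{eq:QG2} are simply Assumptions \ref{ass:gen}a and \ref{ass:gen}b specialized to this game, and then to invoke the machinery of Sections \ref{sec:vi}--\ref{sec:overview}. First I would compute the relevant constants. Since $D^i(x)=Q^i$ and $K^i(x)=K^i$ are constant in $x$, definition \eqref{k1} gives $\kappa_1=\min_i\lambda_{\textup{min}}(Q^i)$ and $\kappa_2=\max_i\|K^i\|_2$. Hence \eqref{eq:QG} reads $\kappa_1-\kappa_2\|\PP\|_2>0$, i.e.\ Assumption \ref{ass:gen}a, and \eqref{eq:QG2} reads $\kappa_1-\kappa_2\|\PP\|_\infty>0$, i.e.\ Assumption \ref{ass:gen}b.

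Next I would check that Assumption \ref{cost} is in force, so that the variational reformulation applies: each $\mathcal{X}^i$ is nonempty, closed and convex by hypothesis, the cost is twice continuously differentiable, convex in $x^i$ because $Q^i\succ0$, and $\nabla_{x^i}J^i(x^i,z^i)=Q^ix^i+K^iz^i+a^i$ is affine, hence Lipschitz. By Proposition \ref{prop:kkt} the Nash equilibria of the game coincide with the solutions of VI$(\mathcal{X},F)$, where $\mathcal{X}=\mathcal{X}^1\times\cdots\times\mathcal{X}^N$ as in \eqref{eq:X}, so it suffices to prove that this VI admits a unique solution; note moreover that $F$ is affine and therefore continuous, as required by Proposition \ref{prof:ex}.

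For the first branch, when \eqref{eq:QG} holds (equivalently, Assumption \ref{ass:gen}a), I would quote Theorem \ref{thm:norm} (the ``Strong mon.'' row of Table \ref{fig:summary1}) to conclude that $F$ is strongly monotone, and then apply Proposition \ref{prof:ex}a to obtain existence and uniqueness directly. For the second branch, when \eqref{eq:QG2} holds (equivalently, Assumption \ref{ass:gen}b), the key subtlety is that strong monotonicity is no longer available, indeed Example \ref{ex:PnotS} shows it can fail; so instead I would quote Theorem \ref{thm:inf} to get that $\nabla_xF(x)$ satisfies the $P_\Upsilon$ condition, pass through Proposition \ref{prop:suff}b to conclude that $F$ is a uniform block P-function with respect to the partition induced by the players, and finally apply Proposition \ref{prof:ex}b. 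Here it is essential that $\mathcal{X}$ is precisely the Cartesian product \eqref{eq:X} matching that same partition, which is exactly the structure of the feasible set.

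Since this is a corollary, all the genuine work is packaged into Theorems \ref{thm:norm} and \ref{thm:inf}, which relate the spectral network measures to properties of the game Jacobian; at this level the only real obstacle is the bookkeeping of correctly matching \eqref{eq:QG} and \eqref{eq:QG2} to Assumptions \ref{ass:gen}a and \ref{ass:gen}b and, in the infinity-norm branch, routing the uniqueness argument through the block P-function property rather than through strong monotonicity.
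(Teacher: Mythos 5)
Your proposal is correct and follows essentially the same route as the paper's proof: identify \eqref{eq:QG} and \eqref{eq:QG2} with Assumptions \ref{ass:gen}a and \ref{ass:gen}b, invoke Theorems \ref{thm:norm} and \ref{thm:inf} respectively, and conclude via Propositions \ref{prop:kkt} and \ref{prof:ex}a/b. The extra details you supply (computing $\kappa_1,\kappa_2$, checking Assumption \ref{cost}, and making explicit the passage from the $P_\Upsilon$ condition to the uniform block P-function via Proposition \ref{prop:suff}b) are all consistent with what the paper leaves implicit.
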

\begin{proof}
Condition \eqref{eq:QG} guarantees that Assumption \ref{ass:gen}a is met. By Theorem \ref{thm:norm} the game Jacobian is strongly monotone and the statement follows from Propositions \ref{prop:kkt} and \ref{prof:ex}a).
Condition \eqref{eq:QG2} guarantees that Assumption \ref{ass:gen}b is met. By Theorem \ref{thm:inf} the game Jacobian satisfies the $P_\Upsilon$ condition and the statement follows from Propositions \ref{prop:kkt} and \ref{prof:ex}b).
\end{proof}

 We note that Condition  \eqref{eq:QG}  guarantees that the sufficient condition derived  in \cite{ui2016bayesian} holds.\footnote{ For general games, Proposition 1 in \cite{ui2016bayesian} states that if $F(x)$ is strictly monotone then there exists at most one Nash equilibrium. For linear quadratic network games  Proposition 4 therein states that if $Q+KG\succ 0$ then there exists a unique Nash equilibrium. Using this condition \cite{ui2016bayesian} shows that one can recover  the results in \cite{ballester2006s} and \cite{bramoulle2014strategic}, among others. We note that Condition \eqref{eq:QG} in this work is a sufficient condition for $Q+KG\succ 0$. On the other hand, Condition \eqref{eq:QG2} does not imply $Q+KG\succ 0$ and thus allows the study of linear quadratic network games  that do not satisfy the condition in \cite{ui2016bayesian}.}
The second corollary focuses on  $\PP =\PP ^\top$  and provides  alternative sufficient conditions in terms of $|\lambda_{\textup{min}}(\PP )|$ for games of strategic substitutes (recall that  for symmetric networks it holds $|\lambda_{\textup{min}}(\PP )| \le \|\PP\|_2\le \|\PP\|_\infty$). 

\begin{corollary}[{Linear  quadratic games of strategic substitutes}]\label{LQ_sub}
Consider a linear  quadratic game with cost functions as in \eqref{cost_qg}. Suppose that $\PP =\PP ^\top$ and that one of the following conditions holds:
\begin{enumerate}
\item Scalar games of strategic substitutes: $n=1$, $ K^i>0$  $\forall i$.
\item  Homogeneous weights: $ K^i=\tilde K$ $\forall i$, $\tilde K+\tilde K^\top \succeq 0$.
\end{enumerate}
Then the condition
\begin{equation}\label{eq:Q-sub} \tag{LQ$_\textup{min}$}
 \min_i \left[\lambda_{\textup{min}}(Q^i) \right]- \max_i\left[\| K^i\|_2\right] |\lambda_{\textup{min}}(\PP )|>0 
 \end{equation}
guarantees existence and uniqueness of the Nash equilibrium. 
\end{corollary}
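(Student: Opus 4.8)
The plan is to proceed exactly as in the proof of Corollary \ref{LQ_gen}, reducing each of the two cases to the general machinery already assembled, and observing that the only work specific to this corollary is a matter of matching hypotheses and notation. The first step is to record that for linear quadratic games one has $D^i(x)=Q^i$ and $K^i(x)=K^i$ independently of $x$, so the scalars in \eqref{k1} become $\kappa_1=\min_i\lambda_{\textup{min}}(Q^i)$ and $\kappa_2=\max_i\|K^i\|_2$. With this identification, condition \eqref{eq:Q-sub} reads precisely $\alpha_{\textup{min}}=\kappa_1-\kappa_2|\lambda_{\textup{min}}(\PP)|>0$, i.e. Assumption \ref{ass:gen}c holds; note this is well posed since $\PP=\PP^\top$ makes $\lambda_{\textup{min}}(\PP)$ real. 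I would also note that Assumption \ref{cost} is satisfied (each $\mathcal{X}^i$ is closed and convex, $J^i$ is convex in $x^i$ because $Q^i\succ0$, and the operator $F$ is affine, hence continuous and continuously differentiable), so that Proposition \ref{prop:kkt} is available to pass between Nash equilibria and solutions of the $\textup{VI}(\mathcal{X},F)$.

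For case 2 (homogeneous weights), the hypotheses of Theorem \ref{thm:min} are met: the network is symmetric and $K^i(x)=\tilde K$ for all $i$ with $\tilde K+\tilde K^\top\succeq0$, which by the definition of $\succeq$ in the notation is exactly $\tilde K\succeq0$. I would therefore invoke Theorem \ref{thm:min} to conclude that the game Jacobian $F$ is strongly monotone, and then combine Proposition \ref{prop:kkt} with Proposition \ref{prof:ex}a) to obtain existence and uniqueness of the Nash equilibrium, with no compactness of $\mathcal{X}$ required.

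For case 1 (scalar strategic substitutes), set $\nu:=\min_i K^i$, which is strictly positive because the $K^i>0$ range over finitely many agents; thus the hypotheses $n=1$ and $K^i(x)\ge\nu>0$ of Theorem \ref{thm:scalar} hold, and I would invoke it to conclude that $F$ is a uniform P-function. The remaining point is that Proposition \ref{prof:ex}c) requires $\mathcal{X}$ to be a rectangle: since $n=1$, each $\mathcal{X}^i\subseteq\R$ is a closed interval, so $\mathcal{X}=\mathcal{X}^1\times\cdots\times\mathcal{X}^N$ is indeed a rectangle, and Propositions \ref{prop:kkt} and \ref{prof:ex}c) again yield a unique Nash equilibrium. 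The corollary itself carries essentially no obstacle: all the genuine difficulty has been deferred to Theorems \ref{thm:min} and \ref{thm:scalar}, and the only care needed here is the equivalence $\tilde K\succeq0\Leftrightarrow\tilde K+\tilde K^\top\succeq0$, the extraction of a uniform lower bound $\nu>0$ from finitely many strictly positive weights, and the verification that the scalar constraint set is rectangular so that the weaker uniform P-function property suffices.
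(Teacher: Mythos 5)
Your proposal is correct and follows essentially the same route as the paper's own proof: identify $\kappa_1=\min_i\lambda_{\textup{min}}(Q^i)$, $\kappa_2=\max_i\|K^i\|_2$ so that \eqref{eq:Q-sub} is Assumption \ref{ass:gen}c, then apply Theorem \ref{thm:min} (strong monotonicity) in the homogeneous case and Theorem \ref{thm:scalar} with $\nu=\min_i K^i>0$ and the rectangularity of $\mathcal{X}$ in the scalar case, concluding via Propositions \ref{prop:kkt} and \ref{prof:ex}. The extra remarks on verifying Assumption \ref{cost} and on $\tilde K\succeq 0\Leftrightarrow \tilde K+\tilde K^\top\succeq 0$ under the paper's notational convention are harmless additions.
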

\begin{proof}
Condition \eqref{eq:Q-sub} guarantees that Assumption \ref{ass:gen}c holds.
Under condition 1), $\mathcal{X}$ is rectangular since $n=1$ and Theorem \ref{thm:scalar} guarantees that $F$ is a uniform P-function (note that $\nu:=\min_i K^i>0$). Under condition 2),  Theorem \ref{thm:min} guarantees that $F$ is strongly monotone.  In both cases existence and uniqueness then follow by Propositions \ref{prop:kkt} and \ref{prof:ex}. 
\end{proof}

 To place our results in the context of the existing   literature, we revise the  network game with multiple activities  described in Example \ref{ex:multiple}, which  subsumes the models studied in \cite{ballester2006s,bramoulle2007public,bramoulle2014strategic,chen2017multiple} and introduces new features such as budget constraints that have not been considered before.   

\begin{example}[continues=ex:multiple] 
Consider the network game described in Example \ref{ex:multiple} and note that  the payoff of each agent corresponds to a cost function as in \eqref{cost_qg} with 
$$Q^i:=\left[\begin{array}{cc}1 & \beta^i \\\beta^i & 1\end{array}\right], \quad K^i:=-\left[\begin{array}{cc}\delta & \mu \\\mu & \delta\end{array}\right], \quad a^i:=-\left[\begin{array}{c}a^i_A \\a^i_B\end{array}\right].$$
Consequently, $\lambda_{\textup{min}}(Q^i)=1-|\beta^i|>0$ and $\|K^i\|_2=|\delta|+|\mu|$.  Corollary  \ref{LQ_gen} provides the following sufficient conditions for existence and uniqueness of the Nash equilibrium 
\begin{align}
\min_i(1-|\beta^i|) -  (|\delta|+|\mu|) \|G\|_2>0 \label{cond2_multiple} \\
\min_i(1-|\beta^i|) -  (|\delta|+|\mu|) \|G\|_\infty>0 \label{condinf_multiple}
\end{align}
 Condition \eqref{cond2_multiple} reduces to  Assumptions 1 and 4 in \cite{chen2017multiple}  for the sub-cases considered therein where $\mathcal{X}^i=\R^2$, $G$ is symmetric, $\delta\ge 0$, $\mu=0$. For the case $\delta\ge 0$ but $\mu\neq 0$, Assumption 3 in \cite{chen2017multiple} is equivalent to  \eqref{cond2_multiple} if the signs of $\mu$ and $\beta^i$ follow the relation detailed  in Footnote \ref{fot:sign}.  We also note that both Conditions \eqref{cond2_multiple} and \eqref{condinf_multiple} do  not require $\delta$ to be positive, apply to possibly asymmetric networks and can be used for any closed and convex set $\mathcal{X}^i$ (e.g., representing budget constraints).

For the case when $\delta$ is negative and $G$ is symmetric a more expansive condition can be obtained by  using Corollary  \ref{LQ_sub}-2. In fact, in this model 
$$ K^i:=-\left[\begin{array}{cc}\delta & \mu \\\mu & \delta\end{array}\right]=\left[\begin{array}{cc}-\delta & -\mu \\ -\mu & -\delta\end{array}\right]=:\tilde K.$$
If $\delta<0$  the elements on the diagonal of $\tilde K$ are positive and if $|\mu|<|\delta|$ then $\textup{det}(\tilde K)=\delta^2-\mu^2>0$ implying that $\tilde K\succ 0$. From Corollary  \ref{LQ_sub}-2 a sufficient condition for uniqueness  is thus
\begin{align}
\label{multiple_min}
\PP=\PP^\top,\quad  \delta<0, \quad   |\mu|<|\delta|, \quad  \min_i(1-|\beta^i|) -  (|\delta|+|\mu|) |\lambda_\textup{min}(G)|>0, 
\end{align}
which can be seen as a generalization of the results in \cite{bramoulle2014strategic}  to network games with multiple activities where network effects within the same activity are substitutes. 

To better illustrate the relation between our conditions and the ones derived in previous works we consider the case $\beta^i=\beta$ for all agents $i$, $\mu=0$, $G=G^\top$ and no budget constraint so that $\mathcal{X}^i=\R^2_{\ge 0}$. If $\beta=0$ then there is no coupling between the activities and hence the equilibria of this model are the same as the equilibria of the single activity model described in Example \ref{ex:lq}. 
Table \ref{table:lq} shows that conditions \eqref{cond2_multiple} and \eqref{multiple_min} recover to the well-known conditions in  \cite{ballester2006s}, \cite{bramoulle2014strategic} and \cite{chen2017multiple} for the cases studied therein. A novel condition is obtained for  games that feature  strategic substitutability within the same activity.  

  \begin{table}[h]
\begin{center}
{\small
\begin{tabular}{c|c|c}
& Complements & Substitutes \\ 
& (${\delta>0}$) & (${\delta<0}$) \\ \hline && \\
Single activity  & $\|G\|_2 < \frac{1}{\delta}$ & $ |\lambda_{\textup{min}}(G)|< \frac{1}{|\delta|}$ \\ 
($\beta=0$) & \cite{ballester2006s} & \cite{bramoulle2014strategic} \\ && \\ \hline && \\ 
Multiple activities & $\|G\|_2 < \frac{1-|\beta|}{\delta}$  & $ |\lambda_{\textup{min}}(G)|< \frac{1-|\beta|}{|\delta|}$  \\ 
($\beta\neq 0$)& \cite{chen2017multiple} & \\  
\end{tabular}}
\end{center}
\caption{ Sufficient conditions for uniqueness of the Nash equilibrium for different linear quadratic games, seen as special cases of Example \ref{ex:multiple} for $\mu=0$, $\beta^i=\beta$, $G=G^\top$ and $\mathcal{X}^i=\R^2_{\ge 0}$.}
\label{table:lq}
\end{table}

Condition \eqref{condinf_multiple} is  new. To illustrate its utility, recall from the analysis in Section \ref{sec:overview1} that this condition may handle cases not covered by  Condition \eqref{cond2_multiple} when the network is asymmetric. An  example that illustrates  this additional flexibility is a game with single activity ($\beta=0$) and strategic complements ($\delta>0$) where there is an agent which is a \textit{trend setter} in the sense that its decision influences the rest of the agents much more than any other agent, as  in the  network illustrated in Figure \ref{fig:trend}.
\begin{figure}[h]
\begin{center}
\begin{minipage}{0.4\textwidth}
\begin{center}
\includegraphics[width=0.3\textwidth]{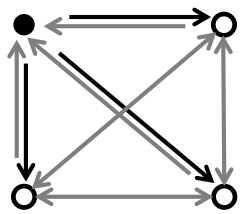}
\end{center}
\end{minipage}
$\Rightarrow$
\begin{minipage}{0.45\textwidth}
$$G=\left[\begin{array}{cccc}0 & 0.1 & 0.1 & 0.1 \\\boldsymbol{1} & 0 & 0.1 & 0.1 \\\boldsymbol{1} & 0.1 & 0 & 0.1 \\\boldsymbol{1} & 0.1 & 0.1 & 0\end{array}\right]$$
\end{minipage}
\end{center}
\caption{ Black arrows have higher weight than gray arrows, implying that  the agent in the top left corner is a ``trend setter''.}
\label{fig:trend}
\end{figure}
For this network $\|G\|_2=1.7437 $ while $\|G\|_\infty=1.2$. Hence by using the condition of \cite{ballester2006s} one can guarantee uniqueness for $\delta <0.5735 $ while by using the new Condition \eqref{condinf_multiple}, uniqueness is guaranteed for the larger interval $\delta <0.8333$.
\end{example}

\subsection{\textbf{Nonlinear  network games}}
\label{nonlinear}
We next consider network games with  nonlinear cost functions of the form
\begin{equation}\label{cost_ng}
J^i(x^i,z^i(x)):= q^i(x^i)+ f^i (z^i(x))^\top x^i,
\end{equation}
where $f^i,q^i$ are  differentiable functions and  $\{q^i\}_{i=1}^N$ are  uniformly strongly convex with constant $\kappa_1>0$. Note that for games with this structure $\kappa_2=\max_i\max_{z^i} \| \nabla_{z^i}f^i(z^i)\|$.   We use the results in Table~\ref{fig:summary1} to obtain the following corollaries for this class of games.

\begin{corollary}[{Nonlinear network games}]\label{NL_gen}
Consider a  nonlinear network game with cost functions as in \eqref{cost_ng}. Suppose that  at least one of the following conditions holds 

\begin{equation}\label{eq:NG}
 \kappa_1 - \kappa_2 \|\PP\|_2>0 \tag{NL$_2$},
 \end{equation}
 
 \begin{equation}\label{eq:NG2}
 \kappa_1 - \kappa_2 \|\PP\|_\infty>0 \tag{NL$_\infty$}.
 \end{equation}
Then there exists a unique Nash equilibrium. 
\end{corollary}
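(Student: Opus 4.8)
The plan is to mirror the proof of Corollary \ref{LQ_gen}: I would show that the two conditions \eqref{eq:NG} and \eqref{eq:NG2} are exactly instances of Assumption \ref{ass:gen}a and Assumption \ref{ass:gen}b once specialized to the cost structure \eqref{cost_ng}, so that Theorems \ref{thm:norm} and \ref{thm:inf} apply and existence/uniqueness follows from Propositions \ref{prop:kkt} and \ref{prof:ex}.

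First I would identify the Hessian blocks in \eqref{eq:DK} for the cost \eqref{cost_ng}. Since $\nabla_{x^i} J^i(x^i,z^i) = \nabla q^i(x^i) + f^i(z^i)$, a further differentiation gives $D^i(x) = \nabla^2 q^i(x^i)$ and $K^i(x) = \nabla_{z^i} f^i(z^i)\mid_{z^i=z^i(x)}$. Uniform strong convexity of $\{q^i\}$ with constant $\kappa_1$ means $\nabla^2 q^i(x^i) \succeq \kappa_1 I$ for all $i$ and all $x^i$, so $\lambda_{\textup{min}}(D^i(x)) \ge \kappa_1$; hence the constant $\kappa_1$ appearing in \eqref{eq:NG} and \eqref{eq:NG2} lower bounds the quantity $\min_i \min_x \lambda_{\textup{min}}(D^i(x))$ of \eqref{k1}, while $\max_i \max_{z^i}\|\nabla_{z^i} f^i(z^i)\|$ is precisely $\kappa_2$ of \eqref{k1}. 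Along the way I would verify Assumption \ref{cost}: closedness and convexity of $\mathcal{X}^i$ is given, strong convexity of $q^i$ makes $J^i$ convex in $x^i$, and the twice-differentiability and Lipschitz requirements follow from differentiability of $q^i, f^i$ together with the boundedness of $\nabla_{z^i} f^i$ implicit in finiteness of $\kappa_2$.

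With this dictionary the two cases are immediate. If \eqref{eq:NG} holds, then $\kappa_1 - \kappa_2\|\PP\|_2 > 0$ yields Assumption \ref{ass:gen}a, so Theorem \ref{thm:norm} gives strong monotonicity of $F$, and Proposition \ref{prof:ex}a) combined with the VI reformulation of Proposition \ref{prop:kkt} gives a unique Nash equilibrium. If instead \eqref{eq:NG2} holds, then $\kappa_1 - \kappa_2\|\PP\|_\infty > 0$ yields Assumption \ref{ass:gen}b, so Theorem \ref{thm:inf} gives the $P_\Upsilon$ condition on $\nabla_x F(x)$, whence $F$ is a uniform block P-function, and Proposition \ref{prof:ex}b) together with Proposition \ref{prop:kkt} again gives the unique equilibrium.

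Since every substantive step is already established in Theorems \ref{thm:norm}, \ref{thm:inf} and Propositions \ref{prop:kkt}, \ref{prof:ex}, I do not expect a genuine obstacle: the only work is the bookkeeping that reads off $D^i, K^i$ for \eqref{cost_ng} and checks that the named constants $\kappa_1, \kappa_2$ match those in \eqref{k1}. The one point meriting care is that the strong-convexity constant $\kappa_1$ need only lower bound $\min_i\min_x \lambda_{\textup{min}}(D^i(x))$ rather than equal it, but this suffices, since it makes \eqref{eq:NG} and \eqref{eq:NG2} sufficient for the corresponding assumptions.
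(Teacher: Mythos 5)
Your proposal is correct and follows exactly the paper's own route: verify that \eqref{eq:NG} and \eqref{eq:NG2} instantiate Assumptions \ref{ass:gen}a and \ref{ass:gen}b for the cost structure \eqref{cost_ng}, then invoke Theorem \ref{thm:norm} (resp.\ Theorem \ref{thm:inf}) together with Propositions \ref{prop:kkt} and \ref{prof:ex}a) (resp.\ \ref{prof:ex}b)). The only difference is that you spell out the bookkeeping identifying $D^i(x)=\nabla^2 q^i(x^i)$ and $K^i(x)=\nabla_{z^i}f^i(z^i)$, which the paper leaves implicit in the remark preceding the corollary.
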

\begin{proof}
Under \eqref{eq:NG} Assumption \ref{ass:gen}a is met. The result follows from Theorem \ref{thm:norm} and Propositions \ref{prop:kkt} and \ref{prof:ex}a).
Under \eqref{eq:NG2} Assumption \ref{ass:gen}b is met. The result follows from Theorem \ref{thm:inf} and Propositions \ref{prop:kkt} and \ref{prof:ex}b).
\end{proof}
 Corollary \ref{NL_gen} under \eqref{eq:NG2} generalizes the result in \cite{acemoglu2015networks} (obtained for the case $n=1$, $q^i(x^i)=\frac{1}{2}(x^i)^2$, $f^i=f$ for all $i$,  where $f$ is a contraction mapping, and the network $G$ is such that  $\sum_{j\neq i} G_{ij}=1$) to games with generic heterogeneous interaction functions $f^i$ and multidimensional strategies.
If we consider games of strategic substitutes we instead obtain the following.

\begin{corollary}[{Nonlinear network  games of strategic substitutes}]\label{NL_sub}
Consider a  network  game with cost function as in \eqref{cost_ng}. 
Suppose that $\PP =\PP ^\top$, $n=1$ and  that at least one of the following holds
\begin{enumerate}
\item there exists $\nu>0$ such that $(f^i)'(z^i) \ge \nu$ for all~$i$ and all $z^i$;
\item  $(f^i)'(z^i)>0$ for all~$i$ and all $z^i$ and $\mathcal{X}^i$ is lower bounded  for all~$i$.
\end{enumerate}
Then the condition
\begin{equation}\label{eq:NG_sub}
 \kappa_1 - \kappa_2| \lambda_{\textup{min}}(\PP )|>0  \tag{NL$_\textup{min}$}
 \end{equation}
guarantees existence and uniqueness of the Nash equilibrium. 
\end{corollary}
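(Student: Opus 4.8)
The plan is to connect the hypotheses to Assumption \ref{ass:gen}c and then invoke the uniform P-function machinery of Theorem \ref{thm:scalar} together with Proposition \ref{prof:ex}c. First I would record the structure of the game Jacobian for cost functions of the form \eqref{cost_ng}: from \eqref{eq:DK} one reads off $D^i(x)=(q^i)''(x^i)$ and $K^i(x)=(f^i)'(z^i(x))$ (here $n=1$, so these are scalars and $\W=\PP$). Consequently $\kappa_1=\min_i \inf_{x^i}(q^i)''(x^i)$ is exactly the strong convexity constant of the $q^i$ and $\kappa_2=\max_i\sup_{z^i}|(f^i)'(z^i)|$, so that condition \eqref{eq:NG_sub} is precisely Assumption \ref{ass:gen}c. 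Since $n=1$ and each $\mathcal{X}^i\subseteq\R$ is closed and convex, it is an interval, hence $\mathcal{X}=\prod_i\mathcal{X}^i$ is a rectangle, which is the set requirement of Proposition \ref{prof:ex}c.

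Under hypothesis 1, $K^i(x)=(f^i)'(z^i(x))\ge \nu>0$ holds for every $x\in\mathcal{X}$, so the pair (Assumption \ref{ass:gen}c, $K^i(x)\ge\nu>0$) places us exactly in the last column of Table \ref{fig:summary1}. I would then apply Theorem \ref{thm:scalar} to conclude that $F$ is a uniform P-function, and combine Proposition \ref{prof:ex}c (valid since $\mathcal{X}$ is a rectangle) with the VI reformulation of Proposition \ref{prop:kkt} to obtain existence and uniqueness of the Nash equilibrium. This case is immediate.

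The substantive work is hypothesis 2, where $(f^i)'(z^i)>0$ is only strict, not uniformly bounded below, so Theorem \ref{thm:scalar} cannot be applied on all of $\mathcal{X}$; the strategy is to compactify using the strategic-substitutes structure. Let $\ell_i:=\inf\mathcal{X}^i$ (finite by lower boundedness); since $\PP\ge 0$ entrywise, every aggregate satisfies $z^i(x)\ge \underline z^i:=\sum_j \PP_{ij}\ell_j$ on $\mathcal{X}$. Because $q^i$ is strongly convex (hence the objective is coercive) and $f^i$ is increasing, the best response $B^i(z^i)=\arg\min_{x^i\in\mathcal{X}^i}[q^i(x^i)+f^i(z^i)x^i]$ is single-valued and nonincreasing in $z^i$ (its interior first-order condition reads $(q^i)'(x^i)=-f^i(z^i)$), so $B^i(z^i)\le B^i(\underline z^i)=:u_i<\infty$ for all admissible $z^i$ and in particular every Nash equilibrium lies in the compact box $\mathcal{B}:=\prod_i(\mathcal{X}^i\cap[\ell_i,u_i])$. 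I would then restrict the game to $\mathcal{B}$: on the compact range $z^i\in[\underline z^i,\bar z^i]$ induced by $\mathcal{B}$ the continuous function $(f^i)'$ is strictly positive, hence bounded below by some $\nu>0$, so the truncated game satisfies hypothesis 1, and its $\kappa_2$ does not exceed that of the original game, whence \eqref{eq:NG_sub} still holds. Applying the hypothesis-1 argument to the truncated game (whose strategy set $\mathcal{B}$ is a rectangle) gives a unique Nash equilibrium on $\mathcal{B}$. Finally, since any original best response to an aggregate $\ge\underline z^i$ is automatically $\le u_i$, the original and truncated best-response maps coincide on $\mathcal{B}$, so the truncated equilibrium is a genuine equilibrium of the original game and, conversely, every original equilibrium lies in $\mathcal{B}$ and therefore equals it.

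The main obstacle is hypothesis 2: one must verify that the upper bounds $u_i$ are well defined (this uses coercivity from strong convexity of $q^i$ together with the lower bound on the aggregates) and that truncation neither creates nor destroys equilibria. The monotonicity of the best response in the neighbor aggregate — the hallmark of strategic substitutes — is precisely what makes the a priori bound $u_i$ available, and is therefore the crux of the argument.
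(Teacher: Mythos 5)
Your proposal is correct. For hypothesis 1 it coincides exactly with the paper's proof: identify $D^i(x)=(q^i)''(x^i)$ and $K^i(x)=(f^i)'(z^i(x))$, note that \eqref{eq:NG_sub} is Assumption \ref{ass:gen}c, and invoke Theorem \ref{thm:scalar} together with Propositions \ref{prop:kkt} and \ref{prof:ex}c on the rectangular set $\mathcal{X}$. For hypothesis 2 you and the paper rely on the same crucial structural observation --- the best response is nonincreasing in the neighbor aggregate, the aggregates are bounded below by $\underline z^i=\sum_j \PP_{ij}\ell_j$, hence every best response (and every equilibrium) lies in the compact box below $u_i=B^i(\underline z^i)$ --- but you finish differently. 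The paper keeps the game on all of $\mathcal{X}$, uses part 2 of Theorem \ref{thm:scalar} to get that $F$ is a (non-uniform) P-function, which yields \emph{at most} one equilibrium via \cite[Proposition 3.5.10(a)]{facchinei2007finite}, and then obtains \emph{existence} separately from Brouwer's fixed point theorem applied to the best-response map on the compact box. You instead truncate the strategy sets to that box, observe that $(f^i)'$ is uniformly bounded away from zero on the resulting compact range of aggregates so that hypothesis 1 applies to the truncated game, and then verify that truncation neither creates nor destroys equilibria. Both arguments are sound; the paper's route is slightly shorter because it never needs the equilibrium-transfer step, while yours has the minor advantage of delivering the \emph{uniform} P-function property (and hence, e.g., the Lipschitz continuity of Theorem \ref{cor:Lipschitz}) on the compact box rather than only the qualitative P-function property on $\mathcal{X}$.
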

\begin{proof}
Under condition 1. the conclusion  follows from Theorem \ref{thm:scalar} and Propositions \ref{prop:kkt} and \ref{prof:ex}c).   Under condition 2. Theorem \ref{thm:scalar} guarantees that $F(x)$ is a P-function (see Definition \ref{def:mon2}). Since this is a scalar game of strategic substitutes the best response function of each agent is non-increasing in $z^i$. Let $a^i$ be the lower bound on $\mathcal{X}^i$ and $z^i_a:=\sum_{j\neq i} G_{ij} a^j$. Then $z^i(x)\ge z^i_a$ for all $x\in\mathcal{X}$ and $B^i(z^i(x))\le B^i(z^i_a)=:b^i$. Equivalently, the best response  takes values in the compact interval $[a^i,b^i]$. The conclusion then follows from the fact that a Nash equilibrium exists by Brower fixed point theorem and  there is at most one equilibrium  since $F(x)$ is a P-function, see  \cite[Proposition 3.5.10 (a)]{facchinei2007finite}.
\end{proof}

Note that Corollary \ref{NL_sub} applies only to scalar games. Nonetheless, it   slightly extends the result in \cite{allouch2015private} by allowing for more general interaction functions.\footnote{ The setup of \cite{allouch2015private}  coincides with the choice $q^i(x^i)=(x^i)^2$ and $f^i(z^i)=-[\gamma^i(w^i+z^i)-z^i]$, where $\gamma^i$ is consumers $i$'s Engel curve and $w^i>0$ is his income. The network normality condition therein (i.e., $1+\frac{1}{\lambda_{\textup{min}}(\PP)}<(\gamma^i)'(\cdot)<1$) can  be rewritten as  $-\frac{1}{\lambda_{\textup{min}}(\PP)}>(f^i)'(\cdot)>0$, which implies that this is a game of strategic substitutes (i.e., $(f^i)'(\cdot)>0$) and that Assumption \ref{ass:gen}c is satisfied (i.e., $1-\max_{z^i} |(f^i)'(z^i)||\lambda_{\textup{min}}(\PP)|>0$).} 
To illustrate  the results in Corollary \ref{NL_gen} and \ref{NL_sub} we consider  Example \ref{competition} which nests the models discussed in \cite{belhaj2014network} and \cite{allouch2015private}.

\begin{example}[continues=competition] 
 The best response discussed in Example \ref{competition}  corresponds to a network game where each agent has cost function
$J^i(x^i,z^i)=\frac12(x^i)^2-(a^i+\phi_i(z^i))x^i$ 
and strategy set $\mathcal{X}^i=[a^i,b^i]$. Consequently,  $\kappa_1=1$, $\kappa_2=\max_i |\phi_i'(z^i)|$. By  Corollary \ref{NL_gen} and  Corollary \ref{NL_sub} existence and uniqueness can thus be guaranteed if
$
1 - \max_i |\phi_i'(z^i)| \min\{\|\PP\|_2,\|\PP\|_\infty\}>0$
 or, if additionally $G=G^\top$ and $\phi_i'(z^i)<0$, if
$
1 - \max_i |\phi_i'(z^i)|| \lambda_{\textup{min}}(\PP)|>0.$
Let us consider again the  special case with $a^i=a$, $b^i=b$, 
 $\phi_i(z^i)=\phi(z^i)=\gamma z^i(b-z^i),$
for all $i$ and network $\PP$ such that $\sum_j \PP_{ij}=1$. 
Recall that the parameter $\gamma>0$ models the level of network effects. For this particular choice of $\phi_i(z^i)$ and $\PP$, it turns out that $\kappa_2=\gamma \max_{z\in[a,b]}|b-2z|=\gamma b$ and $\|\PP\|_2\ge\|\PP\|_\infty=1$. Hence \eqref{eq:NG2} guarantees uniqueness of the equilibrium  for 
$\gamma < \frac{1}{b}.$

 To further understand the equilibrium structure we start by considering symmetric equilibria. Suppose $x^i=\bar x_\gamma$ for each $i$. Then it holds $z^i=\sum_{j} \PP_{ij} \bar x_\gamma=\bar x_\gamma$ for each $i$. Hence $\bar x(\gamma)=\bar x_\gamma \mathbbm{1}_N$  is an interior symmetric equilibrium  if and only if 
  $\phi(\bar x_\gamma)=\bar x_\gamma -  a.$
The  plot on the left of Figure \ref{fig:example2_eq} shows that such equation   admits a solution $\bar x_\gamma \in(a,b)$ for any $\gamma>0$. Hence for any level of network effects there exists a symmetric Nash equilibrium, where all the agents exert the same effort. From the uniqueness result above we know that for small  levels of network effects (i.e. $\gamma < \frac{1}{b}$), this is the unique Nash equilibrium. 

{  \cite{belhaj2014network}  shows that if $\phi_i'>0$ (i.e. the games features strategic complements) and satisfies some curvature conditions (e.g. $\phi_i$ is concave) then the Nash equilibrium is unique, no matter the level of network effects.} We next show that when the best response is instead non-monotone, for large values of $\gamma$, multiple equilibria may arise. To exemplify this point we consider a case with $N=2$ players, $G=G_c$ and we set $a=1,b=5$. The equilibria of this game are shown in Figure \ref{fig:example2_eq} (right). For $\gamma<0.4866$ there is a unique equilibrium which is symmetric ($x_S^1=x_S^2, $ black line).  This is consistent with our theory that guarantees a unique equilibrium for $\gamma< \frac{1}{b}=0.2$. The gap between the thresholds $\gamma=0.4866$ and $\gamma=0.2$ is due to the fact  that our condition  is sufficient but not necessary. For $\gamma>0.4866$ two additional asymmetric equilibria arise where agent 1 is highly engaged ($x^1_A$ blue dashed line) and agent 2 is scarcely engaged ($x^2_A$ blue dashed-dotted line) or viceversa. Note that for  $\gamma\ge\frac{1}{a}=1$ these two equilibria converge to fully and minimally engaged, that is $( x^1_A, x^2_A)=(a,b)$ or viceversa. Overall, we can conclude that when network effects are small the result of competition is to have two agents that are both equally engaged, if network effects are high there is instead the possibility of specialization with one agent clearly dominating over the other. { This result is qualitatively similar to Proposition 3 in \cite{bramoulle2014strategic}, where it was shown that, for the game of strategic substitutes considered therein, for large network effects there exists at least one Nash equilibrium with inactive agents.}
\begin{figure}[H]
\begin{center}
 \includegraphics[height=0.2\textwidth]{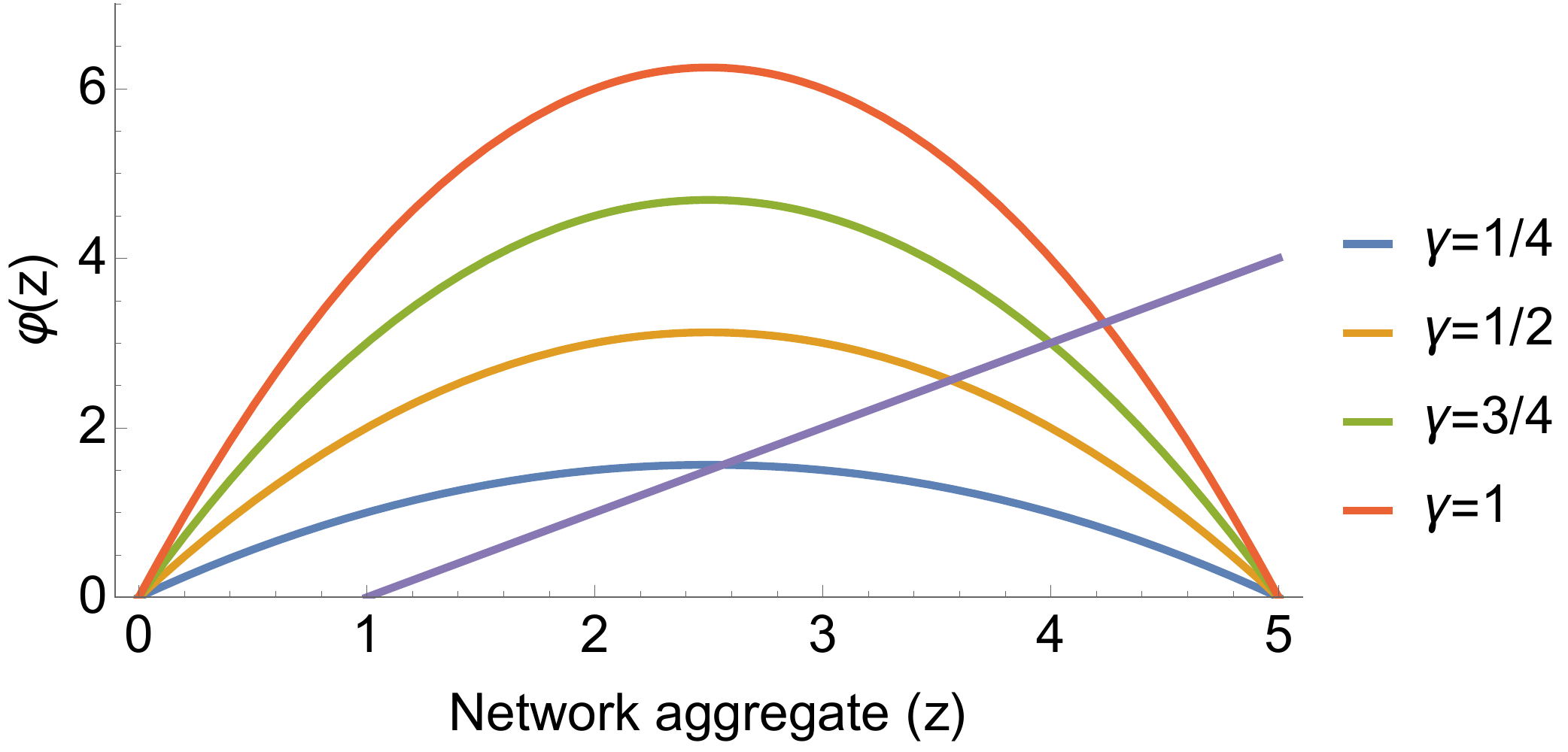}  \qquad
\includegraphics[height=0.2\textwidth]{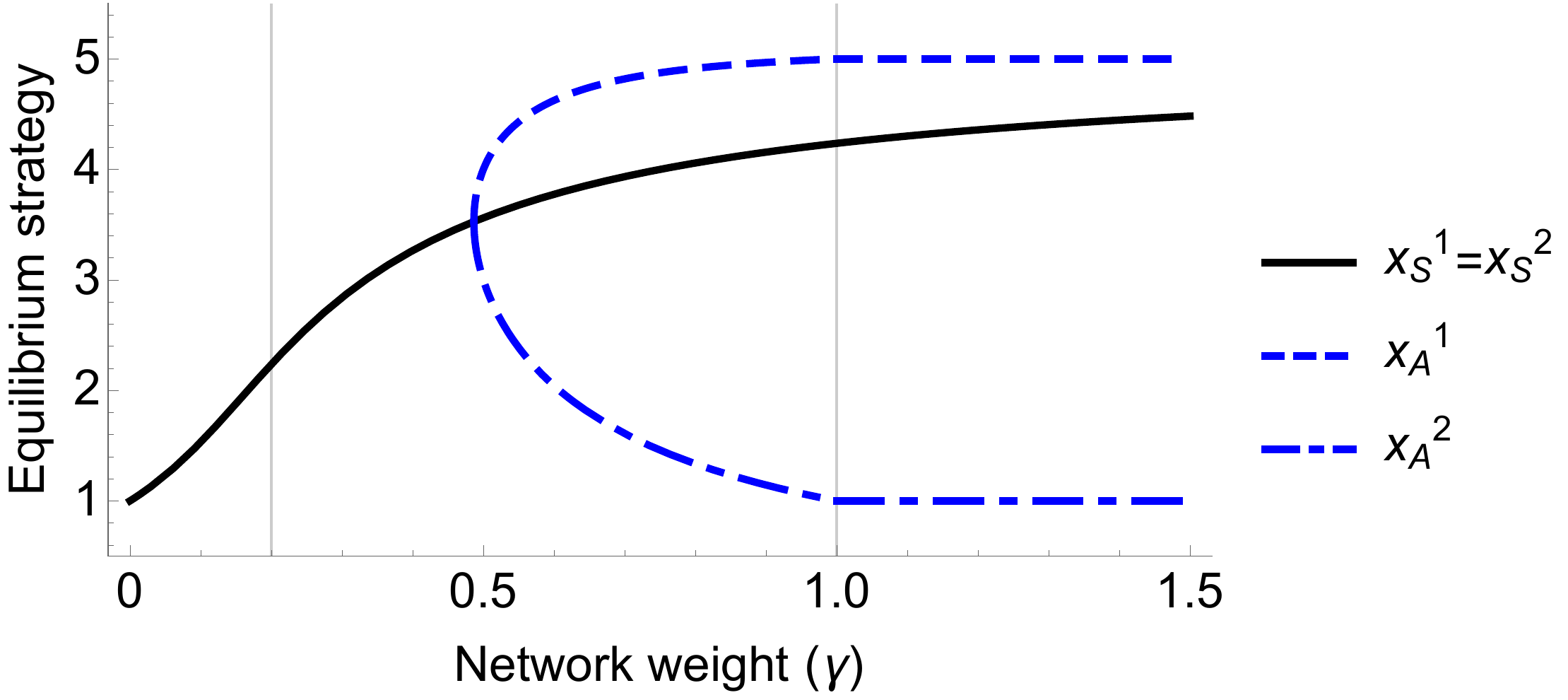}
\end{center}
\caption{ Left: $\phi(z)$ for different values of $\gamma$ and $z-a$. Right: Equilibria  of Example \ref{competition} with $N=2$, $G=G_c$, $a=1$, $b=5$. }
\label{fig:example2_eq}
\end{figure}
 \end{example}

\section{Best response dynamics}

\label{sec:br_dynamics}

 After characterizing the set of Nash equilibria, we here discuss how  agents can  reach such a configuration by iteratively playing their best responses.  In accordance with the previous literature we consider both continuous and discrete best response dynamics (BR dynamics). For continuous time we consider the scheme introduced in \cite{bramoulle2014strategic} and reported in Algorithm \ref{br_c}.\footnote{ We note that under Assumption 1, it follows by  the fact that $J^i(x^i,z^i)$ is strongly monotone in $x^i$, $\nabla_{x^i} J^i(x^i,z^i)$ is Lipschitz continuous in $z^i$ and \cite[Theorem 2.1]{dafermos1988sensitivity} that $B^i(z^i)=\arg\min_{x^i\in\mathcal{X}^i} J^i(x^i,z^i)$ is Lipschitz continuous in $z^i$. Hence, $B^i(z^i(x))-x^i$ is Lipschitz continuous in $x$ and Algorithm \ref{br_c} admits a unique solution $x(t)$ for all initial conditions and all $t\ge0$ by \cite[Theorem 2.3]{khalil1996noninear}.}

\begin{algorithm}
\caption{Continuous best response dynamics}\label{br_c}
\textbf{Set:}  $x^i(0)=x^i_0\in \mathcal{X}^i$, $z^i(0)=\sum_{j=1}^N \PP_{ij} x^j(0)$ \\
\textbf{Dynamics:}
\begin{align*}
\dot x^i(t)&= B^i(z^i(t))-x^i(t)=\left[\arg\min_{x^i\in\mathcal{X}^i} J^i(x^i,z^i(t))\right] -x^i(t) \qquad \forall i\in\N[1,N].
\end{align*}
\end{algorithm}

 For the case of discrete dynamics, we index by $k\in\N$ the time instants at which at least one agent is  updating its strategy. For each $i\in\N[1,N]$, we denote by $\mathcal{T}_i\subseteq \N$  the subset of  time instants at which agent $i$ updates its strategy.
In Algorithm \ref{br} we  consider  two variants of the discrete BR dynamics depending on whether the agents update their strategies simultaneously or sequentially, which correspond to $\mathcal{T}_i=\N$  and $\mathcal{T}_i=N (\N-1) +i$, respectively.

 \begin{algorithm}
\caption{Discrete best response dynamics}\label{br}
\textbf{Set:} $k=0$, $x^i_0\in \mathcal{X}^i$, $z^i_0=\sum_{j=1}^N \PP_{ij} x^j_0$. Set either  $\mathcal{T}_i=\N$ for all $i$ (simultaneous BR dynamics) or $\mathcal{T}_i=N (\N-1) +i$  for all $i$ (sequential BR dynamics).\\
\textbf{Iterate:}
\begin{align*}
x^i_{k+1}&= \begin{cases} B^i(z^i_k)= \arg\min_{x^i\in\mathcal{X}^i} J^i(x^i,z^i_k), & \mbox{if} \ k\in\mathcal{T}_i \\
x^i_{k}, & \mbox{otherwise} 
\end{cases}  \qquad \forall i\in\N[1,N]. 
\end{align*}
\end{algorithm}

 This type of dynamics have been used in the literature to refine the set of Nash equilibria by focusing on those that are asymptotically stable  for  the continuous/discrete BR dynamics.\footnote{ We refer to an equilibrium $x^\star$ as  asymptotically stable for some given dynamics  if there exists an open neighborhood $\mathcal{O}$ of  $x^\star$ such that  the  dynamics   asymptotically converge to $x^\star$ starting from  any  point $x_0\in\mathcal{O}$ and as globally asymptotically stable if this  holds for $\mathcal{O}=\mathcal{X}$. } Intuitively, this means that following a small change of the  strategies the agents can converge back to equilibrium by simply following their continuous/discrete BR dynamics. 

 It is well known that if the game has a strongly convex potential  then the continuous BR dynamics (as well as the sequential discrete BR dynamics) globally converge to the unique Nash equilibrium,   see Lemma \ref{lemma:potential}. The intuition behind this result is simple: since each strategy update  leads to a decrease of the potential function, the BR dynamics must converge to the unique minimum of such potential which is the unique Nash equilibrium. From \cite[Lemma 4.4]{monderer:shapley:96} it follows that a game is potential if and only if  there exists $U(x)$ such that  $\nabla_xU(x)=F(x)$. It is then easy to see that such potential $U(x)$ is strongly convex if and only if $F(x)$ is strongly monotone. In other words, strongly convex potential games are a subclass of strongly monotone games. This motivates the question of whether the preceding convergence results can be generalized to strongly monotone games that are not potential. 

The answer to this question depends on the type of dynamics considered (continuous vs discrete). For continuous BR dynamics note that, in a potential game, $U(x)$ acts as a Lyapunov function \cite[Section 7.1.1]{sandholm2010population}. In fact let $B(x)$ be the vector of best responses to the strategy vector $x$ (i.e. $[B(x)]_i=B^i(z^i(x))$) and define $d(x):= B(x)-x$, then
$$\dot{U}(x(t))=\nabla_x U(x(t))^\top \dot x(t)= F(x(t))^\top d(x(t))$$  and it follows by the definition of best response (see Lemma \ref{lemma:descent}) that  for any $x\in\mathcal{X}$
\begin{equation}\label{descent}
F(x)^\top d(x) \le - \kappa_1 \|d(x)\|_2^2.
\end{equation}
In other words, $d(x)$ is a descent direction for the potential function $U(x)$.

If the game is not potential then  $F(x)$ cannot be seen as a gradient of any function.  Hence while \eqref{descent} still holds, the interpretation of $d(x)$ as a descent direction  is lost. In the following theorem we show that, however, if $F(x)$ is strongly monotone and the cost function has the following form
\begin{equation}\label{br:cost}
J^i(x^i,z^i):=\frac12 \|x^i\|^2_{Q^i}+f^i(z^i)^\top x^i
\end{equation}
for $Q^i=(Q^i)^\top\succ 0$ and $f^i:\R^n\rightarrow \R^n$, 
then a similar argument can be made by using the Lyapunov  function
\begin{equation}\label{br:pot} 
\tilde U(x):=F(x)^\top (x-B(x))-\frac12 \|x-B(x)\|^2_Q \quad
\end{equation}
where $Q:=\textup{blkd}[Q^i]_{i=1}^N$.
The reason why we consider cost functions as in \eqref{br:cost} is that, in this case,    the best response mapping solves a quadratic program in $x^i$  and can thus be rewritten as a projection step, that is, 
\begin{equation}\label{eq:br:closed}
B(x)=\Pi_{\mathcal{X}}^Q[x-Q^{-1}F(x)]
\end{equation}
as shown in Lemma \ref{br_closed}. We note that the cost structure in \eqref{br:cost} is general enough to include all our motivating Examples \ref{ex:lq} to \ref{ex:multiple}.  Formula \eqref{eq:br:closed} allows us to show that $d(x)$ is a descent direction for $ \tilde U(x) $.
\begin{lemma}\label{lemma:descent_smon}
Consider a network game satisfying Assumption~\ref{cost} and with agent cost functions as in \eqref{br:cost}. Suppose that $F(x)$ is strongly monotone with constant $\alpha$ then 
$$\nabla_x \tilde U(x)^\top d(x)\le -\alpha \|d(x)\|^2_2.$$
\end{lemma}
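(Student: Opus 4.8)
The plan is to recognize the Lyapunov candidate $\tilde U$ in \eqref{br:pot} as the (matrix-weighted) regularized gap function of VI$(\mathcal{X},F)$ and to differentiate it via an envelope argument. Set $\phi(x,y):=F(x)^\top(x-y)-\frac12\|x-y\|_Q^2$. Since $Q=Q^\top\succ0$, the inner objective $\phi(x,\cdot)$ is strongly concave (its $y$-Hessian is $-Q\prec0$), so for every $x\in\mathcal{X}$ the maximum $\max_{y\in\mathcal{X}}\phi(x,y)$ is attained at the unique point $\Pi_{\mathcal{X}}^Q[x-Q^{-1}F(x)]$, which by Lemma \ref{br_closed} (cf. \eqref{eq:br:closed}) equals $B(x)$. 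Hence $\tilde U(x)=\max_{y\in\mathcal{X}}\phi(x,y)$. First I would invoke the envelope theorem for this strongly concave inner maximization, which yields $\nabla_x\tilde U(x)=\nabla_x\phi(x,y)|_{y=B(x)}$, i.e. the gradient is the partial gradient of $\phi$ in its first argument at the maximizer, with no need to differentiate the projection $B(\cdot)$. Writing $d(x)=B(x)-x$ and using $x-B(x)=-d(x)$ together with $Q=Q^\top$, this gives
\[
\nabla_x\tilde U(x)=F(x)-\nabla_xF(x)^\top d(x)+Q\,d(x).
\]

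Next I would form the target inner product and regroup the three resulting terms:
\[
\nabla_x\tilde U(x)^\top d(x)=-\,d(x)^\top\nabla_xF(x)\,d(x)+\big[\,F(x)^\top d(x)+\|d(x)\|_Q^2\,\big].
\]
The first term is handled by strong monotonicity. Since $F$ is continuously differentiable, Proposition \ref{prop:suff}a gives $\tfrac{\nabla_xF(x)+\nabla_xF(x)^\top}{2}\succeq\alpha I$, and because $d^\top\nabla_xF\,d=d^\top\tfrac{\nabla_xF+\nabla_xF^\top}{2}d$ (a scalar equals its own transpose), we get $-\,d(x)^\top\nabla_xF(x)\,d(x)\le-\alpha\|d(x)\|_2^2$.

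It then remains to show the bracketed term is nonpositive, which is where the projection characterization of $B(x)$ enters. For $p=\Pi_{\mathcal{X}}^Q[u]$ one has $(u-p)^\top Q(w-p)\le0$ for every $w\in\mathcal{X}$; applying this with $u=x-Q^{-1}F(x)$, $p=B(x)$, and the admissible choice $w=x\in\mathcal{X}$, then using $x-B(x)=-d(x)$ and the symmetry of $Q^{-1}$, I would obtain $\|d(x)\|_Q^2+F(x)^\top d(x)\le0$, that is exactly $F(x)^\top d(x)+\|d(x)\|_Q^2\le0$. Combining the two bounds gives $\nabla_x\tilde U(x)^\top d(x)\le-\alpha\|d(x)\|_2^2$, as claimed.

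The main obstacle is the first step: justifying that $\tilde U$ is continuously differentiable with the stated gradient. This is the classical fact that a regularized gap function with a positive definite weight is $C^1$; the delicate point is that the inner maximizer $B(x)$ depends on $x$ and the projection is in general nondifferentiable, so one cannot naively differentiate under the max through the chain rule. The strong concavity of $\phi(x,\cdot)$, ensuring a unique attained maximizer, together with the continuous differentiability of $F$ (guaranteed by Assumption \ref{cost} and the cost structure \eqref{br:cost}), are precisely what make the envelope theorem applicable, so that the first-order effect of the dependence of $B(x)$ on $x$ vanishes and the clean gradient formula holds. Once this is in place, everything else is the short rearrangement above.
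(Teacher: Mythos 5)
Your proof is correct and follows essentially the same route as the paper: both identify $\tilde U$ as the regularized gap function $-\min_{y\in\mathcal{X}}h(x,y)$, obtain $\nabla_x\tilde U(x)=F(x)-\nabla_x F(x)^\top d(x)+Qd(x)$ by a Danskin/envelope argument (the paper leans on \cite{fukushima1992equivalent} for this), and then split $\nabla_x\tilde U(x)^\top d(x)$ into the strong-monotonicity term and the term $F(x)^\top d(x)+\|d(x)\|_Q^2\le 0$. The only cosmetic difference is that you derive the latter inequality from the weighted-projection variational inequality, whereas the paper gets it from the minimum principle in the proof of Lemma \ref{lemma:descent}; these are equivalent.
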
 

An immediate consequence of  Lemma \ref{lemma:descent_smon} is the convergence of the continuous BR dynamics under strong monotonicity.
 \begin{theorem}\label{thm:br_smon}
Consider a network game satisfying Assumption~\ref{cost}  and with agent cost functions as in \eqref{br:cost}. Suppose that $F$ is strongly monotone. Then for any $x_0\in\mathcal{X}$ the sequence $\{x(t)\}_{t\ge0}$ generated by Algorithm~\ref{br_c} converges to the unique Nash equilibrium of $\mathcal{G}$. 
\end{theorem}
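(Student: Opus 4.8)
The plan is to use the function $\tilde U$ defined in \eqref{br:pot} as a strict Lyapunov function for the flow of Algorithm~\ref{br_c} and to conclude via the global asymptotic stability theorem. First, since $F$ is strongly monotone, Proposition~\ref{prof:ex}a) combined with Proposition~\ref{prop:kkt} guarantees that $\mathcal G$ has a unique Nash equilibrium $x^\star$, equal to the unique solution of VI$(\mathcal X,F)$; equivalently, by \eqref{eq:br:closed}, $x^\star$ is the unique fixed point of the best response map, $B(x^\star)=x^\star$. The footnote preceding Algorithm~\ref{br_c} already supplies, under Assumption~\ref{cost}, a unique solution $x(t)$ defined for all $t\ge0$. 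Moreover, writing $d(x):=B(x)-x$, the vector field $\dot x^i=B^i(z^i)-x^i$ points from $x^i\in\mathcal X^i$ toward $B^i(z^i)\in\mathcal X^i$, so since each $\mathcal X^i$ is closed and convex a Nagumo-type argument shows that $\mathcal X$ is forward invariant and $x(t)\in\mathcal X$ for all $t\ge0$.

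Next I would record the structural properties of $\tilde U$. By \eqref{eq:br:closed} the quantity in \eqref{br:pot} is exactly the regularized gap function of VI$(\mathcal X,F)$ in the weighted norm $\|\cdot\|_Q$, namely $\tilde U(x)=\max_{y\in\mathcal X}\{F(x)^\top(x-y)-\tfrac12\|x-y\|_Q^2\}$ with maximizer $y=B(x)$. Choosing $y=x$ shows $\tilde U(x)\ge0$, and $\tilde U(x)=0$ if and only if $d(x)=0$, i.e. if and only if $x=x^\star$. The quantitative fact I need is that $\tilde U$ is positive definite and radially unbounded relative to $x^\star$. This is the step that is not completely routine: the crude bound obtained by inserting $y=x^\star$ gives only $\tilde U(x)\ge\alpha\|x-x^\star\|_2^2-\tfrac12\|x-x^\star\|_Q^2$, which need not be coercive since one may have $\alpha\le\tfrac12\lambda_{\textup{max}}(Q)$. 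Instead I would evaluate the maximum over the feasible segment $y=x+t(x^\star-x)\in\mathcal X$, $t\in[0,1]$, and use strong monotonicity together with $F(x^\star)^\top(x-x^\star)\ge0$ (the VI condition at $x^\star$ with feasible point $x$) to obtain $F(x)^\top(x-x^\star)\ge\alpha\|x-x^\star\|_2^2$; optimizing the resulting scalar quadratic in $t$ over $[0,1]$ then yields a genuinely coercive bound $\tilde U(x)\ge c\|x-x^\star\|_2^2$ with $c>0$. Consequently the sublevel sets of $\tilde U$ restricted to $\mathcal X$ are bounded, hence compact.

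Finally I would differentiate along trajectories. Because $\dot x(t)=d(x(t))$, Lemma~\ref{lemma:descent_smon} gives $\frac{d}{dt}\tilde U(x(t))=\nabla_x\tilde U(x(t))^\top d(x(t))\le-\alpha\|d(x(t))\|_2^2\le0$, with equality precisely when $d(x(t))=0$, that is, when $x(t)=x^\star$. Thus $\tilde U$ is nonincreasing along the flow, the trajectory stays in the compact sublevel set $\{x\in\mathcal X:\tilde U(x)\le\tilde U(x_0)\}$, and $\frac{d}{dt}\tilde U$ is negative definite relative to $x^\star$. Combining nonnegativity and coercivity of $\tilde U$ with this strict decrease, the standard global asymptotic stability theorem (Barbashin--Krasovskii / Lyapunov) yields $x(t)\to x^\star$ as $t\to\infty$ for every initial condition $x_0\in\mathcal X$.

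I expect the coercivity of $\tilde U$ to be the main obstacle: compactness of its sublevel sets is exactly what upgrades the pointwise Lyapunov decrease supplied by Lemma~\ref{lemma:descent_smon} into \emph{global} convergence, and securing it requires the segment-optimization refinement above rather than the one-line bound at $y=x^\star$. The remaining ingredients --- forward invariance of $\mathcal X$, nonnegativity of $\tilde U$, and the descent inequality --- are either standard or already established.
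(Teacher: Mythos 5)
Your proof is correct and follows the same route as the paper: the paper's own argument also takes $\tilde U$ from \eqref{br:pot} as a Lyapunov function, shows $\tilde U(x)\ge \tfrac12\|d(x)\|_Q^2\ge 0$ with equality if and only if $d(x)=0$, i.e. if and only if $x=x^\star$, and invokes Lemma \ref{lemma:descent_smon} for the decrease $\dot{\tilde U}(x(t))\le -\alpha\|d(x(t))\|_2^2$. The one place where you go beyond the paper is the coercivity step: the paper stops at positive definiteness of $\tilde U$ relative to $x^\star$ and declares the conclusion ``immediate,'' whereas you correctly observe that when $\mathcal X$ is unbounded one also needs compactness of the sublevel sets of $\tilde U$ on $\mathcal X$ to upgrade the pointwise decrease to global convergence. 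Your segment bound $\tilde U(x)\ge c\|x-x^\star\|_2^2$ --- obtained from $F(x)^\top(x-x^\star)\ge\alpha\|x-x^\star\|_2^2$ (strong monotonicity plus the VI inequality at $x^\star$ tested at $x$) and then optimizing the scalar quadratic in $t$ over $[0,1]$ --- supplies exactly that and is a genuine, worthwhile tightening of the published argument. The remaining ingredients (forward invariance of $\mathcal X$, uniqueness of the equilibrium, the descent inequality) match the paper's proof.
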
 
It is important to remark that $\tilde U(x)$ as defined in \eqref{br:pot} is a Lyapunov function since $\tilde U(x)$  decreases along the continuous trajectory $x(t)$, but it is not a potential for the game as defined in \cite{monderer:shapley:96}. For this reason, strong monotonicity is not sufficient to guarantee convergence of the discrete BR dynamics. To see this, note that, when the cost function is as in \eqref{br:cost},  the discrete BR dynamics coincide, by \eqref{eq:br:closed}, with the projection algorithm
$$x_{k+1}=\Pi_{\mathcal{X}}^Q[x_k-\tau Q^{-1}F(x_k)]$$ 
for the step choice $\tau=1$. If  $F$ is strongly monotone  the projection algorithm converges when $\tau$ is small  but not necessarily for $\tau=1$, see e.g.  \cite{facchinei2007finite}.
A counter-example, illustrating that  strong monotonicity alone (without the potential structure) is not sufficient to guarantee convergence of the discrete BR dynamics  (either simultaneous or sequential) is given in Example \ref{ex:br}.

Interestingly, Example \ref{ex:br} shows that strong monotonicity is not enough to guarantee convergence of the discrete \textit{simultaneous} BR dynamics  even when the game is potential. The reason is that while in sequential BR dynamics one can guarantee that each step decreases the potential, in the case of simultaneous moves this is no longer true. 
In \cite[Theorem 10]{scutari2014real} it is shown that convergence of the discrete simultaneous BR dynamics can instead be guaranteed  by using the $P_\Upsilon$ property. The reason why this result holds is fundamentally different from the argument used in potential games. In fact it is not based on the existence of a Lyapunov function, but convergence is instead guaranteed by showing that the best response mapping is a   block-contraction, as recalled in the following lemma.
 \begin{lemma} \cite[Proposition 42]{scutari2014real} \label{lemma:br_block}
 Consider a  game for which $\nabla_x F(x)$ satisfies the $P_\Upsilon$ condition, then the best response mapping $B(x)$ is a block-contraction, that is, there exist  $c\in\R^N_{>0}$  and  $0<\delta_c<1$ such that 
 $$\max_i \frac{\|B^i(x)-B^i(y)\|_2}{c_i} \le \delta_c \max_i \frac{\|x^i-y^i\|_2}{c_i} \qquad \mbox{for all} \ x,y \in\mathcal{X}.$$
 \end{lemma}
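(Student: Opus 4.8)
The plan is to establish the block-contraction in two stages: first a per-agent Lipschitz estimate that bounds the displacement of agent $i$'s best response by a weighted combination of the displacements of the other agents' strategies, and then an aggregation of these $N$ scalar inequalities into a single weighted maximum-norm contraction, using the $M$-matrix structure that the $P_\Upsilon$ condition confers on $\Upsilon$. Note first that $\kappa_{i,i}>0$, since each $1\times 1$ principal minor of the $P$-matrix $\Upsilon$ is positive; hence each $J^i(\cdot,z^i)$ is strongly convex in its own argument and the best responses $B^i(z^i(x))$ are well defined and single-valued.

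For the first stage I would fix $x,y\in\mathcal{X}$ and set $\bar x^i:=B^i(z^i(x))$ and $\bar y^i:=B^i(z^i(y))$. Because the diagonal block $A_{i,i}(x)=\nabla^2_{x^ix^i}J^i$ satisfies $\lambda_{\textup{min}}(A_{i,i}(x))\ge\kappa_{i,i}$ uniformly in $x$, the operator $\nabla_{x^i}J^i(\cdot,z^i)$ is strongly monotone with modulus $\kappa_{i,i}$, and each $\bar x^i,\bar y^i$ is the unique solution of the associated VI on $\mathcal{X}^i$. Testing the VI characterising $\bar x^i$ against $\bar y^i$ and the one for $\bar y^i$ against $\bar x^i$, adding, and splitting the resulting inner product into a term bounded below by strong monotonicity (contributing $\kappa_{i,i}\|\bar x^i-\bar y^i\|_2^2$) and a term measuring the change of the gradient in its second argument, I obtain after Cauchy--Schwarz and division by $\|\bar x^i-\bar y^i\|_2$
$$\kappa_{i,i}\|\bar x^i-\bar y^i\|_2\le\big\|\nabla_{x^i}J^i(\bar y^i,z^i(x))-\nabla_{x^i}J^i(\bar y^i,z^i(y))\big\|_2.$$
Bounding the right-hand side by the Lipschitz constant $\kappa_2^i=\sup_x\|K^i(x)\|_2$ of the gradient in $z^i$, using $z^i(x)-z^i(y)=\sum_j G_{ij}(x^j-y^j)$ with the triangle inequality, and recalling $\kappa_{i,j}=\kappa_2^i G_{ij}$, yields the key per-agent estimate
$$\kappa_{i,i}\,\|B^i(z^i(x))-B^i(z^i(y))\|_2\le\sum_{j\neq i}\kappa_{i,j}\,\|x^j-y^j\|_2,\qquad i\in\N[1,N].$$

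For the second stage I would invoke the $P_\Upsilon$ condition directly: by construction $\Upsilon$ has nonpositive off-diagonal entries $-\kappa_{i,j}$, so it is a $Z$-matrix, and it is assumed to be a $P$-matrix; a $Z$-matrix that is a $P$-matrix is a nonsingular $M$-matrix, and therefore admits a strictly positive vector $c\in\R^N_{>0}$ with $(\Upsilon c)_i>0$ for all $i$, i.e. $\sum_{j\neq i}\kappa_{i,j}c_j<\kappa_{i,i}c_i$. Setting $\delta_c:=\max_i\big(\sum_{j\neq i}\kappa_{i,j}c_j\big)/(\kappa_{i,i}c_i)\in(0,1)$ and $M:=\max_j\|x^j-y^j\|_2/c_j$, the per-agent estimate gives $\kappa_{i,i}\|B^i(z^i(x))-B^i(z^i(y))\|_2\le M\sum_{j\neq i}\kappa_{i,j}c_j\le M\,\delta_c\,\kappa_{i,i}c_i$; dividing by $\kappa_{i,i}c_i$ and maximising over $i$ produces exactly the claimed contraction in the weighted block-maximum norm with factor $\delta_c$.

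The VI sensitivity argument of the first stage is routine. The crux is the second stage, specifically the existence of the scaling vector $c$: this is where the $P$-matrix hypothesis is essential, and it is supplied by the equivalence, for $Z$-matrices, between being a $P$-matrix and being a nonsingular $M$-matrix, together with the standard characterisation of the latter as those $Z$-matrices that map some strictly positive vector to a strictly positive vector. Without this positive diagonal scaling the $N$ scalar estimates cannot in general be combined into a single contraction, so identifying $c$ — and hence the weight defining the block norm — is the main obstacle.
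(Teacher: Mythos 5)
Your proof is correct and follows the same route as the cited source \cite[Proposition 42]{scutari2014real}, which the paper does not reprove but relies on: your stage-one estimate $\kappa_{i,i}\|B^i(z^i(x))-B^i(z^i(y))\|_2\le\sum_{j\neq i}\kappa_{i,j}\|x^j-y^j\|_2$ is exactly the inequality the paper quotes from that proof inside its proof of Theorem \ref{cor:br}, and your stage-two aggregation via the $Z$-matrix/$P$-matrix $\Rightarrow$ nonsingular $M$-matrix $\Rightarrow$ existence of $c>0$ with $\Upsilon c>0$ is the standard scaling argument that yields the weighted block-maximum contraction. No gaps.
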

{ The $P_\Upsilon$ condition  can be used to prove  convergence of any of the algorithms above, as detailed next. }
\begin{theorem}\label{cor:br}
Consider a network game satisfying Assumption~\ref{cost}  and suppose that $\nabla_x F(x)$ satisfies the $P_\Upsilon$ condition. Then for any $x_0\in\mathcal{X}$  the sequence $\{x(t)\}_{t\ge0}$ generated by Algorithm~\ref{br_c} and the sequence $\{x_k\}_{k=0}^\infty$ generated by Algorithm~\ref{br} converge to the unique Nash equilibrium of $\mathcal{G}$.
\end{theorem}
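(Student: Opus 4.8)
The plan is to obtain all three convergence statements from a single structural consequence of the $P_\Upsilon$ condition, namely that the best response map $B$ is a block-contraction. I first record the preliminaries. Since the $P_\Upsilon$ condition makes $F$ a uniform block P-function (Proposition \ref{prop:suff}b), Propositions \ref{prop:kkt} and \ref{prof:ex}b guarantee a unique Nash equilibrium $x^\star$; moreover the diagonal blocks $\nabla^2_{x^ix^i}J^i(x)\succ0$ required by Definition \ref{def:P_up} make each $J^i$ strongly convex in $x^i$, so $B$ is single valued and $x^\star$ is its unique fixed point, $B(x^\star)=x^\star$. By Lemma \ref{lemma:br_block} there exist $c\in\R^N_{>0}$ and $0<\delta_c<1$ with $B$ a contraction in the weighted block-maximum norm $\|x\|_c:=\max_{i}\|x^i\|_2/c_i$. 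The three dynamics are then handled separately, all exploiting $\delta_c<1$.

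For Algorithm \ref{br} in the simultaneous mode the iteration is exactly the Picard scheme $x_{k+1}=B(x_k)$, so Banach's contraction principle in $\|\cdot\|_c$ yields $x_k\to x^\star$. For the sequential (Gauss--Seidel) mode I would invoke the asynchronous convergence theorem of Bertsekas and Tsitsiklis: the crucial feature is that the unit balls of $\|\cdot\|_c$ are boxes (Cartesian products of Euclidean balls), so the nested sublevel sets $X(k):=\{x:\|x-x^\star\|_c\le\delta_c^{\,k}\|x_0-x^\star\|_c\}$ satisfy the box condition and the synchronous contraction condition $B(X(k))\subseteq X(k+1)$; since sequential updating is a particular asynchronous iteration, it converges to $x^\star$.

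For the continuous dynamics of Algorithm \ref{br_c} I would use the block-maximum norm itself as a nonsmooth Lyapunov function $V(x):=\|x-x^\star\|_c$. Setting $w_i(t):=\|x^i(t)-x^{\star i}\|_2$ and differentiating the smooth quantity $\tfrac12 w_i^2$ along $\dot x^i=B^i(x)-x^i=(B^i(x)-B^i(x^\star))-(x^i-x^{\star i})$, Cauchy--Schwarz and the contraction bound $\|B^i(x)-B^i(x^\star)\|_2\le c_i\,\delta_c\,V(x)$ give $w_i\dot w_i\le w_i\,c_i\,\delta_c V - w_i^2$, hence $\tfrac{d}{dt}(w_i/c_i)\le \delta_c V - w_i/c_i$ wherever $w_i>0$. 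Bounding the upper Dini derivative of a pointwise maximum by the maximum over the active indices then gives $D^+V\le(\delta_c-1)V$, and the comparison lemma produces $V(x(t))\le V(x(0))e^{-(1-\delta_c)t}\to 0$, so $x(t)\to x^\star$.

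I expect the continuous case to be the main obstacle, since it is the only one that needs an argument beyond a cited fixed-point or asynchronous-convergence theorem: the block-maximum norm is nonsmooth both where two maximizing blocks tie and where a block already coincides with its equilibrium value, so the decrease must be established through Dini derivatives and the max-of-smooth-functions calculus rather than an ordinary gradient. A subsidiary point to verify is global well-posedness of the ODE --- supplied by the Lipschitz continuity of $B$ noted in the footnote to Algorithm \ref{br_c} --- which ensures that $x(t)$ exists for all $t\ge0$ and remains in $\mathcal{X}$, a prerequisite for the comparison argument.
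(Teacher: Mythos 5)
Your proposal is correct. For the two discrete modes it follows essentially the same route as the paper: the paper also derives convergence of Algorithm~\ref{br} from the block-contraction property of Lemma~\ref{lemma:br_block} together with the asynchronous convergence theorem of Bertsekas--Tsitsiklis (via \cite[Theorem 10]{scutari2014real}), which covers both simultaneous and sequential updates at once. Where you genuinely diverge is the continuous case. You take the weighted block-maximum norm $V(x)=\max_i\|x^i-x^{\star i}\|_2/c_i$ itself as a nonsmooth Lyapunov function, feed the contraction bound $\|B^i(x)-B^i(x^\star)\|_2\le c_i\delta_c V(x)$ into $w_i\dot w_i$, and close the argument with Dini derivatives and the comparison lemma to get $V(t)\le V(0)e^{-(1-\delta_c)t}$. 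The paper instead goes back inside the proof of Lemma~\ref{lemma:br_block} to extract the row-wise bound $\|B^i(z^i)-B^i(\bar z^i)\|_2\le\kappa_{i,i}^{-1}\sum_{j\neq i}\kappa_{i,j}\|x^j-\bar x^j\|_2$, obtains the componentwise differential inequality $\dot v_i\le-\tfrac{2}{\kappa_{i,i}}\sqrt{v_i}\,[\Upsilon\sqrt{v}]_i$ for $v_i=\tfrac12\|x^i-\bar x^i\|_2^2$, and then uses the fact that $\Upsilon^\top$ is simultaneously a P-matrix and a Z-matrix to invoke a diagonal-stability result (existence of $\Delta\succ0$ diagonal with $\Upsilon^\top\Delta+\Delta\Upsilon\succeq\delta I$), which makes the smooth weighted sum $L(v)=\sum_i\tfrac{\kappa_{i,i}}{2}\Delta_{ii}v_i$ a Lyapunov function. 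Your version buys an explicit exponential rate $1-\delta_c$ and needs only the already-cited contraction lemma as a black box, at the price of the nonsmooth max-of-functions calculus (which you correctly identify and handle, including the degenerate points where a block coincides with its equilibrium value); the paper's version stays within smooth Lyapunov analysis but requires the extra diagonal-stability ingredient. Both are complete.
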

We note that convergence of the discrete BR dynamics follows immediately from \cite[Theorem 10]{scutari2014real} where  convergence is proven for general games (i.e. not necessarily network games) and for both simultaneous and sequential updates (in fact, also random  updates with delays can be considered, see Algorithm 1 therein). The result on convergence of the continuous BR dynamics under the $P_\Upsilon$ condition  is instead new.

It is important to stress that the block contraction property used in Theorem \ref{cor:br} to prove convergence of the discrete BR dyanmics does not necessarily hold under strong monotonicity (see Example \ref{ex:br2}). A number of alternative discrete time algorithms have been suggested in the literature to guarantee convergence to the Nash equilibrium when the game Jacobian is strongly monotone. These schemes typically depend on some tuning parameters and are  meant for applications where there is flexibility in tuning the  algorithm or the update of the agents. For this reason, we do not discuss these algorithms here and we instead refer the interested reader to \cite{chen2014autonomous,koshal2012gossip,koshal2016distributed} where the use of distributed projection algorithms is discussed, to \cite{scutari2014real} where a regularized version of the best response dynamics is presented and to \cite{parise2015network,parise2015networkA} where convergence of the best response dynamics for strongly monotone linear quadratic network games is achieved by assuming that the agents respond to a filtered version of the neighbor aggregate. Note that Theorems \ref{thm:norm} and \ref{thm:min} in this work give sufficient conditions in terms of the network to  guarantee strong monotonicity and thus convergence of such schemes.

 A summary of the results discussed so far in this section is provided in the following table.

\begin{table}[H]
\begin{center}
\resizebox{0.95\textwidth}{!}{
\begin{tabular}{|c|c|c|c|}\hline
  \vphantom{\Large{{ \ding{51}}}}& \textbf{Potential game} & \textbf{Strongly monotone } & \hspace{1cm}  $\boldsymbol{P_\Upsilon}$ \textbf{condition} \hspace{1cm}  \\ \hline
\vphantom{\Large{{ \ding{51}}}}\textbf{Continuous BR dynamics} &\large{ \large{ \ding{51}}}&  \large{ \ding{51}}&\large{ \large{ \ding{51}}}    \\
{Algorithm \ref{br_c} } &{ \small (\cite{sandholm2010population})}&{\small( if cost as in \eqref{br:cost} - Thm. \ref{thm:br_smon})} &{\small(Thm. \ref{cor:br})}  \\ \hline
\vphantom{\Large{{ \ding{51}}}} \textbf{Discrete sequential }&\large{ \ding{51}} &\large{ \ding{55}}&\large{ \ding{51}} \\
{Alg. \ref{br} with $\mathcal{T}_i=N (\N-1) +i$ }  & { \small (\cite{monderer:shapley:96})}& {\small(Ex. \ref{ex:br}B)}& { \small(\cite{scutari2014real} - see Thm. \ref{cor:br})  } \\ \hline
\vphantom{\Large{{ \ding{55}}}} \textbf{Discrete simultaneous }&\large{ \ding{55}} &\large{ \ding{55}}&\large{ \ding{51}}\\
{Alg. \ref{br} with $\mathcal{T}_i=\N$}  & {\small(Ex. \ref{ex:br}A)}& {\small(Ex. \ref{ex:br}B)}& {\small(\cite{scutari2014real} - see Thm. \ref{cor:br})  } \\ \hline
\end{tabular}}
\end{center}
\caption{ Summary of the relation between properties of $F$ and convergence of the BR dynamics. }
\label{fig:summary4}
\end{table}%

By combining  Table \ref{fig:summary4} with the results in Table \ref{fig:summary1} one immediately gets that if either Assumption \ref{ass:gen}a) or Assumption \ref{ass:gen}b) hold then the ${P_\Upsilon}$ {condition} is met (by Theorems \ref{thm:norm} and \ref{thm:inf}) and both Algorithms \ref{br_c} and \ref{br} converge by Theorem \ref{cor:br}. 
For games of strategic substitutes with symmetric networks, one can consider also Assumption \ref{ass:gen}c). Theorems \ref{thm:br_smon} and  \ref{thm:min} guarantee that, if the cost function  is as in \eqref{br:cost} and $K^i(x)=\tilde K(x)\succeq 0\ \forall i$, then the continuous BR dynamics (Algorithm \ref{br_c}) converge but Example \ref{ex:br}A) shows that the discrete BR dynamics (Algorithm \ref{br}) may not.  The only case of Table \ref{fig:summary1} which is not covered by the previous theory is the case when  $n=1$ and $K^i(x) > 0$. In this case $F$ is a P-function hence none of the previous conditions apply. We next show that nonetheless it is possible to prove convergence of the continuous BR dynamics. Example \ref{ex:br}A) instead shows that the discrete BR dynamics might  not converge.

 \begin{theorem}\label{lemma:br_sub}
Consider a scalar network game satisfying Assumption~\ref{cost}  and with cost function as in \eqref{br:cost} with $Q^i=1$ for all $i\in\N[1,N]$. Assume that $G=G^\top$ and  that $\nabla_x F(x^\star)$ is a P-matrix. 
Moreover, suppose that  the equilibrium  is non degenerate, that is, if $x^{\star,i}\in\partial \mathcal{X}^i$ then $F^i( x^\star)\neq 0$.\footnote{ Note that the assumption $Q^i=1$ is without loss of generality because in the scalar case $Q^i$ is a positive number, hence one can always consider an equivalent game with cost functions $\tilde J^i(x)=\frac{J^i(x)}{Q^i}$. The assumption that $\nabla_x F(x^\star)$ is a P-matrix is weaker than the assumption that $F$ satisfies the uniform P-matrix condition and is guaranteed under the assumptions of Theorem \ref{thm:scalar}. Finally,  the assumption that the equilibrium  is non degenerate corresponds to the assumption that the strict complementarity condition holds, see Assumption \ref{ass:diff}c) and Appendix \ref{appendix:comp} for a related discussion.}
 Then $x^\star$ is  locally asymptotically stable for the continuous BR dynamics given in Algorithm \ref{br_c}.
\end{theorem}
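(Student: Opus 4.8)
The plan is to apply Lyapunov's indirect method: since the hypotheses impose a P-matrix condition precisely at $x^\star$, I would linearize the continuous best response dynamics $\dot x = B(x)-x$ about $x^\star$ and show the resulting system matrix is positive stable (all eigenvalues in the open right half-plane), which yields local asymptotic stability. The first step is to dispose of the nonsmoothness of $B$. Writing the best response in projected form \eqref{eq:br:closed} with $Q=I$ gives $B^i(z^i(x))=\Pi_{\mathcal{X}^i}[x^i-F^i(x)]$. For each agent the non-degeneracy (strict complementarity) assumption freezes the active set locally: if $x^{\star,i}$ is interior then $F^i(x^\star)=0$ and $x^i-F^i(x)$ stays interior near $x^\star$, so $B^i(x)=x^i-F^i(x)$; if $x^{\star,i}\in\partial\mathcal{X}^i$ then $F^i(x^\star)\neq 0$ with the sign that pushes $x^i-F^i(x)$ strictly outside the interval $\mathcal{X}^i$, so by continuity $B^i(x)$ equals the constant boundary value $x^{\star,i}$ on a neighborhood of $x^\star$. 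Hence near $x^\star$ the map $B$ is affine with a locally constant active set.

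Partition the indices into the inactive set $\mathcal{I}$ (interior components) and the active set $\mathcal{A}$ (clamped components), and set $\delta=x-x^\star$. The linearized dynamics decouple as $\dot\delta^i=-\delta^i$ for $i\in\mathcal{A}$ and $\dot\delta^i=-[\nabla_x F(x^\star)\,\delta]_i$ for $i\in\mathcal{I}$ (using $F^i(x^\star)=0$ on $\mathcal{I}$), so $\dot\delta=-\hat M\,\delta$ with
$$\hat M=\begin{bmatrix} I & 0 \\ J_{\mathcal{I}\mathcal{A}} & J_{\mathcal{I}\mathcal{I}}\end{bmatrix},\qquad J:=\nabla_x F(x^\star),$$
in the ordering $(\mathcal{A},\mathcal{I})$. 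Since $\hat M$ is block lower triangular, its spectrum is $\{1\}\cup\Lambda(J_{\mathcal{I}\mathcal{I}})$, so it suffices to prove that the principal submatrix $J_{\mathcal{I}\mathcal{I}}$ has all eigenvalues with positive real part.

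Here I would use the two structural facts. First, $J$ is a P-matrix and every principal submatrix of a P-matrix is a P-matrix, so $J_{\mathcal{I}\mathcal{I}}$ is a P-matrix. Second, in the scalar case $Q^i=1$ formula \eqref{eq:grad_F} gives $J=I+KG$ with $K=\mathrm{diag}((f^i)'(z^i(x^\star)))$ diagonal and $G=G^\top$, hence $J_{\mathcal{I}\mathcal{I}}=I+K_{\mathcal{I}}G_{\mathcal{I}\mathcal{I}}$ with $G_{\mathcal{I}\mathcal{I}}$ still symmetric. When $K_{\mathcal{I}}$ is sign-definite (e.g. the strategic-substitutes games of Theorem \ref{thm:scalar}, where $(f^i)'>0$), write $K_{\mathcal{I}}=\pm D^2$ with $D\succ0$ diagonal; then $J_{\mathcal{I}\mathcal{I}}$ is diagonally similar, via $D$, to the symmetric matrix $I\pm DG_{\mathcal{I}\mathcal{I}}D$. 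Conjugation by a positive diagonal matrix preserves every principal minor, so $I\pm DG_{\mathcal{I}\mathcal{I}}D$ is a symmetric P-matrix and is therefore positive definite (as noted after Definition \ref{def:P_matrix}); equivalently, being similar to a symmetric matrix $J_{\mathcal{I}\mathcal{I}}$ has real spectrum, and a P-matrix has positive real eigenvalues by \cite[Theorem 3.3]{fiedler1962matrices}. In either reading $J_{\mathcal{I}\mathcal{I}}$ is positive stable.

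The main obstacle is this positive-stability step when $K_{\mathcal{I}}$ is not sign-definite, since then $K_{\mathcal{I}}G_{\mathcal{I}\mathcal{I}}$ is only self-adjoint for the indefinite inner product induced by $K_{\mathcal{I}}^{-1}$ and can have genuinely complex eigenvalues, so realness of the spectrum fails and the P-matrix property by itself does not bound the real parts. The route I would pursue is to show those complex eigenvalues still satisfy $\mathrm{Re}>-1$: a diagonal similarity reduces $J_{\mathcal{I}\mathcal{I}}$ to $I+\Sigma S$ with $\Sigma$ a signature matrix and $S$ symmetric, again a P-matrix (minors preserved), whose complex eigenvalues occur in conjugate pairs; I would then combine the positivity of the diagonal entries and leading principal minors of this reduced P-matrix with the conjugate pairing to control the sum of real parts — immediate in the $2\times2$ case, where a P-matrix has positive trace and a complex pair shares real part equal to half the trace. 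Securing this estimate in full generality, beyond the sign-definite instances covering the paper's running examples, is the delicate point; once $J_{\mathcal{I}\mathcal{I}}$ is positive stable, $-\hat M$ is Hurwitz and local asymptotic stability of $x^\star$ follows from Lyapunov's indirect method.
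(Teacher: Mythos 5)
Your argument is, in its essentials, the paper's own proof: use \eqref{eq:br:closed} with $Q=I$ to write the dynamics as $\dot x=-F(x)$ on the interior components, invoke strict complementarity to freeze the active set so that clamped components obey $\dot x^i = a^i - x^i$ (or $b^i-x^i$) locally, reduce by the block-triangular structure to the positive stability of the principal submatrix $[\nabla_x F(x^\star)]_{\mathcal{I},\mathcal{I}}$, note that principal submatrices of P-matrices are P-matrices, and conclude via a diagonal similarity to a symmetric matrix that the spectrum is real, hence positive by \cite[Theorem 3.3]{fiedler1962matrices}. All of these steps are correct and match the paper.

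The one point you leave open --- positive stability of $J_{\mathcal{I}\mathcal{I}}=I+K_{\mathcal{I}}G_{\mathcal{I}\mathcal{I}}$ when $K_{\mathcal{I}}$ is not sign-definite --- is exactly the point at which the paper's proof simply asserts that $K(x^\star)$ is a \emph{positive} diagonal matrix and writes $K^{1/2}GK^{1/2}$. This theorem is stated for the regime of scalar games of strategic substitutes covered by Theorem \ref{thm:scalar} and Assumption \ref{ass:gen}c (see the footnote to the statement and the paragraph preceding it, ``the case when $n=1$ and $K^i(x)>0$''), so $K^i(x^\star)>0$ is part of the intended hypotheses and your sign-definite branch is the entire proof; the ``delicate point'' never arises. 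Your instinct that the P-matrix property alone does not control real parts is nonetheless sound: for matrices of size at least $3$ the eigenvalues of a P-matrix are only confined to the wedge $|\arg\lambda|<\pi-\pi/d$, which meets the closed left half-plane, so without the positivity of $K$ (or some substitute for the similarity-to-symmetric step) the linearization argument would genuinely fail. You have therefore not missed an idea present in the paper; you have identified an implicit hypothesis that the paper uses without restating it in the theorem.
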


 Theorem \ref{lemma:br_sub} generalizes \cite[Theorem 2]{allouch2015private}. A summary of the results on BR dynamics for network games under Assumption \ref{ass:gen}a), \ref{ass:gen}b) and \ref{ass:gen}c) is given in Table \ref{fig:summary2} in the introduction.
 To illustrate our  findings we apply the previous theory to the multiple activities model considered in Example \ref{ex:multiple}.

\begin{example}[continues=ex:multiple] We divide our analysis in two cases. 

If either condition \eqref{cond2_multiple} or \eqref{condinf_multiple} hold, then  $\nabla_x F(x)$ satisfies the $P_\Upsilon$ condition and Theorem \ref{cor:br}  allows us to conclude that the unique Nash equilibrium is globally asymptotically stable for both Algorithm \ref{br_c} and \ref{br} (i.e. both continuous and discrete BR dynamics globally converge). Recall that this model includes as special cases the games studied in  \cite{chen2017multiple} (for $\delta>0$) and \cite{ballester2006s} (for $\beta=0$). Our result proves convergence of the continuous and discrete BR dynamics in both these models.

If $\delta<0$, then condition \eqref{multiple_min} guarantees that $F(x)$ is strongly monotone.  Theorem \ref{thm:br_smon} allows us to conclude global convergence of the  continuous BR dynamics. { This result is in line with the findings of \cite[Corollary 2]{bramoulle2014strategic} obtained for  games with scalar strategies.}
\end{example} 

\section{Comparative statics}
\label{sec:comparative}

We finally study how the Nash equilibrium changes when the cost functions of the agents or the network changes. To this end, we introduce the parametric cost functions

\begin{equation}\label{eq:cost_param}
J^i(x^i,z^i(x),y^i)
\end{equation} 
where $y^i\in\mathcal{Y}^i\subseteq \R^{d_i}$ is a local parameter and we set $\mathcal{Y}:=\mathcal{Y}^1\times \ldots \times \mathcal{Y}^N\subseteq \R^d$ to be the set of all possible parameters. We denote by $\mathcal{G}(y,\PP)$ the game with parameter $y\in \mathcal{Y}$ and network $\PP$. The corresponding parametric game Jacobian is 

\begin{equation}\label{eq:F_param}
F(x,y,\PP):=[\nabla_{x^i} J^i(x^i,z^i(x),y^i)]_{i=1}^N.
\end{equation} 
We generalize Assumption \ref{cost}  as follows.

\begin{assumption}[Parametric games]\label{cost_y}
The set $\mathcal{X}^i\subseteq \R^n$ is nonempty closed and convex for all $i\in\N[1,N]$.
The function $J^i(x^i,\q^i(x),y^i)$ is  continuously differentiable and  strongly convex in $x^i$  for all $i\in\N[1,N]$, for all $x^{j}\in\mathcal{X}^{j}$, $j\in\mathcal{N}^i$ and for all $y^i\in\mathcal{Y}^i$. Moreover, $J^i(x^i,z^i,y^i)$ is  continuously differentiable in $[x^i;z^i;y^i]$ and 
$ \nabla_{x^i} J^i(x^i,z^i,y^i)$ is differentiable and Lipschitz in $[x^i;z^i;y^i]$. Let $L$ be the maximum of such Lipschitz constants over $i\in\N[1,N]$. 
\end{assumption}

\subsection{\textbf{Lipschitz continuity}}

\begin{theorem}[Lipschitz continuity]\label{cor:Lipschitz}
Suppose that Assumption~\ref{cost_y} holds and  that for a given  $\bar y\in\mathcal{Y}$ and network $\bar \PP$, the operator $F(x,\bar y,\bar \PP)$ is a uniform block P-function  with constant $\bar \eta$,  so that the Nash equilibrium $x^\star(\bar y,\bar \PP)$ of the game $\mathcal{G}(\bar y,\bar \PP)$ is unique.\footnote{See the sufficient conditions for the  uniform block P-function  property derived in Theorems \ref{thm:norm} to \ref{thm:scalar}.} Then for any $ y\in\mathcal{Y}$ and any Nash equilibrium $x^\star(y,\bar \PP)$ of the perturbed game $\mathcal{G}(y,\bar \PP)$ it holds

\begin{align*}
\|x^\star(y,\bar \PP)-x^\star(\bar y, \bar \PP)\|_2 \le \frac{L}{\bar \eta}   \|y-\bar  y\|_2.
\end{align*}
If additionally $\mathcal{X}$ is bounded then for any $ y\in\mathcal{Y}$, any network $\PP$ and  any Nash equilibrium $x^\star(y, \PP)$ of $\mathcal{G}(y, \PP)$ it holds

\begin{align*}
\|x^\star(y,\PP)-x^\star(\bar y,\bar \PP)\|_2^2 \le \frac{L^2}{\bar \eta^2}  ( \|\PP-\bar \PP\|_2^2 \Delta^2 + \|y-\bar  y\|_2^2),
\end{align*}
where $\Delta:=\max_{x\in\mathcal{X}}\|x\|_2$.
\end{theorem}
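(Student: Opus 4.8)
The plan is to derive both inequalities from the defining property of the uniform block P-function of the single reference operator $F(\cdot,\bar y,\bar\PP)$, combined with the block-wise form of the variational inequality. Since $\mathcal{X}=\mathcal{X}^1\times\cdots\times\mathcal{X}^N$ is a Cartesian product, the VI characterization of Proposition \ref{prop:kkt} splits into the $N$ conditions $F^i(x^\star)^\top(w-x^{\star i})\ge0$ for all $w\in\mathcal{X}^i$ and all $i$. Writing $x_1:=x^\star(\bar y,\bar\PP)$ and, for the first claim, $x_2:=x^\star(y,\bar\PP)$, I would apply Definition \ref{def:mon}b) to $F(\cdot,\bar y,\bar\PP)$ at the pair $x_1,x_2$ to obtain an index $i^\star$ for which
$$\bar\eta\,\|x_2-x_1\|_2^2\le[F^{i^\star}(x_2,\bar y,\bar\PP)-F^{i^\star}(x_1,\bar y,\bar\PP)]^\top(x_2^{i^\star}-x_1^{i^\star}).$$

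First I would bound this right-hand side from above using the two block VIs. Testing the block VI of $x_1$ with $w=x_2^{i^\star}$ gives $F^{i^\star}(x_1,\bar y,\bar\PP)^\top(x_2^{i^\star}-x_1^{i^\star})\ge0$, so the $-F^{i^\star}(x_1,\cdots)$ contribution is nonpositive and may be dropped; testing the block VI of $x_2$ (an equilibrium of $F(\cdot,y,\bar\PP)$) with $w=x_1^{i^\star}$ gives $F^{i^\star}(x_2,y,\bar\PP)^\top(x_2^{i^\star}-x_1^{i^\star})\le0$, which may likewise be dropped. What survives is only the operator mismatch at the common point $x_2$, namely $[F^{i^\star}(x_2,\bar y,\bar\PP)-F^{i^\star}(x_2,y,\bar\PP)]^\top(x_2^{i^\star}-x_1^{i^\star})$. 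By Cauchy--Schwarz together with $\|x_2^{i^\star}-x_1^{i^\star}\|_2\le\|x_2-x_1\|_2$, and since the two evaluations differ only through the parameter block (the point $x_2^{i^\star}$ and the aggregate $z^{i^\star}(x_2)$ under the fixed network $\bar\PP$ being identical), the Lipschitz hypothesis of Assumption \ref{cost_y} bounds this mismatch by $L\|y^{i^\star}-\bar y^{i^\star}\|_2\le L\|y-\bar y\|_2$. This yields $\bar\eta\|x_2-x_1\|_2^2\le L\|y-\bar y\|_2\,\|x_2-x_1\|_2$; dividing by $\|x_2-x_1\|_2$ (the case $x_2=x_1$ being immediate) gives the first bound.

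For the second claim I would repeat the argument verbatim with $x_2:=x^\star(y,\PP)$, any Nash equilibrium of the perturbed game---its existence being ensured by compactness of $\mathcal{X}$ and continuity of $F$; crucially, the uniform block P-function property is still required only of $F(\cdot,\bar y,\bar\PP)$, and only the VI condition of the perturbed equilibrium enters the estimate. The one new ingredient is the mismatch $\|F^{i^\star}(x_2,\bar y,\bar\PP)-F^{i^\star}(x_2,y,\PP)\|_2$, where now both the parameter and the neighbor aggregate change. Writing $z^{i}(x;\PP):=\sum_j\PP_{ij}x^j$ and invoking Lipschitz continuity of $\nabla_{x^{i^\star}}J^{i^\star}$ in $[x^{i^\star};z^{i^\star};y^{i^\star}]$, this mismatch is at most $L\sqrt{\|z^{i^\star}(x_2;\bar\PP)-z^{i^\star}(x_2;\PP)\|_2^2+\|y^{i^\star}-\bar y^{i^\star}\|_2^2}$. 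The stacked aggregate difference equals $[(\bar\PP-\PP)\otimes I_n]x_2$, so by the identity $\|A\otimes I_n\|_2=\|A\|_2$ and $\|x_2\|_2\le\Delta$ its norm---and hence that of any single block---is at most $\|\PP-\bar\PP\|_2\,\Delta$. Combining as before gives $\bar\eta\|x_2-x_1\|_2^2\le L\sqrt{\|\PP-\bar\PP\|_2^2\Delta^2+\|y-\bar y\|_2^2}\,\|x_2-x_1\|_2$, and squaring after division yields the claim.

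The main obstacle is the careful bookkeeping of which operator is evaluated at which point: the block P-function inequality is available only for $F(\cdot,\bar y,\bar\PP)$, while the two VI sign conditions come from two \emph{different} operators, so the inequalities must be chained precisely so that the terms not controlled by Lipschitzness are exactly those annihilated by the VI signs, leaving only the mismatch at the common point $x_2$. Relative to the strongly monotone case of \cite[Theorem 2.1]{dafermos1988sensitivity}, the genuinely new elements are the block-wise selection of the index $i^\star$ (which replaces a single global inner-product estimate) and the translation of the network perturbation into a neighbor-aggregate bound through the Kronecker identity and the boundedness of $\mathcal{X}$.
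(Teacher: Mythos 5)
Your proposal is correct and follows essentially the same route as the paper's proof: select the index achieving the block P-function bound, test each equilibrium's block VI against the other's strategy so that only the operator mismatch at the perturbed equilibrium survives, and bound that mismatch via the Lipschitz hypothesis of Assumption~\ref{cost_y}, with the network perturbation absorbed through $\|\bar z(x)-z(x)\|_2\le\|\PP-\bar\PP\|_2\,\Delta$. The only cosmetic difference is that the paper adds the two VI inequalities first and then splits the resulting term, whereas you drop the two sign-controlled terms one at a time.
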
\

A similar Lipschitz continuity result for the solution of general VIs was presented in  \cite[Theorem 2.1]{dafermos1988sensitivity} under the assumption of strong monotonicity.   \cite{melo2017variational}  studies Lipschitz continuity of the Nash equilibrium in  strongly monotone network games. With respect to both of these works, Theorem~\ref{cor:Lipschitz}  proves Lipschitz continuity under the weaker uniform block P-function property  allowing the analysis of games whose Jacobian is not strongly monotone, e.g. under Assumption \ref{ass:gen}b) or \ref{ass:gen}c). 

The assumptions needed in Theorem \ref{cor:Lipschitz} to obtain Lipschitz continuity of the Nash equilibrium are fairly general. In the next section we study its differentiability by considering more structured constraint sets and adopting a constraint qualification assumption.

\subsection{\textbf{Differentiability}}

In this section we consider a fixed network and hence omit $\PP$ in $\mathcal{G}(y,\PP)$, $F(x,y,\PP)$ and $x^\star(y,\PP)$.  We also assume that the sets $\mathcal{X}^i$ are polyhedra and satisfy the {\it Slater constraint qualification}, as presented next.  This allows us to use a KKT reformulation of the VI$(\mathcal{X},F)$, as recalled in Proposition \ref{prop:kkt2}.  
\begin{assumption}[Polyhedral constraints]\label{constraints}
The  constraint sets can be expressed as $\mathcal{X}^i:=\{x^i\in\R^n\mid B^i x^i\le b^i, H^i x^i=h^i\}$, for some $B^i\in\R^{m_i \times n},  b^i\in\R^{m_i}, H^i\in\R^{p_i \times n},  h^i\in\R^{p_i}$. 
There exists $x^i\in\mathcal{X}^i$ such that $B^ix^i<b^i$. 
\end{assumption}

For simplicity, we  define $B:=\mbox{blkd}(B^1,\ldots, B^N) \in\R^{m\times Nn}, b=[b^i]_{i=1}^N \in\R^{m}, H:=\mbox{blkd}(H^1,\ldots, H^N) \in\R^{p \times Nn}, h=[h^i]_{i=1}^N \in\R^{p}$, so that $\mathcal{X}=\{x\in\R^{Nn}\mid Bx\le b, Hx=h\}$.

\begin{proposition}[KKT system for VIs]\label{prop:kkt2}
Suppose Assumptions~\ref{cost_y} and \ref{constraints} hold. The following statements are equivalent:
\begin{enumerate}
\item $x^\star(y)$ solves the VI$(\mathcal{X},F(\cdot,y))$;
\item  there exists $\lambda(y)\in \R^{m} $ and $\mu(y) \in \R^{p} $
such that
\begin{subequations}\label{KKT}
\begin{align}
&F(x^\star(y),y) + B^\top\lambda(y) + H^\top\mu(y)=0 \label{KKT1} \\
& H x^\star(y)=h, \quad \lambda(y)^\top (B x^\star(y)-b)=0, \label{KKT2}\\
& B x^\star(y)\le b,  \quad \lambda(y) \ge 0.
\end{align}\end{subequations}
\end{enumerate}
\end{proposition}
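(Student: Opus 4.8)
The plan is to reduce the variational inequality to a linear minimization problem and then apply the classical KKT theory for polyhedral programs. The starting observation is that, by Definition \ref{vi}, a feasible $x^\star(y)$ solves VI$(\mathcal{X}, F(\cdot,y))$ precisely when
$$F(x^\star(y),y)^\top (x - x^\star(y)) \ge 0 \qquad \text{for all } x \in \mathcal{X},$$
which says exactly that $x^\star(y)$ minimizes the \emph{linear} functional $x \mapsto F(x^\star(y),y)^\top x$ over the polyhedron $\mathcal{X} = \{x \mid Bx \le b,\ Hx = h\}$, the gradient being frozen at the solution. Establishing the equivalence of (1) and (2) thus reduces to showing that the KKT system \eqref{KKT} is necessary and sufficient for optimality of this linearized program.

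I would first dispatch the direction $(2) \Rightarrow (1)$, which is a direct algebraic verification requiring no constraint qualification. Assuming \eqref{KKT}, I would use the stationarity relation \eqref{KKT1} to write $F(x^\star(y),y) = -B^\top \lambda(y) - H^\top \mu(y)$, so that for any $x \in \mathcal{X}$
$$F(x^\star(y),y)^\top (x - x^\star(y)) = -\lambda(y)^\top B(x - x^\star(y)) - \mu(y)^\top H(x - x^\star(y)).$$
Feasibility of $x$ and $x^\star(y)$ gives $H(x - x^\star(y)) = h - h = 0$, so the second term vanishes, while the complementary slackness in \eqref{KKT2}, namely $\lambda(y)^\top(Bx^\star(y) - b) = 0$, turns the first term into $\lambda(y)^\top(b - Bx)$, which is nonnegative because $\lambda(y) \ge 0$ and $Bx \le b$. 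This reproduces the VI inequality \eqref{eq:VI}.

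For the converse $(1) \Rightarrow (2)$, I would invoke the necessity of the KKT conditions for the linear minimization above. Since the objective is linear and all constraints are affine, a constraint qualification is automatic; in particular the Slater condition assumed in Assumption \ref{constraints} guarantees that the normal cone to $\mathcal{X}$ at $x^\star(y)$ is generated by the gradients of the active inequality constraints together with the equality-constraint rows, so multipliers $\lambda(y) \ge 0$ and $\mu(y)$ satisfying the stationarity, feasibility and complementary-slackness conditions in \eqref{KKT} exist.

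The overall argument is standard, being the specialization to polyhedral sets of the general VI/KKT correspondence found in \cite{facchinei2007finite}, and the only delicate point is the existence of the multipliers in the $(1) \Rightarrow (2)$ direction. I expect this to be the main, though routine, obstacle: it requires that the active constraint gradients correctly represent the normal cone, which is exactly what the constraint qualification provides and which is here immediate from affineness of the constraints and the Slater condition. The converse direction, by contrast, is purely algebraic and needs no qualification.
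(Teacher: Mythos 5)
Your proposal is correct and is essentially the argument behind the result the paper simply cites (\cite[Propositions 1.3.4 and 3.2.1]{facchinei2007finite}): reduce the VI to the linearized program $\min_{x\in\mathcal{X}} F(x^\star(y),y)^\top x$, verify $(2)\Rightarrow(1)$ by direct algebra, and obtain the multipliers in $(1)\Rightarrow(2)$ from the polyhedral description of the normal cone. One small refinement: for the converse direction the Slater condition is not actually needed — for affine constraints the normal cone to $\mathcal{X}$ at $x^\star(y)$ is generated by the active constraint gradients purely as a consequence of Farkas' lemma, so affineness alone suffices, exactly as you suspect in your closing remark.
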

The previous proposition is proven in \cite[Proposition 1.3.4 and 3.2.1]{facchinei2007finite}.

\begin{assumption}[Constraint qualification]\label{ass:diff} Given a parameter $\bar y\in\mathcal{Y}$ and the corresponding Nash equilibrium $x^\star(\bar y)$, let $B^0, b^0$ be the matrices obtained by deleting from $B, b$ all the rows corresponding to constraints that are not active at $x^\star(\bar y)$ (i.e. the rows $k\in\N[1,m]$ such that $B_{(k,:)}x^\star(\bar y)< b_k$). Set $A:=[B^0 ; H]$ and $a:=[b^0,h]$. 
Suppose that the following properties hold:
\begin{enumerate}
\item[\ref{ass:diff}a)] { $\nabla_x F(x^\star(\bar y),\bar y)+\nabla_x F(x^\star(\bar y),\bar y)^\top \succ 0$,}
\item[\ref{ass:diff}b)] $A$ has full row rank,
\item[\ref{ass:diff}c)] the strict complementarity slackness condition
$\lambda_k(\bar y)>0 \mbox{ when } B_{(k,:)}x^\star(\bar y)= b_k $ 
is satisfied.\footnote{Under Assumptions \ref{cost_y}, \ref{constraints},    \ref{ass:diff}a and \ref{ass:diff}b, $\lambda(\bar y)$ is unique, see e.g.  \cite{friesz1990sensitivity}.   Assumption \ref{ass:diff} is used to apply an implicit function theorem type of argument to the KKT system, see \cite{friesz2016foundations}. For example,   \ref{ass:diff}a) is a second-order   condition implying that the equilibrium is  locally unique. }  
\end{enumerate}
\end{assumption}

Under Assumptions \ref{cost_y}, \ref{constraints} and \ref{ass:diff}, the  Nash equilibrium is locally unique and differentiable, as shown in \cite{facchinei201012} and \cite{friesz2016foundations}.
It is important to note that while  Lipschitz continuity of the Nash equilibrium can be easily obtained in network games (see Theorem \ref{cor:Lipschitz}), to prove differentiability one needs to rely on Assumption \ref{ass:diff}a), which is implied by strong monotonicity, and on Assumptions  \ref{ass:diff}b) and  \ref{ass:diff}c) which, on the other hand, depend on the constraint structure  and might be  difficult to verify a priori. In some special cases however they can be easily checked.  For example, if there are only equality constraints both conditions immediately hold.  In fact, in the case of equality constraints, if $A$ is not full row rank then there are redundant constraints that can be removed without loss of generality, hence \ref{ass:diff}b) always hold. Condition \ref{ass:diff}c) on the other hand involves only  inequality constraints and is trivially met if $B$ is the empty matrix.   

For the cases when Assumption \ref{ass:diff} holds,   an explicit formula for the Jacobian of $x^\star(y)$ at $\bar y$ is given in \cite[Theorem 1]{friesz1990sensitivity}. Such formula  involves both  the primal and dual variables associated with the solution to VI$(\mathcal{X},F(\cdot,\bar y))$ and is therefore difficult to interpret   in terms of simple primitives of the network game.\footnote{ The formula for the derivative of the  vector $w(y)$  provided therein is
$\nabla_y w(\bar y)=-[\nabla_w \Gamma(w,y)]^{-1}[\nabla_y \Gamma(w,y)]\mid _{\{w=w(\bar y), y=\bar y\}},$
where $\Gamma(w,y)$ is the KKT system obtained by considering only the equalities in \eqref{KKT1} and \eqref{KKT2}, see  \cite[Eq. (7.126)]{friesz2016foundations} for more details. The formula for $\nabla_y x^\star(\bar y)$ obtained from the first block component of $\nabla_y w(\bar y)$ depends on the dual variables $\lambda(\bar y),\mu(\bar y)$. 
}  
By considering only the active constraints instead of the whole KKT system, we  present  here an equivalent formula for $\nabla_y x^\star(\bar y)$ that does not depend explicitly on the dual variables. This reformulation allows us to have an immediate understanding  of the behavior of the Nash equilibrium for small perturbation of the parameter.

\begin{proposition}[Nash equilibrium sensitivity formula]\label{thm:diff}
If Assumptions \ref{cost_y}, \ref{constraints} and  \ref{ass:diff} hold then the Jacobian of $x^\star(y)$ at $\bar y$ can be expressed as \begin{equation}\label{sensitivity}
\begin{aligned}
\nabla_y x^\star(\bar y) &= - M [\nabla_y F(x,y) ]_{\{x=x^\star(\bar y),y=\bar y\}}\\
\end{aligned}
\end{equation}
where 
\begin{equation}\label{L}
\begin{aligned}
M&:=[L-LA^\top [ALA^\top]^{-1}AL],\\ L&:=[\nabla_x F(x,y) ]_{\{x=x^\star(\bar y),y=\bar y\}}^{-1}
\end{aligned}
\end{equation}
and $A$ is as defined in Assumption \ref{ass:diff}. 
\end{proposition}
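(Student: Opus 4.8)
The plan is to start from the KKT characterization of the Nash equilibrium in Proposition \ref{prop:kkt2}, reduce it via strict complementarity to a system involving only the active constraints, and then differentiate this reduced system implicitly in $y$, eliminating the multiplier derivatives to obtain a formula in the primal variables alone.

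First I would exploit Assumption \ref{ass:diff}c). Since strict complementarity holds at $x^\star(\bar y)$, the multipliers of active inequality constraints are strictly positive while those of inactive constraints vanish; by continuity, for $y$ in a neighborhood of $\bar y$ the active set is unchanged. Writing $A=[B^0;H]$ for the active rows, $a=[b^0;h]$, and collecting the corresponding multipliers into a single vector $\zeta(y)=[\lambda^0(y);\mu(y)]$, the KKT system \eqref{KKT} reduces locally to the square nonlinear system
\[
F(x^\star(y),y) + A^\top \zeta(y) = 0, \qquad A\, x^\star(y) = a.
\]
The inactive inequalities hold strictly and keep zero multipliers, so they play no role. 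Under Assumptions \ref{ass:diff}a) and \ref{ass:diff}b) the Jacobian of this system with respect to $(x,\zeta)$, namely the bordered matrix $\left[\begin{smallmatrix}\nabla_x F & A^\top \\ A & 0\end{smallmatrix}\right]$, is nonsingular: if $(x,\zeta)$ lies in its kernel then $A x=0$ gives $x^\top\nabla_x F\, x = -x^\top A^\top\zeta=-(Ax)^\top\zeta=0$, so $x=0$ by positive definiteness of the symmetric part, whence $A^\top\zeta=0$ and $\zeta=0$ since $A$ has full row rank. Hence the implicit function theorem gives differentiability of $x^\star(y)$ and $\zeta(y)$ near $\bar y$, which is also the content of the cited \cite{facchinei201012,friesz2016foundations}.

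Next I would differentiate both equations with respect to $y$ and evaluate at $\bar y$, abbreviating $\nabla_x F$, $\nabla_y F$ for the partial Jacobians at $(x^\star(\bar y),\bar y)$ and recalling $L=(\nabla_x F)^{-1}$, which exists because Assumption \ref{ass:diff}a) makes the symmetric part of $\nabla_x F$ positive definite. This yields
\[
(\nabla_x F)\,\nabla_y x^\star + \nabla_y F + A^\top \nabla_y \zeta = 0, \qquad A\,\nabla_y x^\star = 0.
\]
Solving the first relation for $\nabla_y x^\star = -L(\nabla_y F + A^\top \nabla_y \zeta)$ and substituting into the second gives $A L A^\top \nabla_y \zeta = -A L\,\nabla_y F$. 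Here $A L A^\top$ is invertible, since $L$ inherits a positive definite symmetric part from $\nabla_x F$ (for $w\ne0$, $w^\top L w = (Lw)^{-\top}$-type identity gives $w^\top L w>0$) and, $A$ being full row rank, $v^\top A L A^\top v=(A^\top v)^\top L (A^\top v)>0$ for $v\ne 0$. Therefore $\nabla_y \zeta=-[A L A^\top]^{-1}A L\,\nabla_y F$, and back-substitution yields
\[
\nabla_y x^\star = -L\,\nabla_y F + L A^\top [A L A^\top]^{-1} A L\,\nabla_y F = -\big(L - L A^\top [A L A^\top]^{-1} A L\big)\nabla_y F = -M\,\nabla_y F,
\]
which is exactly \eqref{sensitivity}.

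I expect the main obstacle to lie not in the final algebra but in the justification of the local reduction: one must argue carefully that strict complementarity together with the constraint qualification keeps the active set locally constant and makes the reduced KKT map a genuine differentiable function of $y$, so that differentiating $A x^\star(y)=a$ with $a$ held constant is legitimate. Once the active set is fixed and the bordered Jacobian is shown nonsingular, the elimination of $\nabla_y\zeta$ is routine; the novelty relative to \cite{friesz1990sensitivity} is precisely that this elimination removes all explicit dependence on the dual variables, leaving the purely primal expression $-M\,\nabla_y F$.
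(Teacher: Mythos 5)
The paper does not actually prove Proposition \ref{thm:diff} in this manuscript: it defers the proof to the companion paper \cite{parise2017}, noting only that the formula is an ``active-constraints'' reformulation of the dual-variable-dependent sensitivity formula of \cite{friesz1990sensitivity,friesz2016foundations}. Your argument is correct and is exactly the natural route that reformulation suggests: fix the active set via strict complementarity, apply the implicit function theorem to the reduced square KKT system (whose bordered Jacobian you correctly show is nonsingular from Assumptions \ref{ass:diff}a and \ref{ass:diff}b), and eliminate $\nabla_y\zeta$ through the Schur complement $ALA^\top$, whose invertibility follows because the inverse of a matrix with positive definite symmetric part again has positive definite symmetric part. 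The only blemish is the garbled justification of $w^\top Lw>0$ (the clean argument is $w^\top L w = v^\top (\nabla_x F)^\top v = v^\top (\nabla_x F) v > 0$ with $v = Lw$), and, as you yourself flag, the local constancy of the active set and the identification of the reduced-system solution with the perturbed Nash equilibrium deserve the standard Fiacco-type justification, which is precisely what the cited references supply.
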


 Proposition \ref{thm:diff} is proven in our companion paper  \cite{parise2017} where we additionally show how it  
 can be used to study optimal interventions in networks, along the lines of \cite{ballester2006s, acemoglu2015networks}.  We show in  Appendix \ref{appendix:comp} how Assumption \ref{ass:diff}  and Proposition \ref{thm:diff} simplify in  the case of scalar strategies and non-negativity constraints.

To illustrate our  findings we apply the previous theory to the race and tournaments model introduced in Example \ref{competition}. 

\begin{example}[continues=competition] 
We already showed that this game has a symmetric equilibrium $\bar x(\gamma)=\bar x_\gamma \mathbbm{1}_N$ for any value of $\gamma$ and that this is  the unique equilibrium  when $\gamma < \frac{1}{b}$ (since Assumption \ref{ass:gen}b is met). Theorem~\ref{cor:br} immediately allows us to conclude that for $\gamma < \frac{1}{b}$,   the BR dynamics (both continuous and discrete) are globally converging, hence the unique equilibrium is globally asymptotically stable with respect to both continuous and discrete BR dynamics.

In the following we aim at studying how the symmetric equilibrium changes if $\gamma$ changes. To this end, note that 
 the equilibrium is interior since $\bar x_\gamma\in(a,b)$. Hence the matrix $A$, as defined in Assumption \ref{ass:diff}, is  empty. 
Assumption \ref{ass:diff}b) and \ref{ass:diff}c) are therefore trivially met. We  need to verify Assumption \ref{ass:diff}a).  Recall that $F^i(x)=x^i-(a+\gamma z^i(b-z^i))$ hence
$$\frac{\partial F^i(x)}{\partial x^i}=1 \quad \mbox{and} \quad \frac{\partial F^i(x)}{\partial x^j}=-\gamma G_{ij}(b-2z^i).$$
Overall,
$$ \nabla_x F(x,\gamma)_{\{x=\bar x(\gamma)\}}=I-\epsilon_\gamma G  \quad \mbox{and} \quad \nabla_\gamma F(x,\gamma)_{\{x=\bar x(\gamma)\}}= - \bar x_\gamma(b-\bar x_\gamma)\mathbbm{1}_N,$$
where $\epsilon_\gamma:= \gamma(b-2\bar x_\gamma)$.
 Consequently, Assumption \ref{ass:diff}a) is met if 
 $ I-\epsilon_\gamma \frac{G+G^\top}{2}\succ 0.$
 For $\gamma < \frac{1}{b}$ it holds
 $-1< -\gamma b =\gamma (b-2b)< \epsilon_\gamma < \gamma b <1$ or equivalently $|\epsilon_\gamma|<1$.
Hence a sufficient condition for Assumption~\ref{ass:diff}a) to be met is that $G$ is not only row stochastic but also column stochastic, in which  case $\rho\left(\frac{G+G^\top}{2}\right)=1$.
Formula \eqref{sensitivity} then  leads to 
\begin{align*}
\textstyle \nabla_\gamma \bar x(\gamma)&\textstyle =\bar x_\gamma(b-\bar x_\gamma) [I-\epsilon_\gamma G]^{-1}\mathbbm{1}_N=\bar x_\gamma(b-\bar x_\gamma) \left[\sum_{k=0}^\infty \epsilon_\gamma^k G^k\right ]\mathbbm{1}_N=\bar x_\gamma(b-\bar x_\gamma) \left[\sum_{k=0}^\infty \epsilon_\gamma^k G^k\mathbbm{1}_N\right ]\\&\textstyle =\bar x_\gamma(b-\bar x_\gamma) \left[\sum_{k=0}^\infty \epsilon_\gamma^k \mathbbm{1}_N \right ] = \frac{\bar x_\gamma(b-\bar x_\gamma) }{1-\epsilon_\gamma} \mathbbm{1}_N >_e0, \end{align*}
hence the equilibrium effort is increasing in the network weight $\gamma$ (note that  all the components have the same derivative consistently with the fact that this is a symmetric equilibrium).

When $\gamma >\frac1b$ we have already shown that multiple equilibria might appear. It is interesting to note that for large values of $\gamma$ 
 Assumption \ref{ass:diff}a) is not satisfied (hence formula \eqref{sensitivity} cannot be applied) and $\nabla_x F(x,\gamma)_{\{x=\bar x(\gamma)\}}$ has a negative eigenvalue. Recalling that $\bar x(\gamma)$ is an interior equilibrium and using formula \eqref{eq:br:closed}, the continuous BR dynamics can be linearized around the symmetric equilibrium leading to the error dynamics $\dot e(t)=-\nabla_x F(\bar x(\gamma),\gamma) e(t)$, $e(t):=x(t)-\bar x(\gamma)$ (see also the proof of Theorem \ref{lemma:br_sub}).  For large $\gamma$ the  matrix $\nabla_x F(\bar x(\gamma),\gamma) $ has a negative eigenvalue and, in this range, $\bar x(\gamma)$ is not asymptotically stable (by Lyapunov's method).\footnote{This result is qualitatively consistent with Proposition 5 in \cite{bramoulle2014strategic}, where it is shown that  for the game of strategic substitutes considered therein and for large network effects all stable  equilibria involve at least one inactive agent.} 
  This result suggests a bifurcation type of behavior where the symmetric equilibrium is asymptotically stable when unique, but is not asymptotically stable  for values of $\gamma$ corresponding to multiple equilibria. An interesting consequence  is that the total effort \textup{in the stable equilibrium} might not be monotone in $\gamma$. For instance, the following plot  shows that the sum of the aggregate effort in the game with $N=2$ agents is increasing in $\gamma$ in the symmetric equilibrium, but is  decreasing  in the asymmetric equilibrium. It can be shown by the linearization argument above that for this example the symmetric equilibrium is  asymptotically stable until the bifurcation point and after that it is not, while the asymmetric equilibrium is asymptotically stable whenever it exists. Therefore, the total effort in the  asymptotically stable equilibrium is first increasing and then decreasing. This is in sharp contrast with the results in \cite{belhaj2014network} and \cite{bramoulle2014strategic} that show that  the total effort at the maximum equilibrium\footnote{We note that the symmetric equilibrium  of the game in Figure \ref{fig:example2_tot} is not a maximum equilibrium because, as shown in Figure~\ref{fig:example2_eq}, in the asymmetric equilibrium one agent plays an higher strategy. In fact, the three equilibria of this game cannot be ordered.} is   increasing for games of strategic complements and  decreasing  for  games of strategic substitutes, respectively.
\end{example}

\begin{figure}[H]
\begin{center}
 \includegraphics[height=0.2\textwidth]{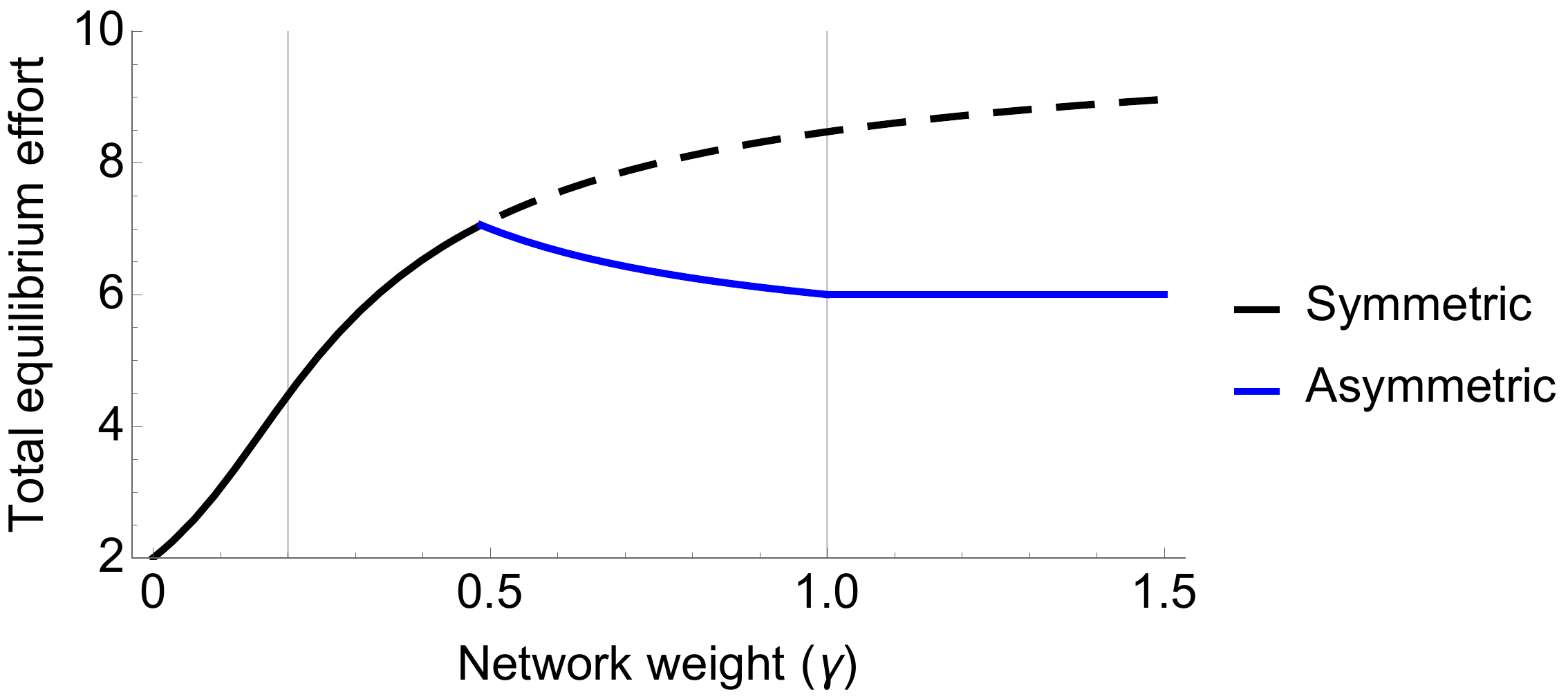}  
\end{center}
\caption{ Total equilibrium effort in the  equilibria of Example \ref{competition} with $N=2$, $G=G_c$, $a=1$, $b=5$. The solid line denotes  asymptotically stable equilibria for the continuous BR dynamics.}
\label{fig:example2_tot}
\end{figure}

\section{Conclusions}
\label{sec:conc}
In this paper, we provide a systematic characterization of the properties of the Nash equilibrium of network games. Our analysis encompass games of strategic complements or substitutes or that feature none of these effects. It applies to games with  multidimensional  and possibly unbounded strategy sets. Our approach exploits the equivalent characterization of Nash equilibria as solutions of variational inequalities and  proceeds in two steps. In the first part of the paper (Section \ref{sec:overview}), we study how the cost function of the agents and different network properties can be related to properties of the game Jacobian (see summary in Table \ref{fig:summary1}). In the second part of the paper (Sections \ref{sec:ex} to \ref{sec:comparative}), we study how the properties of the game Jacobian affect existence and uniqueness of equilibria, convergence of discrete and continuous best response dynamics and comparative statics in network games (see summary in Table \ref{fig:summary2}). 
Our analysis allows a more systematic understanding of the relation between assumptions that are typically used in the literature of network games and provides novel conditions that cover new classes of network games. 


\appendix
\renewcommand{\thesection}{\Alph{section}}
\section{ Tools from operator theory, variational inequalities, matrix and spectral theory}
\label{appendix:def}
\subsection{Operator properties, existence and uniqueness for VIs and connection with \cite{rosen1965existence}}
We report in the following definition a slightly weaker version of the properties in Definition \ref{def:mon}.

\begin{definition}\label{def:mon2}
An operator $F:\mathcal{X}\subseteq \R^{Nn}\rightarrow \R^{Nn}$ is
\begin{itemize}
\item[a)] \textup{Strictly monotone:} if $(F(x)-F(y))^\top(x-y)>0$ for all $x,y\in\mathcal{X}, x \neq y$. 
\item[b)] \textup{A  block P-function with respect to the partition $\mathcal{X}=\mathcal{X}^1\times\ldots\times\mathcal{X}^N$:} if $\max_{i\in\N[1,N]} [F^i(x)-F^i(y)]^\top[x^i-y^i]> 0$
for all $x,y\in\mathcal{X}, x \neq y$.
\item[c)] \textup{A  P-function:} if $\max_{h\in\N[1,Nn]} [F(x)-F(y)]_h[x-y]_h> 0$
for all $x,y\in\mathcal{X}, x \neq y$.
\end{itemize}
\end{definition}\

Figure \ref{fig:full} illustrates the relations between the properties in Definition \ref{def:mon} (used in the main text) and the weaker properties in Definition \ref{def:mon2}. 

\begin{figure}[H]
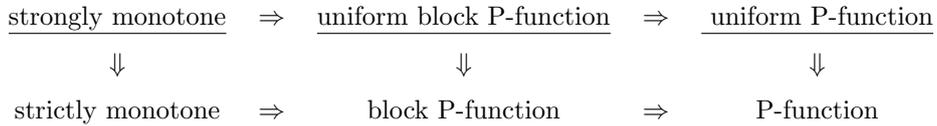

\begin{center}
\begin{tabular}{ccccc}
\underline{strongly monotone}&  $\Rightarrow$&\underline{\vphantom{g}uniform block P-function}& $\Rightarrow$& \underline{\vphantom{g} uniform P-function}\\[0.2cm]
$\Downarrow$ && $\Downarrow$ && $\Downarrow$ \\[0.2cm]
strictly monotone&  $\Rightarrow$&  block P-function& $\Rightarrow$& P-function
\end{tabular}
\end{center}
\caption{Relation between properties of $F$ commonly used in the theory of VIs. The underlined properties  are the ones used in the main text.  }
\label{fig:full}
\end{figure}%

Consider a VI$(\mathcal{X},F)$, where $\mathcal{X}$ is nonempty, closed and convex and $F$ is continuous. If $F$ satisfies any of the properties in the \textit{first line} of Figure \ref{fig:full} then VI$(\mathcal{X},F)$ \textit{admits a unique} solution (for any $\mathcal{X}$, $\mathcal{X}$ a cartesian product and rectangular $\mathcal{X}$, respectively). If $F$ satisfies any of the properties in the \textit{second line} of Figure~\ref{fig:full} then VI$(\mathcal{X},F)$ \textit{admits at most} a solution (for the same options of $\mathcal{X}$), see \cite[Theorem 2.3.3, Proposition 3.5.10]{facchinei2007finite}. Note that in the latter case, existence of a solution is not guaranteed. Existence can however be guaranteed in general if $\mathcal{X}$ is additionally bounded \cite[Corollary 2.2.5]{facchinei2007finite}. 

 In \cite{rosen1965existence} existence and uniqueness of the (Nash equilibrium seen as) solution of the VI$(\mathcal{X},F)$ is shown for $\mathcal{X}$ \textit{compact} and convex under a diagonal strict concavity condition which (for $r=1$ as defined therein) coincides with strict monotonicity. This result is consistent with the known results for existence and uniqueness of the VI solution summarised above. Note that, in this sense, our sufficient conditions for strong monotonicity derived in Theorems \ref{thm:norm} and \ref{thm:min} can be seen as sufficient conditions for the diagonal strict concavity condition  employed in \cite{rosen1965existence}. By proving strong instead of strict monotonicity we are however able to prove existence and uniqueness of the Nash equilibrium for sets $\mathcal{X}$ that are not necessarily bounded (e.g., $\mathcal{X}^i=\R^n_{\ge0}$).  Finally, the case of uniform block P-functions and uniform P-functions are not considered in  \cite{rosen1965existence} and allow for the analysis of network games whose operator is not strictly monotone (see Theorems \ref{thm:inf} and \ref{thm:scalar}). 
  
 \subsection{Auxiliary lemmas and definitions}
\begin{definition}[Vector and matrix norms]
Given a vector $v\in\R^d$ and a matrix $A\in\R^{d\times d}$ the following definitions hold
\begin{enumerate} 
\item  $\|v\|^2_Q=v^\top Q v$ for $Q=Q^\top\succ 0$ and $\|v\|_2=\|v\|_I$ 
\item $\|A\|_2:=\sqrt{\lambda_{\textup{max}}(A^\top A)}$ (spectral norm)
\item $\|A\|_1:=\max_j \sum_{i=1}^d |[A]_{ij}|$ (max column sum)
\item $\|A\|_\infty:=\max_i \sum_{j=1}^d |[A]_{ij}|$ (max row sum)
\end{enumerate}
\end{definition}
\begin{lemma}[Matrix norms]
Given a matrix $A\in\R^{d\times d}$ the following relations between matrix norms hold
\begin{enumerate}
\item $\|A\|_2\le \sqrt{\|A\|_1 \|A\|_\infty}$
\item $\rho(A)\le \|A\|_2$
\item $\frac{1}{\sqrt{d}} \|A\|_\infty\le  \|A\|_2 \le \sqrt{d}  \|A\|_\infty$
\item $\|A\|_2=\|A^\top\|_2$
\end{enumerate}
\end{lemma}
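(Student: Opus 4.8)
The plan is to establish the four parts in an order that lets the elementary ones feed the later ones, beginning with the two statements that follow immediately from the definition $\|A\|_2=\sqrt{\lambda_{\textup{max}}(A^\top A)}$. For part 4, I would observe that $A^\top A$ and $AA^\top$ share the same nonzero eigenvalues: if $A^\top A\,v=\mu v$ with $\mu\neq0$, then $Av\neq0$ is an eigenvector of $AA^\top$ for the same $\mu$. Hence $\lambda_{\textup{max}}(A^\top A)=\lambda_{\textup{max}}(AA^\top)$, and so $\|A\|_2=\sqrt{\lambda_{\textup{max}}(A^\top A)}=\sqrt{\lambda_{\textup{max}}(AA^\top)}=\|A^\top\|_2$. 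For part 2, I would invoke the operator-norm representation $\|A\|_2=\sup_{v\neq0}\|Av\|_2/\|v\|_2$, equivalent to the stated definition; then for any (possibly complex) eigenpair $Av=\lambda v$ with $v\neq0$ we have $|\lambda|\,\|v\|_2=\|Av\|_2\le\|A\|_2\|v\|_2$, whence $|\lambda|\le\|A\|_2$, and maximizing over the spectrum gives $\rho(A)\le\|A\|_2$.

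For part 1, I would apply to the symmetric positive semidefinite matrix $A^\top A$ the general fact that the spectral radius is dominated by any submultiplicative induced norm. Concretely, $\|A\|_2^2=\lambda_{\textup{max}}(A^\top A)=\rho(A^\top A)\le\|A^\top A\|_\infty\le\|A^\top\|_\infty\,\|A\|_\infty$, where the last step uses submultiplicativity of the induced $\infty$-norm. Since transposition swaps rows and columns, $\|A^\top\|_\infty$ equals the maximum column sum of $A$, i.e. $\|A^\top\|_\infty=\|A\|_1$. Therefore $\|A\|_2^2\le\|A\|_1\|A\|_\infty$, which is the claim after taking square roots.

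For part 3, the upper bound follows from part 1 together with the entrywise estimate $\|A\|_1\le d\,\|A\|_\infty$: each entry obeys $|[A]_{ij}|\le\|A\|_\infty$, so every column sum is at most $d\|A\|_\infty$, giving $\|A\|_2\le\sqrt{\|A\|_1\|A\|_\infty}\le\sqrt{d}\,\|A\|_\infty$. For the lower bound I would use the elementary vector-norm equivalences $\|w\|_\infty\le\|w\|_2$ and $\|v\|_2\le\sqrt{d}\,\|v\|_\infty$: for any $x\neq0$, $\|Ax\|_\infty\le\|Ax\|_2\le\|A\|_2\|x\|_2\le\sqrt{d}\,\|A\|_2\|x\|_\infty$, and taking the supremum over $x$ yields $\|A\|_\infty\le\sqrt{d}\,\|A\|_2$, i.e. $\frac{1}{\sqrt d}\|A\|_\infty\le\|A\|_2$.

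None of the four parts presents a genuine difficulty; the only points requiring care are two standard facts that I would cite rather than reprove. First, the definition of $\|A\|_2$ must be identified with the operator norm $\sup_{v\neq0}\|Av\|_2/\|v\|_2$ (used in parts 2 and 3). Second, $\|\cdot\|_\infty$ must be recognized as a submultiplicative induced norm so that $\rho(M)\le\|M\|_\infty$ for every square $M$ (used in part 1). The only subtlety worth flagging explicitly is that, since $A$ need not be symmetric, the eigenvectors in part 2 may be complex, so the operator-norm bound should be read over $\mathbb{C}^d$; everything else is routine bookkeeping with the definitions in the preceding vector- and matrix-norm list.
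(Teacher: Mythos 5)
Your proof is correct, and all four arguments are the standard ones; the paper itself states this lemma in Appendix A as a collection of well-known facts and provides no proof, so there is nothing to compare against. The one subtlety you rightly flag — that the eigenvector in part 2 may be complex, so the operator-norm bound must be read over $\mathbb{C}^d$ — is handled properly, and the chain $\|A\|_2^2=\rho(A^\top A)\le\|A^\top\|_\infty\|A\|_\infty=\|A\|_1\|A\|_\infty$ for part 1 is exactly the classical route.
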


\begin{lemma}[Properties of the Kronecker product]\label{lem:k} Consider any matrices $A,B,C,D$ of suitable dimensions so that the following statements are well-defined. The following properties hold.
\begin{enumerate}
\item $\|A \otimes B \|_2=\|A\|_2\|B\|_2$
\item $\Lambda(A \otimes B )=\Lambda(A) \Lambda(B):=\{\lambda_A\lambda_B \mid \lambda_A\in\Lambda(A), \lambda_B\in\Lambda(B)\}$
\item $(A\otimes B) (C\otimes D)= (AC\otimes BD)$
\item $(A\otimes B)+ (A\otimes C)=(A\otimes (B+C))$
\item $(A\otimes B)^\top =(A^\top\otimes B^\top)$
\item $\alpha(A\otimes B) =(\alpha A\otimes B)=(A\otimes \alpha B)$
\end{enumerate}
\end{lemma}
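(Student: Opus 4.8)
The plan is to establish the six identities in an order that lets the later, harder ones rest on the earlier, elementary ones. Items 4, 5 and 6 are purely structural: writing $A\otimes B$ as the block matrix whose $(i,j)$ block is $A_{ij}B$, each of these follows by inspecting a single representative block. For item 4 the $(i,j)$ block of $(A\otimes B)+(A\otimes C)$ is $A_{ij}B + A_{ij}C = A_{ij}(B+C)$, which is the $(i,j)$ block of $A\otimes(B+C)$; item 6 is identical with a scalar in place of $C$; and for item 5 the $(i,j)$ block of $(A\otimes B)^\top$ is the transpose of the $(j,i)$ block of $A\otimes B$, namely $(A_{ji}B)^\top = A_{ji}B^\top$, which is precisely the $(i,j)$ block of $A^\top\otimes B^\top$. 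These require no more than matching indices.

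The cornerstone is the mixed-product identity in item 3, which I would prove by direct block computation. Write $A\in\R^{p\times q}$, $C\in\R^{q\times r}$ so that $AC$ is well-defined, and similarly for $B,D$. The $(i,k)$ block of the product $(A\otimes B)(C\otimes D)$ is obtained by block-row times block-column, $\sum_{j} (A_{ij}B)(C_{jk}D) = \big(\sum_j A_{ij}C_{jk}\big) BD = (AC)_{ik}\,BD$, which is exactly the $(i,k)$ block of $(AC)\otimes(BD)$. The only care needed is keeping the block indices and the ordinary matrix products $BD$ straight; there is no genuine obstacle, just bookkeeping.

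Item 2 then follows from item 3 by the eigenvector argument. If $Av=\lambda_A v$ and $Bw=\lambda_B w$ with $v,w\neq 0$, then item 3 gives $(A\otimes B)(v\otimes w) = (Av)\otimes(Bw) = \lambda_A\lambda_B\,(v\otimes w)$, so every product $\lambda_A\lambda_B$ is an eigenvalue of $A\otimes B$. To obtain the reverse inclusion together with the correct multiplicities, I would pass to the Schur triangular forms $A = U T_A U^\ast$ and $B = V T_B V^\ast$ and use items 3 and 5 to write $A\otimes B = (U\otimes V)(T_A\otimes T_B)(U\otimes V)^\ast$; since $U\otimes V$ is unitary (again by item 3) and $T_A\otimes T_B$ is upper triangular with diagonal entries exactly the products of the diagonal entries of $T_A$ and $T_B$, the spectrum of $A\otimes B$ is read off as $\Lambda(A)\Lambda(B)$ counting multiplicity. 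This completeness step is the subtler point of the lemma.

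Finally, item 1 combines the previous parts. Using item 5 and then item 3, $(A\otimes B)^\top(A\otimes B) = (A^\top\otimes B^\top)(A\otimes B) = (A^\top A)\otimes(B^\top B)$. Both $A^\top A$ and $B^\top B$ are symmetric positive semidefinite, so by item 2 the eigenvalues of their Kronecker product are the products of their (nonnegative) eigenvalues, whence $\lambda_{\textup{max}}\big((A^\top A)\otimes(B^\top B)\big) = \lambda_{\textup{max}}(A^\top A)\,\lambda_{\textup{max}}(B^\top B)$. Taking square roots and recalling $\|A\|_2=\sqrt{\lambda_{\textup{max}}(A^\top A)}$ yields $\|A\otimes B\|_2 = \|A\|_2\|B\|_2$. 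The main work of the lemma is thus concentrated in item 3 (elementary but index-heavy) and in the multiplicity bookkeeping underlying item 2.
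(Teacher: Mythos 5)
Your proof is correct. Note, however, that the paper itself offers no proof of this lemma: it is stated in Appendix A as a collection of standard facts about the Kronecker product, on par with the unproved norm and symmetric-matrix lemmas surrounding it. Your argument is the classical one — the mixed-product rule by block computation, the spectrum via $(A\otimes B)(v\otimes w)=(Av)\otimes(Bw)$ plus Schur triangularization for the reverse inclusion and multiplicities, and the spectral norm from $(A\otimes B)^\top(A\otimes B)=(A^\top A)\otimes(B^\top B)$ — and the logical ordering (3, then 2, then 1) is exactly right. One small point to tidy: in the Schur step you invoke item 5, which concerns the real transpose, but the similarity $A\otimes B=(U\otimes V)(T_A\otimes T_B)(U\otimes V)^\ast$ needs the conjugate-transpose identity $(A\otimes B)^\ast=A^\ast\otimes B^\ast$ (the matrices $U,V,T_A,T_B$ are in general complex even for real $A,B$); it follows from the same block-by-block inspection, so this is a one-line patch rather than a gap.
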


\begin{lemma}[Properties symmetric matrices]\label{lem:s}
Consider a symmetric matrix $A\in\R^{d\times d}$, $A=A^\top$. The following statements hold
\begin{enumerate}
\item $A\succeq \lambda_{\textup{min}}(A) I_d$
\item $|\lambda_{\textup{min}}(A)|\le \rho(A)=\|A\|_2$
\item $\lambda_{\textup{min}}(\alpha I+A)=\alpha +\lambda_{\textup{min}}(A)$
\end{enumerate}
\end{lemma}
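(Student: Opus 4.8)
The plan is to reduce all three claims to the spectral decomposition of a real symmetric matrix. By the spectral theorem, since $A=A^\top$ there exists an orthogonal matrix $U$ (so $U^\top U=I_d$) and a diagonal matrix $\Lambda=\textup{diag}(\lambda_1,\ldots,\lambda_d)$ of real eigenvalues such that $A=U\Lambda U^\top$. All three statements then follow by manipulating this factorization; the real-valuedness of the $\lambda_i$ and the existence of an orthonormal eigenbasis (both consequences of symmetry) are the only facts I need.

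For statement 1, I would fix an arbitrary $x\in\R^d$ and set $y:=U^\top x$, so that $\|y\|_2=\|x\|_2$ because $U$ is orthogonal. Then $x^\top A x=y^\top\Lambda y=\sum_{i=1}^d \lambda_i y_i^2\ge \lambda_{\textup{min}}(A)\sum_{i=1}^d y_i^2=\lambda_{\textup{min}}(A)\|x\|_2^2$. Since $x$ was arbitrary, this is exactly $A-\lambda_{\textup{min}}(A)I_d\succeq 0$, i.e. $A\succeq\lambda_{\textup{min}}(A)I_d$.

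For statement 2, the inequality $|\lambda_{\textup{min}}(A)|\le\rho(A)$ is immediate from the definition $\rho(A)=\max_i|\lambda_i(A)|$, since $\lambda_{\textup{min}}(A)$ is one of the $\lambda_i$. For the equality $\rho(A)=\|A\|_2$, I would use that symmetry gives $A^\top A=A^2$, whose eigenvalues are the $\lambda_i^2$; hence $\lambda_{\textup{max}}(A^\top A)=\max_i\lambda_i^2=(\max_i|\lambda_i|)^2=\rho(A)^2$, and taking square roots in the definition $\|A\|_2=\sqrt{\lambda_{\textup{max}}(A^\top A)}$ yields $\|A\|_2=\rho(A)$.

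For statement 3, I would observe that $\alpha I+A=U(\alpha I+\Lambda)U^\top$ is again symmetric with eigenvalues $\alpha+\lambda_i$; equivalently, each eigenpair $(\lambda_i,v_i)$ of $A$ satisfies $(\alpha I+A)v_i=(\alpha+\lambda_i)v_i$. Taking the minimum over $i$ gives $\lambda_{\textup{min}}(\alpha I+A)=\alpha+\min_i\lambda_i=\alpha+\lambda_{\textup{min}}(A)$. I do not anticipate a genuine obstacle here: all three parts are standard consequences of diagonalizability by an orthogonal matrix, and the only point requiring care is invoking symmetry precisely where it is needed — to guarantee real eigenvalues and an orthonormal basis in statement 1, and to reduce $A^\top A$ to $A^2$ in statement 2.
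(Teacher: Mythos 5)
Your proof is correct: all three parts follow exactly as you argue from the orthogonal diagonalization $A=U\Lambda U^\top$, and you invoke symmetry precisely where it is needed (real eigenvalues and an orthonormal eigenbasis for part 1, $A^\top A=A^2$ for part 2). The paper states this lemma without proof, treating it as a collection of standard facts from matrix theory, so there is no argument in the paper to compare against; your spectral-theorem derivation is the canonical justification and fills that gap cleanly.
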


\begin{lemma}\label{lem:d_reg}
Consider a non-negative matrix $A\in\R^{N\times N}$ such that $A\mathbbm{1}_N=d\mathbbm{1}_N$. Then $\rho(A)=d$.
\end{lemma}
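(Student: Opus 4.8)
The plan is to prove the identity $\rho(A)=d$ by establishing a two-sided bound: showing simultaneously that $d$ is an eigenvalue of $A$ and that no eigenvalue of $A$ can exceed $d$ in modulus.

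First I would read the hypothesis $A\mathbbm{1}_N=d\mathbbm{1}_N$ as the statement that $\mathbbm{1}_N$ is an eigenvector of $A$ with eigenvalue $d$ or, equivalently, that every row of $A$ sums to $d$. Since $A$ is non-negative and $\mathbbm{1}_N>_e 0$, the product $A\mathbbm{1}_N=d\mathbbm{1}_N$ is non-negative entry-wise, which forces $d\ge 0$. As $d$ is then a real, non-negative eigenvalue of $A$, its modulus is a lower bound for the spectral radius, giving $\rho(A)\ge d$.

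For the reverse inequality I would invoke Gershgorin's theorem, exactly as already used in Section~\ref{sec:overview1}. Every eigenvalue $\lambda\in\Lambda(A)$ satisfies $|\lambda-A_{ii}|\le\sum_{j\neq i}|A_{ij}|$ for some index $i$. Since all entries of $A$ are non-negative and the $i$th row sums to $d$, the right-hand side equals $\sum_{j\neq i}A_{ij}=d-A_{ii}$, whence $|\lambda|\le A_{ii}+(d-A_{ii})=d$. As this holds for every eigenvalue, $\rho(A)\le d$. Equivalently, one may simply observe that the max-row-sum norm satisfies $\|A\|_\infty=\max_i\sum_j|A_{ij}|=\max_i\sum_j A_{ij}=d$ and then apply $\rho(A)\le\|A\|_\infty$.

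Combining the two bounds yields $\rho(A)=d$. I do not expect any substantive obstacle here: the only points requiring minor care are confirming $d\ge 0$ so that the eigenvalue $d$ genuinely realizes the spectral radius, and using the non-negativity of $A$ to replace $|A_{ij}|$ by $A_{ij}$ in the Gershgorin (or norm) estimate.
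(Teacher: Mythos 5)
Your proof is correct and follows essentially the same route as the paper: both obtain $\rho(A)\ge d$ from the eigenpair $(d,\mathbbm{1}_N)$ and $\rho(A)\le d$ from a row-sum bound, the paper writing out the maximal-component argument explicitly where you invoke Gershgorin (or $\rho(A)\le\|A\|_\infty$), which is the same estimate. Your extra remark that non-negativity of $A$ forces $d\ge 0$ is a small point the paper leaves implicit.
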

\begin{proof}
This fact is known, we report the proof for completeness.  Note that clearly $d$ is an eigenvalue of $A$, hence $\rho(A)\ge d$. We conclude the proof by showing  $\rho(A)\le d$. Let $\lambda$ be the eigenvalue that achieves the spectral radius, that is, $|\lambda|=\rho(A)$ and let $v$ be the corresponding eigenvector. Moreover, let $i\in\N[1,N]$ be the index such that $|v_i|=\max_{j=1}^N |v_j|$, note that $|v_i|>0$. From the $i$th component of $Av=\lambda v$ we get
$|\lambda|| v_i|=|\lambda v_i|= |\sum_{j=1}^N A_{ij}v_j| \le \sum_{j=1}^N |A_{ij}||v_j| \le \sum_{j=1}^N A_{ij} |v_i| = d  |v_i| .$
Dividing by $| v_i|> 0$, we get $\rho(A)=|\lambda|\le d$.
\end{proof}

 \begin{lemma}\label{lemma:pot}
Consider the linear quadratic game of Example \ref{ex:lq}. If there exists $\{\beta^i>0\}_{i=1}^N$ such that for all $i,j\in\{1,N\}$
\begin{equation}\label{bra_ext}
\frac{ K^i}{\beta_i} G_{ij}= \frac{K^j}{\beta_j}  G_{ji}
\end{equation}
then the game with rescaled coordinates $\tilde x_i=\frac{1}{\sqrt{\beta_i}}  x_i$ has parameters $\tilde K^i=\frac{K^i}{\beta_i}$, network $\tilde G_{ij}={G_{ij}}{\sqrt{\beta_i\beta_j}}$
and admits an exact potential.
\end{lemma}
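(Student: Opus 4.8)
The plan is to show that the rescaled game is itself a scalar linear quadratic game of the form in Example \ref{ex:lq}, and then to invoke the Monderer--Shapley potential criterion \eqref{potQ} for \emph{that} game. Concretely, I would first substitute the change of variables $x_i=\sqrt{\beta_i}\,\tilde x_i$ into each cost $J^i(x^i,z^i(x))=\tfrac12(x^i)^2+[K^i z^i(x)-a^i]x^i$. Since $z^i(x)=\sum_j G_{ij}x^j=\sum_j G_{ij}\sqrt{\beta_j}\,\tilde x^j$, this produces a cost $\hat J^i(\tilde x)=\tfrac12\beta_i(\tilde x^i)^2+K^i\big(\sum_j G_{ij}\sqrt{\beta_j}\,\tilde x^j\big)\sqrt{\beta_i}\,\tilde x^i-a^i\sqrt{\beta_i}\,\tilde x^i$, whose quadratic coefficient is $\beta_i$ rather than $1$. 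To bring it back to the standard form, I would then divide agent $i$'s cost by the positive constant $\beta_i$; because rescaling a single player's payoff by a positive number leaves that player's best response (and hence the set of Nash equilibria) unchanged, this operation is innocuous, and it yields exactly the claimed parameters $\tilde K^i=K^i/\beta_i$ and $\tilde G_{ij}=G_{ij}\sqrt{\beta_i\beta_j}$ (together with $\tilde a^i=a^i/\sqrt{\beta_i}$).

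With the rescaled game written in the linear quadratic form of Example \ref{ex:lq}, the potential criterion \eqref{potQ} applies verbatim: the rescaled game admits an exact potential if and only if $\tilde K^i\tilde G_{ij}=\tilde K^j\tilde G_{ji}$ for all $i,j$. The final step is a one-line cancellation: $\tilde K^i\tilde G_{ij}=\frac{K^i}{\beta_i}G_{ij}\sqrt{\beta_i\beta_j}$ and $\tilde K^j\tilde G_{ji}=\frac{K^j}{\beta_j}G_{ji}\sqrt{\beta_i\beta_j}$, and since the factor $\sqrt{\beta_i\beta_j}$ is symmetric in $i,j$ it cancels, so the criterion collapses to $\frac{K^i}{\beta_i}G_{ij}=\frac{K^j}{\beta_j}G_{ji}$, which is precisely the hypothesis \eqref{bra_ext}. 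Hence the rescaled game is potential, and one can exhibit the potential explicitly as $\tilde U(\tilde x)=\tfrac12\sum_i(\tilde x^i)^2+\tfrac12\sum_{i\neq j}\tilde K^i\tilde G_{ij}\,\tilde x^i\tilde x^j-\sum_i\tilde a^i\tilde x^i$, whose partial gradient reproduces $\nabla_{\tilde x^i}\tilde J^i$ exactly when \eqref{bra_ext} holds.

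The calculation itself is routine; the only point that requires care is conceptual. A per-player linear reparametrization of the strategies \emph{alone} does not change whether a game is potential (the symmetry condition $K^iG_{ij}=K^jG_{ji}$ is invariant under $x_i\mapsto\sqrt{\beta_i}\,\tilde x_i$), so the potential structure is not created by the coordinate change. What creates it is the subsequent player-specific normalization of the cost by $1/\beta_i$: this preserves all best responses and equilibria, yet rescales the cross derivatives $\partial^2\hat J^i/\partial\tilde x^i\partial\tilde x^j$ by $1/\beta_i$, converting the asymmetric condition $K^iG_{ij}=K^jG_{ji}$ into the weaker symmetric one $\frac{K^i}{\beta_i}G_{ij}=\frac{K^j}{\beta_j}G_{ji}$. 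I would therefore be careful to justify that this normalization is legitimate (positivity of $\beta_i$ and invariance of the $\arg\min$) and to keep clear that the asserted equivalence is between the equilibria of the original and rescaled games, not a claim that the original game is itself potential.
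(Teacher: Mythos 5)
Your proposal is correct and follows essentially the same route as the paper's proof: substitute $x_i=\sqrt{\beta_i}\,\tilde x_i$, divide agent $i$'s cost by $\beta_i$ (which preserves the $\arg\min$), read off $\tilde K^i=K^i/\beta_i$ and $\tilde G_{ij}=G_{ij}\sqrt{\beta_i\beta_j}$, and check the symmetry condition $\tilde K^i\tilde G_{ij}=\tilde K^j\tilde G_{ji}$, which reduces to the hypothesis \eqref{bra_ext} after cancelling $\sqrt{\beta_i\beta_j}$. Your additional remarks (the explicit potential and the observation that it is the per-player cost normalization, not the coordinate change alone, that creates the potential structure) go slightly beyond what the paper writes but do not alter the argument.
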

\begin{proof}
By making the  change of coordinates  $x_i=\sqrt{\beta^i} \tilde x_i$
the cost function can be reformulated as
$
J^i(\tilde x)=\frac12 (\sqrt{\beta_i}\tilde x_i)^2-a^i \sqrt{\beta_i}\tilde x_i + K^i\sum_{j}G_{ij} \sqrt{\beta_i}\tilde x_i \sqrt{\beta_j}\tilde x_j 
=\beta_i\frac12 (\tilde x_i)^2-a^i \sqrt{\beta_i}\tilde x_i + K^i\sum_{j}\tilde G_{ij} \tilde x_i \tilde x_j
$
where $\tilde G_{ij}:= G_{ij} \sqrt{\beta_i}\sqrt{\beta_j}$.  Minimizing $J^i(\tilde x)$ is equivalent to minimizing
$
\tilde J^i(\tilde x)=\frac{J^i(\tilde x)}{\beta_i}=\frac12 (\tilde x_i)^2- \frac{a^i}{\sqrt{\beta_i}}\tilde x_i + \tilde K^i\sum_{j}\tilde G_{ij} \tilde x_i \tilde x_j
$
where $\tilde K^i:=\frac{K^i}{\beta_i}$. This game has an exact potential if and only if $\tilde K^i \tilde{G}_{ij}=\tilde K^j \tilde{G}_{ji} $
which is true by assumption.
\end{proof}
 \begin{remark}
The following special cases of Lemma \ref{lemma:pot} are discussed in \cite{bramoulle2014strategic}.
\begin{enumerate}
\item If $G=G^\top$ the change of coordinates corresponding to $\beta^i={K^i}$ leads to a  game  that is homogeneous ($\tilde K^i=1$) and has a symmetric network.
\item If  there exists $\{\alpha_i>0\}_{i=1}^N$ such that 
$\alpha_i G_{ij}= \alpha_j   G_{ji} $ for all $i,j\in\{1,N\}$, 
the change of coordinates corresponding to  $\beta_i=\frac{K_i}{\alpha_i}$ leads to a game with $\tilde K^i=\alpha_i$ and  $\tilde G_{ij}=G_{ij} \frac{\sqrt{K^iK^j} }{\sqrt{\alpha_i\alpha_j}}$. This is equivalent to  a game with $\bar K^i=1$ and symmetric network $\bar G_{ij}= \alpha_i \tilde G_{ij}$ (since $\bar G_{ij}= \alpha_i \tilde G_{ij}=  \alpha_i G_{ij} \frac{\sqrt{K^iK^j} }{\sqrt{\alpha_i\alpha_j}} =\alpha_j G_{ji} \frac{\sqrt{K^iK^j} }{\sqrt{\alpha_i\alpha_j}}=\bar G_{ji}$ by assumption). 
\end{enumerate}
\end{remark}

\section{Table 2: Technical statements and counter examples}
\label{sec:game_jacobian}

 In this section  we detail the results of Table \ref{fig:summary1}. To help with the understanding of the following theorems we consider as benchmark the simple linear quadratic game presented in Example \ref{ex:lq}. This model belongs to the class of network games that are typically considered in the literature and allow us to easily compare our results. Moreover, we   show in the next subsections that, despite its simplicity,  a vast range of different strategic interactions can be captured with it 
by simply tuning the values of the parameters $K^i$ (modeling the weight of the neighbor aggregate on each agent $i$).

\subsection{\textbf{A condition in terms of $\|\PP\|_2$}}

As discussed in Section \ref{sec:overview} the strongest result is obtained  under Assumption \ref{ass:gen}a, which depends on $\|\PP\|_2$. In this case,    both strong monotonicity and the $P_\Upsilon$ condition hold.

\begin{theorem}[Strong monotonicity and $\boldsymbol{P_\Upsilon}$ condition under Assumption \ref{ass:gen}a]\label{thm:norm}
Suppose that Assumptions \ref{cost} and \ref{ass:gen}a hold. Then the game Jacobian $F$ as defined in \eqref{eq:F}
\begin{enumerate}
\item is strongly monotone with constant $\alpha_2$,
\item satisfies the $P_\Upsilon$ condition.
\end{enumerate}
\end{theorem}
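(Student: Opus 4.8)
The plan is to verify, for the network-game gradient $\nabla_x F(x)=D(x)+K(x)W$ with $W=G\otimes I_n$, the two sufficient conditions of Proposition~\ref{prop:suff}: uniform positive definiteness of the symmetric part for part~1, and the $P_\Upsilon$ condition of Definition~\ref{def:P_up} for part~2. Throughout I would use that $D(x)=D(x)^\top$ is block diagonal with $\lambda_{\textup{min}}(D(x))=\min_i\lambda_{\textup{min}}(D^i(x))\ge\kappa_1$, and that $K(x)$ is block diagonal with $\|K(x)\|_2=\max_i\|K^i(x)\|_2\le\kappa_2$.

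For part~1, by Proposition~\ref{prop:suff}a it suffices to show $\tfrac{\nabla_x F(x)+\nabla_x F(x)^\top}{2}\succeq\alpha_2 I$ for every $x$. Starting from the decomposition~\eqref{eq:main}, the task reduces to proving the eigenvalue bound~\eqref{main2} on $S(x):=\tfrac{K(x)W+W^\top K(x)^\top}{2}$. I would establish~\eqref{main2} from the elementary fact that for any real matrix $M$ and any unit vector $v$, $v^\top\tfrac{M+M^\top}{2}v=v^\top M v\ge-\|Mv\|_2\ge-\|M\|_2$, so that $\lambda_{\textup{min}}\big(\tfrac{M+M^\top}{2}\big)\ge-\|M\|_2$; applying this with $M=K(x)W$ and using $\|K(x)W\|_2\le\|K(x)\|_2\|W\|_2\le\kappa_2\|G\|_2$ (where $\|W\|_2=\|G\|_2$ by Lemma~\ref{lem:k}) gives $\lambda_{\textup{min}}(S(x))\ge-\kappa_2\|G\|_2$. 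Substituting into~\eqref{eq:main} together with $\lambda_{\textup{min}}(D(x))I\succeq\kappa_1 I$ yields $\tfrac{\nabla_x F+\nabla_x F^\top}{2}\succeq(\kappa_1-\kappa_2\|G\|_2)I=\alpha_2 I$, i.e.\ strong monotonicity with constant $\alpha_2$.

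For part~2, I would first read off the block structure of $\nabla_x F(x)$: since $G_{ii}=0$, the diagonal blocks are $A_{i,i}(x)=D^i(x)=D^i(x)^\top$ and the off-diagonal blocks are $A_{i,j}(x)=G_{ij}K^i(x)$, so in the notation of Definition~\ref{def:P_up} one has $\kappa_{i,i}=\kappa_1^i$ and $\kappa_{i,j}=\kappa_2^i G_{ij}$ for $j\neq i$. Assumption~\ref{ass:gen}a forces $\kappa_1>\kappa_2\|G\|_2\ge0$, hence $A_{i,i}(x)\succ0$, while finiteness of every $\kappa_{i,j}$ follows from the Lipschitz hypothesis in Assumption~\ref{cost}; it remains to show that the reduced matrix $\Upsilon$ in~\eqref{upsi} is a P-matrix. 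My plan is a comparison with $M:=\kappa_1 I-\kappa_2 G$. The symmetric part of $M$ is $\kappa_1 I-\kappa_2\tfrac{G+G^\top}{2}\succeq(\kappa_1-\kappa_2\|G\|_2)I\succ0$, so $M$ is positive definite and hence a P-matrix; since its off-diagonal entries $-\kappa_2 G_{ij}$ are non-positive, $M$ is also a Z-matrix, thus a nonsingular M-matrix. Next I would check $\Upsilon\ge_e M$ entrywise: on the diagonal $\Upsilon_{i,i}=\kappa_1^i\ge\kappa_1=M_{i,i}$, and off the diagonal $\Upsilon_{i,j}=-\kappa_2^i G_{ij}\ge-\kappa_2 G_{ij}=M_{i,j}$ because $\kappa_2^i\le\kappa_2$ and $G_{ij}\ge0$. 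As $\Upsilon$ is itself a Z-matrix dominating the M-matrix $M$ entrywise, the standard entrywise comparison theorem for M-matrices gives that $\Upsilon$ is a nonsingular M-matrix, hence a P-matrix, which is exactly the $P_\Upsilon$ condition.

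The main obstacle is the final step of part~2: transferring the P-matrix property from $M$ to $\Upsilon$. The eigenvalue estimate of part~1 is elementary once the $v^\top M v$ argument is in hand, and the identification of the blocks is bookkeeping. By contrast, $\Upsilon\ge_e M$ alone does not preserve the P-matrix property for general matrices; the argument genuinely relies on both $M$ and $\Upsilon$ being Z-matrices, so that the equivalence of P-matrices and M-matrices within the Z-matrix class and the entrywise monotonicity of the M-matrix class can be invoked. I would therefore isolate this monotonicity statement as the key lemma (to be recorded among the matrix tools in Appendix~A) and emphasize that replacing the block $K^i(x)$ by the scalar surrogate $\kappa_2^i G_{ij}$ discards the sign of $K^i(x)$, which is precisely why the proof must route through the Z-matrix comparison rather than through positive definiteness of $\nabla_x F$ itself.
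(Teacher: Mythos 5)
Your proposal is correct and follows essentially the same route as the paper: part~1 via the bound $\lambda_{\textup{min}}\big(\tfrac{K(x)W+W^\top K(x)^\top}{2}\big)\ge-\kappa_2\|\PP\|_2$ combined with $D(x)\succeq\kappa_1 I$, and part~2 via showing $M=\kappa_1 I-\kappa_2\PP$ is a positive-definite Z-matrix (hence a P-matrix) and transferring the property to $\Upsilon$ by the entrywise comparison theorem for Z-matrices, which is exactly the step the paper covers by citing \cite[Theorem 4.2]{fiedler1962matrices}. Your direct quadratic-form derivation of the eigenvalue bound is a minor, equally valid variant of the paper's spectral-radius argument.
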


For cases when Assumption \ref{ass:gen}a is not satisfied (see Section \ref{sec:overview}), we develop in the following subsections alternative guarantees in terms of  Assumptions \ref{ass:gen}b and  \ref{ass:gen}c.

\subsection{\textbf{A condition in terms of $\|\PP\|_\infty$}}

 As detailed in Section \ref{sec:overview1},  Assumption \ref{ass:gen}b  is useful for asymmetric networks where each agent is influenced by a relatively small number of agents (i.e.,  $\|\PP\|_\infty$ is small). The next theorem shows that this  is sufficient to prove the 
$P_\Upsilon$ condition.

\begin{theorem}[$\boldsymbol{P_\Upsilon}$ condition  under Assumption \ref{ass:gen}b]\label{thm:inf}
Suppose that  Assumptions \ref{cost} and \ref{ass:gen}b hold. Then the game Jacobian $F$ as defined in \eqref{eq:F} satisfies the $P_\Upsilon$ condition.
\end{theorem}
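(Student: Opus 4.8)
The plan is to verify the three requirements of the $P_\Upsilon$ condition in Definition \ref{def:P_up} for the matrix-valued map $A(x):=\nabla_x F(x)=D(x)+K(x)\W$ given in \eqref{eq:grad_F}. First I would read off the block structure: since $\W=\PP\otimes I_n$ and $\PP_{ii}=0$, the diagonal blocks are $A_{i,i}(x)=D^i(x)$ while the off-diagonal blocks are $A_{i,j}(x)=\PP_{ij}K^i(x)$ for $j\neq i$. By the Hessian symmetry noted after \eqref{eq:DK}, each $A_{i,i}(x)=D^i(x)$ is symmetric; moreover Assumption \ref{ass:gen}b forces $\kappa_1>\kappa_2\|\PP\|_\infty\ge 0$, so $\lambda_{\textup{min}}(D^i(x))\ge\kappa_1>0$ and hence $A_{i,i}(x)=A_{i,i}(x)^\top\succ 0$. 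Finiteness of the $\kappa_{i,j}$ follows because the Lipschitz assumption on $\nabla_{x^i}J^i$ in Assumption \ref{cost} bounds $\|K^i(x)\|_2$, so $\kappa_2<\infty$. As recorded below \eqref{k1}, the resulting entries are $\kappa_{i,i}=\kappa_1^i$ and $\kappa_{i,j}=\kappa_2^i\,\PP_{ij}$, so that $\Upsilon_{i,i}=\kappa_1^i$ and $\Upsilon_{i,j}=-\kappa_2^i\,\PP_{ij}$ for $j\neq i$.

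It then remains to show that $\Upsilon$ is a P-matrix, and the strategy is to compare it with the auxiliary matrix $M:=\kappa_1 I-\kappa_2\PP$. Both are \emph{Z-matrices}: for $M$ because $\PP\ge_e 0$, $\kappa_2\ge 0$, and $\PP_{ii}=0$; for $\Upsilon$ because $\kappa_2^i\ge 0$ and $\PP_{ij}\ge 0$. Furthermore $M\le_e\Upsilon$ entry-wise, since on the diagonal $\kappa_1\le\kappa_1^i$ by definition of $\kappa_1$, and off the diagonal $-\kappa_2\PP_{ij}\le-\kappa_2^i\PP_{ij}$ because $\kappa_2^i\le\kappa_2$ and $\PP_{ij}\ge 0$. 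I would then invoke the standard monotonicity of M-matrices: if $M$ is a nonsingular M-matrix and $M\le_e\Upsilon$ with $\Upsilon$ a Z-matrix, then $\Upsilon$ is itself a nonsingular M-matrix; and for a Z-matrix, being a nonsingular M-matrix is equivalent to being a P-matrix. Hence it suffices to prove that $M$ is a nonsingular M-matrix.

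The key step is to establish this last claim from Assumption \ref{ass:gen}b. Writing $M=\kappa_1 I-\kappa_2\PP$ with $\kappa_2\PP\ge_e 0$, the splitting characterization of nonsingular M-matrices states that $M$ is a nonsingular M-matrix if and only if $\kappa_1>\rho(\kappa_2\PP)=\kappa_2\,\rho(\PP)$. Since $\PP_{ii}=0$, Gershgorin's theorem (as already observed in Section \ref{sec:overview1}) gives $\rho(\PP)\le\|\PP\|_\infty$, so that $\kappa_2\,\rho(\PP)\le\kappa_2\|\PP\|_\infty<\kappa_1$ by Assumption \ref{ass:gen}b. Thus $M$ is a nonsingular M-matrix, which by the monotonicity argument above makes $\Upsilon$ a P-matrix, establishing the $P_\Upsilon$ condition.

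The main obstacle I anticipate is not any single computation but the need to move cleanly between three equivalent descriptions of $M$ and $\Upsilon$ — P-matrix, Z-matrix, and nonsingular M-matrix — and to cite the correct monotonicity statement (e.g.\ from Berman--Plemmons) that a Z-matrix dominating a nonsingular M-matrix entry-wise is again a nonsingular M-matrix. The only point at which Assumption \ref{ass:gen}b is actually used is the spectral-radius bound $\kappa_2\rho(\PP)<\kappa_1$, and the feature that singles out the infinity norm as the \emph{natural} quantity here is precisely the Gershgorin inequality $\rho(\PP)\le\|\PP\|_\infty$, which remains valid for arbitrary, possibly asymmetric, non-negative $\PP$ with zero diagonal.
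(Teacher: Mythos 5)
Your proof is correct and follows essentially the same route as the paper: both construct the auxiliary matrix $M=\kappa_1 I-\kappa_2\PP$, use Gershgorin/the zero diagonal of $\PP$ to exploit Assumption \ref{ass:gen}b, and then transfer the P-matrix property from $M$ to $\Upsilon$ via the entry-wise comparison of Z-matrices. The only cosmetic difference is that the paper applies Gershgorin directly to $M$ and each of its principal submatrices to conclude all eigenvalues have positive real part, whereas you route through the equivalent nonsingular-M-matrix characterization $\kappa_1>\kappa_2\rho(\PP)$ together with $\rho(\PP)\le\|\PP\|_\infty$; these are interchangeable standard facts and both arguments are sound.
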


On the other hand, under Assumption \ref{ass:gen}b, it is not possible to guarantee strong monotonicity. An illustrative example is given next. 

{ \begin{example}[Assumption \ref{ass:gen}b: $\boldsymbol{P_\Upsilon}$ but not strongly monotone]\label{ex:PnotS}
Consider the linear quadratic game in Example \ref{ex:lq} with $N=6$ players 
over an asymmetric star as in  Figure \ref{fig:net}d) and $K^i=0.9$ for all $i\in\N[1,6]$. Then  
$$\alpha_\infty:= 1 - 0.9\cdot 1 >0,$$
hence Theorem \ref{thm:inf} holds and $F$ satisfies the $P_\Upsilon$ condition. On the other hand, once can see that $\nabla_x F(x)=I_6+0.9 \left[\begin{smallmatrix}0 & 0  \\ \mathbbm{1}_5 & 0\end{smallmatrix}\right]$ and $\frac{\nabla_x F(x)+\nabla_x F(x)^\top}{2}=I_6+\frac{0.9}{2} \left[\begin{smallmatrix}0 & \mathbbm{1}_5^\top  \\ \mathbbm{1}_5 & 0\end{smallmatrix}\right]$ has a negative eigenvalue. Hence $F$ is not monotone. This is consistent with the fact that $\alpha_2=1 - 0.9\cdot \sqrt{5} <0$, hence Assumption \ref{ass:gen}a is violated. 
 Following the discussion in Section \ref{diss_smon}, we note that Assumption~\ref{ass:gen}b does not guarantee strong monotonicity because it provides a bound on $\|\PP\|_\infty$ but not on $\|\PP\|_1$. In this example for instance $\|\PP\|_1=N-1 \gg \|\PP\|_\infty=1$.
\end{example} }

\subsection{\textbf{Conditions in terms of $|\lambda_{\textup{min}}(G)|$}}
Finally, we consider Assumption \ref{ass:gen}c which is formulated in  terms of $|\lambda_{\textup{min}}(G)|$ and therefore can be applied only to symmetric networks, for which $|\lambda_{\textup{min}}(G)|$  is well defined.\footnote{Even if $|\lambda_{\textup{min}}(G)|$ was well defined for an asymmetric network, Theorem \ref{thm:min} and Theorem \ref{thm:scalar} would still not apply.} We recall that for symmetric  networks Assumption \ref{ass:gen}c is the least restrictive condition (see Section \ref{sec:overview2} and Figure \ref{fig:sets}). Because of such generality, Assumption \ref{ass:gen}c alone is  not sufficient to guarantee any of the properties in Definition~\ref{def:mon}, as illustrated in the next example. In this subsection, we derive additional conditions to complement Assumption~\ref{ass:gen}c and guarantee that at least one of the properties in Definition \ref{def:mon} is satisfied. 

{ \begin{example}[Assumption \ref{ass:gen}c:  multiple equilibria]\label{ex2}
Consider the  linear quadratic game in Example \ref{ex:lq} with $K^i=-\frac{1}{N-1}$ and $a^i=0$ for all $i\in\N[1,N]$ played over the complete network $G_c:=\mathbbm{1}_N\mathbbm{1}_N^\top-I_N$. Then

$$\textstyle \alpha_{\textup{min}}:= \kappa_1-\kappa_2 |\lambda_{\textup{min}}(\PP_c)|=1-\frac{1}{N-1}>0,$$
hence Assumption \ref{ass:gen}c is met for $N>2$.
From Proposition \ref{prop:kkt}, a vector $x$ is a Nash equilibrium if and only if it solves the VI$(\mathcal{X}, F)$. This is equivalent to the requirement

\begin{equation}
\label{eq:fp}x=\Pi_{\mathcal{X}}[x-F(x)],
\end{equation}
see \cite[Proposition 1.5.8]{facchinei2007finite}. In this example $\mathcal{X}=\mathbb{R}^N_{\ge 0}$ and $F(x)=[I_N -\frac{1}{N-1} G_c] x$. For any $\beta\in\R$ we get 

$$\textstyle F(\beta\mathbbm{1}_N)=[I_N -\frac{1}{N-1} G_c] \beta\mathbbm{1}_N= \beta\mathbbm{1}_N - \beta \frac{1}{N-1} G_c \mathbbm{1}_N= \beta\mathbbm{1}_N - \beta \frac{N-1}{N-1} \mathbbm{1}_N=0.$$
Consequently, any vector $x=\beta \mathbbm{1}_N$ with $\beta\ge0$ solves  the fixed point equation \eqref{eq:fp} and this game has infinitely many Nash equilibria. None of the properties in Definition~\ref{def:mon} can therefore hold (recall that  they all imply uniqueness by Proposition \ref{prof:ex}). This example does not contradict Theorem \ref{thm:norm}, since 

$$\alpha_2:=\textstyle \kappa_1-\kappa_2 \|\PP_c\|=1-\frac{1}{N-1}(N-1)=0$$
hence Assumption \ref{ass:gen}a is violated. We finally note that, since $K^i<0$ for all $i$,  this is a game of strategic complements. \end{example}}

The previous example  proves that Assumption \ref{ass:gen}c is not enough to guarantee uniqueness of the Nash equilibrium in games of strategic complements. We  show in the following  that Assumption~\ref{ass:gen}c is instead sufficient for games of strategic substitutes. The intuition for this discrepancy is that for games with the simple structure in Example \ref{ex:lq}, the properties in Definition \ref{def:mon} are related to the minimum eigenvalue of  $I+K\PP$ (i.e., $\nabla_x F(x)$) being positive.  For simplicity consider the homogeneous weight case where $K^i=\kappa$ for all $i$ so that $K=\kappa I_N$. It is then clear that if $\kappa>0$ (substitutes case), the minimum eigenvalue of $I+\kappa\PP$  is related to the $\lambda_{\textup{min}}(\PP)$ hence Assumption \ref{ass:gen}c suffices, while if $\kappa<0$ (complements case) the minimum eigenvalue of $I+\kappa\PP$  is related to  $\lambda_{\textup{max}}(\PP)$ hence the stronger  Assumption \ref{ass:gen}a is needed. This argument is made rigorous and generalized to nonlinear and multidimensional network games in the following theorems.

\begin{theorem}[Strong monotonicity  under Assumption \ref{ass:gen}c]\label{thm:min}
Suppose that  Assumptions~\ref{cost} and \ref{ass:gen}c hold. Moreover, assume that
\begin{enumerate}
\item $K^i(x)=\tilde K(x)$ for all $i\in\N[1,N]$ and all $x\in\mathcal{X}$;
\item $\tilde K(x)+\tilde K(x)^\top \succeq 0$ for all $x\in\mathcal{X}$.
\end{enumerate}
Then the game Jacobian $F$ as defined in \eqref{eq:F} is strongly monotone with constant $\alpha_{\textup{min}}$.
\end{theorem}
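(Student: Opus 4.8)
The plan is to apply Proposition~\ref{prop:suff}a, which reduces strong monotonicity of $F$ with constant $\alpha_{\textup{min}}$ to the uniform bound $\frac{\nabla_x F(x)+\nabla_x F(x)^\top}{2}\succeq \alpha_{\textup{min}} I$ for all $x\in\mathcal{X}$. Starting from \eqref{eq:grad_F}, inequality \eqref{eq:main} together with $\lambda_{\textup{min}}(D(x))\ge\kappa_1$ gives
\[
\frac{\nabla_x F(x)+\nabla_x F(x)^\top}{2}\succeq \kappa_1 I+\frac{K(x)W+W^\top K(x)^\top}{2},
\]
so the entire problem reduces to lower bounding $\lambda_{\textup{min}}$ of the symmetric Jordan product $\frac{K(x)W+W^\top K(x)^\top}{2}$.

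Next I would exploit the two hypotheses to simplify this product. The homogeneity $K^i(x)=\tilde K(x)$ lets me write $K(x)=I_N\otimes \tilde K(x)$; combined with $W=G\otimes I_n$ and the mixed-product rule (Lemma~\ref{lem:k}, item~3), this yields $K(x)W=G\otimes \tilde K(x)$ and $W^\top K(x)^\top=G^\top\otimes \tilde K(x)^\top$. Since Assumption~\ref{ass:gen}c presupposes a symmetric network, $G=G^\top$, so Lemma~\ref{lem:k}, item~4, collapses the sum to
\[
\frac{K(x)W+W^\top K(x)^\top}{2}=G\otimes S(x),\qquad S(x):=\tfrac{\tilde K(x)+\tilde K(x)^\top}{2}\succeq 0,
\]
a Kronecker product of two symmetric matrices, hence itself symmetric.

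The crux is then the eigenvalue estimate for $G\otimes S(x)$, and this is where I expect the main difficulty (really a matter of careful sign bookkeeping). By Lemma~\ref{lem:k}, item~2, the spectrum of $G\otimes S(x)$ is $\{\lambda_i(G)\,s_j(S(x))\}$. Because $S(x)\succeq 0$ forces every $s_j\ge 0$, the most negative product is obtained by pairing the eigenvalue $\lambda_{\textup{min}}(G)$ --- which is strictly negative by Lemma~\ref{lemma:P} --- with the largest $s_j=\lambda_{\textup{max}}(S(x))$; no positive eigenvalue of $G$ can produce anything smaller. Thus $\lambda_{\textup{min}}(G\otimes S(x))=\lambda_{\textup{min}}(G)\,\lambda_{\textup{max}}(S(x))$. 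It is exactly the semidefiniteness of $S(x)$ that upgrades the bound from $\|G\|_2=\lambda_{\textup{max}}(G)$ (as in Theorem~\ref{thm:norm}) to the smaller $|\lambda_{\textup{min}}(G)|$; without it the minimum could instead involve $\lambda_{\textup{max}}(G)$, recovering only Assumption~\ref{ass:gen}a.

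Finally I would bound $\lambda_{\textup{max}}(S(x))=\|S(x)\|_2\le \tfrac12(\|\tilde K(x)\|_2+\|\tilde K(x)^\top\|_2)=\|\tilde K(x)\|_2\le \kappa_2$, using the triangle inequality together with $\|\tilde K(x)^\top\|_2=\|\tilde K(x)\|_2$. Combining with $\lambda_{\textup{min}}(G)=-|\lambda_{\textup{min}}(G)|$ gives $G\otimes S(x)\succeq -\kappa_2|\lambda_{\textup{min}}(G)|\,I$, whence
\[
\frac{\nabla_x F(x)+\nabla_x F(x)^\top}{2}\succeq \big(\kappa_1-\kappa_2|\lambda_{\textup{min}}(G)|\big) I=\alpha_{\textup{min}} I,
\]
which is positive definite by Assumption~\ref{ass:gen}c, completing the argument via Proposition~\ref{prop:suff}a.
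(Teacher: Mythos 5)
Your proposal is correct and follows essentially the same route as the paper's proof: the same reduction via Proposition \ref{prop:suff}a and \eqref{eq:main}, the same Kronecker identity $K(x)W=G\otimes\tilde K(x)$ collapsing the Jordan product to $G\otimes\frac{\tilde K(x)+\tilde K(x)^\top}{2}$, and the same pairing of $\lambda_{\textup{min}}(G)<0$ with $\lambda_{\textup{max}}$ of the positive semidefinite symmetrized $\tilde K(x)$ to obtain the bound $-\kappa_2|\lambda_{\textup{min}}(G)|$. No gaps.
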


In Theorem \ref{thm:min}, we guarantee strong monotonicity under Assumption \ref{ass:gen}c instead of Assumption~\ref{ass:gen}a by additionally assuming that 1) the neighbor aggregate $z^i$ affects the cost of each agent in the same way (i.e., $\frac{\partial J^i}{\partial x^iz^i}=\tilde K(x)$ for all $i$) and 2) the symmetrized version of $\tilde K(x)$  is positive semidefinite.
Such conditions are unfortunately not  sufficient to  guarantee  the $P_\Upsilon$ condition,  as illustrated next.

{ \begin{example}[Assumption \ref{ass:gen}c: Strongly monotone not $\boldsymbol{P_\Upsilon}$]\label{ex4}
Consider the  linear quadratic game in Example \ref{ex:lq}, with $N=10$, $K^i=0.5>0$ for all $i\in\N[1,N]$ and $\PP=\PP_c$. Then
$$\alpha_{\textup{min}}:=\kappa_1-\kappa_2 |\lambda_{\textup{min}}(\PP_c)|=1-0.5=0.5>0,$$
hence all the assumptions of Theorem \ref{thm:min}  are satisfied and $F$ is strongly monotone.
Nonetheless, the matrix $\Upsilon$ in \eqref{upsi} is equal to $I_{10} - 0.5 G_c$ and has minimum eigenvalue $\lambda_{\textup{min}}(\Upsilon)=1  - 0.5 \lambda_{\textup{max}}(G_c) = 1 - 0.5 \cdot 9 <0$. Hence $\Upsilon$ is not a P-matrix (all the real eigenvalues of a P-matrix are positive as shown in \cite[Theorem 3.3]{fiedler1962matrices}). Condition $P_\Upsilon$ is thus violated. Intuitively this happens because the condition $P_\Upsilon$ is agnostic to the sign of the $K^i$s (since $\kappa_{i,j}=|K^i|G_{ij}$) and is therefore overly conservative for cases when the $K^i$ are positive (substitute). Note that (for scalar games) this is not the case for the uniform P matrix condition in Definition \ref{def:P_matrix_uniform}. By Lemma \ref{lemma:Paffine} this condition is satisfied if there exists $H$ diagonal and positive definite such that $H(I_{10}+0.5 \PP_c)+(I_{10}+0.5 \PP_c)H\succ 0$. This is clearly the case for $H=I_{10}$ since $I_{10}+0.5 \PP_c\succ 0$. Note that the uniform P matrix condition accounts for the fact that the $K^i$ are positive (i.e., depends on $I_{10}+0.5 \PP_c$ instead of $I_{10}-0.5 \PP_c$).
\end{example}}

Both conditions in Theorem \ref{thm:min} are needed to guarantee strong monotonicity. In fact, 
Example \ref{ex2} shows that if condition 1) holds but condition 2) does not, then $F$ might not be strongly monotone. The next example shows that the same is true if condition 2) holds but condition 1) does not, that is,  if we consider a game of strategic substitutes with  heterogeneous weights.

\begin{example}[Assumption \ref{ass:gen}c: Uniform P-function not strongly monotone, not $\boldsymbol{P_\Upsilon}$]\label{ex3}
Consider the  linear quadratic game in Example \ref{ex:lq}, with $N=10$, $K^i=0.1>0$ for all $i\in\Z[1,9]$, $K^{10}=0.9>0$ and $\PP=\PP_c$. Then

$$\alpha_{\textup{min}}:=\kappa_1-\kappa_2 |\lambda_{\textup{min}}(\PP_c)|=1-0.9=0.1>0,$$
hence Assumption \ref{ass:gen}c is satisfied, $K^i>0$ for all $i$ but $K^{10}\neq K^i$ for $i\in\N[1,9]$.
One can verify that the following quantity

\begin{equation}\label{setp_ex1}
\lambda_{\textup{min}}\left(\frac{\nabla_x F(x)+\nabla_x F(x)^\top}{2}\right)=\lambda_{\textup{min}}\left(I+\frac{K\PP_c+\PP_cK}{2}\right)
\end{equation} 
is negative, hence $F$ is not strongly monotone (it is not even monotone). 
The reason why with heterogeneous weights, strong monotonicity may fail  is that in this case the matrix $K$ is diagonal but not scalar.  Consequently, even though $K^i>0$ for all $i$ it might happen that

\begin{equation}\label{setp_ex2}
\lambda_{\textup{min}}\left(I+\frac{K\PP_c+\PP_cK}{2}\right) < 1+ \lambda_{\textup{max}}(K)  \lambda_{\textup{min}}(\PP_c) = 1- \kappa_2 |\lambda_{\textup{min}}(\PP_c)|=: \alpha_{\textup{min}}.
\end{equation}
Consequently, one \textup{cannot} use Assumption \ref{ass:gen}c to lower bound the left hand side of \eqref{setp_ex2} by  a positive quantity (i.e., $\alpha_{\textup{min}}$), which is what would be needed to prove that $F$ is strongly monotone.
The $P_\Upsilon$ condition is  not satisfied either because in this example $\Upsilon=I-K\PP_c$ and $\lambda_{\textup{min}}(\Upsilon)<0$, hence $\Upsilon$ is not a P-matrix. By choosing $H=K^{-1}$ it is instead easy to show that $H(I+KG_c)+(I+G_cK)H= 2K^{-1}+ 2G_c \succ 0$. It follows from Lemma \ref{lemma:Paffine} and Proposition \ref{prop:suff}c) that the uniform P-matrix condition holds and $F$ is a uniform P-function. Again note  that the difference between the $P_\Upsilon$ and the uniform P-matrix condition reduces in this simple case to the difference between $I-K\PP_c$ and $I+K\PP_c$.
\end{example}

In the previous example, the operator is not strongly monotone nor satisfies the $P_\Upsilon$ condition. Nonetheless,  uniqueness of the Nash equilibrium can still be guaranteed since $F$ is a uniform P-function. We next prove that for scalar games (i.e., when $n=1$) this is true in general, as long as the $K^i(x)$ are positive.

\begin{theorem}[Uniform P-matrix condition for scalar games of strategic substitutes]\label{thm:scalar}
 Suppose that  Assumptions \ref{cost} and \ref{ass:gen}c hold and that $n=1$. Then 
\begin{enumerate}
\item if there exists $\nu>0$ such that $K^i(x)\ge \nu$ for all $i\in\N[1,N]$ and all $x\in\mathcal{X}$ then $\nabla_x F(x)$ satisfies the uniform P-matrix condition and the game Jacobian $F$ as defined in \eqref{eq:F} is a uniform P-function;
\item  if  $K^i(x)>0$ for all $i\in\N[1,N]$ and all $x\in\mathcal{X}$ then $\nabla_x F(x)$ is a P-matrix for all $x\in\mathcal{X}$ and  $F(x)$ is a P-function.
\end{enumerate}
\end{theorem}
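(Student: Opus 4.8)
The plan is to work directly from the structural decomposition $\nabla_x F(x) = D(x) + K(x)\W$ in \eqref{eq:grad_F}. Since $n=1$ we have $\W = G$ and, crucially, $D(x)$ and $K(x)$ are \emph{diagonal} matrices, with $K(x)\succ 0$ under either hypothesis. Guided by Examples \ref{ex4} and \ref{ex3}, I would take the diagonal scaling $H(x):=K(x)^{-1}\succ 0$, rather than $H=I$ as in the strongly monotone case; this is the choice adapted to strategic substitutes.

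The central computation is the symmetrized product. Using that diagonal matrices commute and that $G=G^\top$ (Assumption \ref{ass:gen}c is stated only for symmetric networks), I would show
\[
H(x)\nabla_x F(x) + \nabla_x F(x)^\top H(x) = K(x)^{-1}D(x) + G + D(x)K(x)^{-1} + G^\top = 2\big(D(x)K(x)^{-1} + G\big).
\]
Now $D(x)K(x)^{-1}$ is diagonal with entries $D^i(x)/K^i(x)\ge \kappa_1/\kappa_2$ (because $D^i(x)\ge \kappa_1>0$ and $0<K^i(x)\le\kappa_2$), while $G\succeq \lambda_{\textup{min}}(G)I = -|\lambda_{\textup{min}}(G)|\,I$ by Lemma \ref{lem:s} and Lemma \ref{lemma:P}. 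Hence
\[
H(x)\nabla_x F(x) + \nabla_x F(x)^\top H(x) \succeq 2\Big(\tfrac{\kappa_1}{\kappa_2} - |\lambda_{\textup{min}}(G)|\Big)I = \tfrac{2\alpha_{\textup{min}}}{\kappa_2}\,I \succ 0,
\]
where strict positivity is exactly Assumption \ref{ass:gen}c. Since $w^\top H(x)\nabla_x F(x)w = \tfrac12 w^\top\!\big(H(x)\nabla_x F(x) + \nabla_x F(x)^\top H(x)\big)w$, this gives the pointwise bound $w^\top H(x)\nabla_x F(x)w \ge (\alpha_{\textup{min}}/\kappa_2)\|w\|_2^2$ for all $w$ and all $x\in\mathcal{X}$.

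For statement 1, the extra hypothesis $K^i(x)\ge\nu>0$ yields $\lambda_{\textup{max}}(H(x))\le 1/\nu$ uniformly in $x$, so $H(x)$ is a uniformly bounded diagonal scaling. Together with the quadratic lower bound above this is precisely the uniform P-matrix condition (Definition \ref{def:P_matrix_uniform}, via Lemma \ref{lemma:Paffine}) with constant $\eta=\alpha_{\textup{min}}/\kappa_2$, and Proposition \ref{prop:suff}c) then gives that $F$ is a uniform P-function. For statement 2 I can no longer bound $\lambda_{\textup{max}}(H(x))$ uniformly, but for each fixed $x$ the inequality $w^\top H(x)\nabla_x F(x)w>0$ for all $w\neq0$ with $H(x)\succ0$ diagonal is exactly the characterization in Definition \ref{def:P_matrix} that $\nabla_x F(x)$ is a P-matrix; since this holds at every $x$ of the rectangle $\mathcal{X}$, \cite[Proposition 3.5.9(a)]{facchinei2007finite} lifts this pointwise Jacobian property to the conclusion that $F$ is a P-function (Definition \ref{def:mon2}c).

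I expect the main obstacle to be conceptual rather than computational: recognizing that $H=K(x)^{-1}$ is the correct scaling and that the sign cancellation works only because $G$ is symmetric and commutes, as a multiplication, appropriately with the diagonal $K(x)$. This is exactly what separates substitutes from complements — with $H=I$ one is forced back to $\lambda_{\textup{max}}(G)=\|G\|_2$ and hence the stronger Assumption \ref{ass:gen}a. The only other delicate point is the gap between the two statements: it reduces entirely to whether $H(x)=K(x)^{-1}$ is \emph{uniformly} bounded, which is why part 1 needs the uniform floor $\nu$ and delivers a uniform P-function, while part 2 settles for a P-function through the Jacobian-to-operator argument of \cite{facchinei2007finite}.
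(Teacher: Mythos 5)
Your core computation is exactly the paper's: the diagonal scaling $H=K(x)^{-1}$, the observation that $K(x)^{-1}\nabla_x F(x)=D(x)K(x)^{-1}+G$ is symmetric when $G=G^\top$, and the eigenvalue bound $\tfrac{\kappa_1}{\kappa_2}+\lambda_{\textup{min}}(G)=\tfrac{\alpha_{\textup{min}}}{\kappa_2}>0$. The treatment of part 2 (pointwise P-matrix plus \cite[Proposition 3.5.9(a)]{facchinei2007finite}) also matches the paper. However, as written, part 1 has a genuine gap: you verify the inequality $w^\top H(x)\nabla_x F(x)w\ge \eta\|w\|_2^2$ only for the Jacobian evaluated at a \emph{single} point $x$, whereas Definition \ref{def:P_matrix_uniform} requires it for every matrix $A_x$ whose $h$-th row is $[\nabla_x F(x_{[h]})]_{(h,:)}$ with the rows drawn from $N$ \emph{different} points $\{x_{[h]}\}_{h=1}^N$. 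This is not a pedantic distinction; Remark \ref{remark} explains that the row-mixed requirement is exactly what the proof of Proposition \ref{prop:suff}c) needs, because the mean value theorem is applied to each component of $F$ separately and produces a different intermediate point per row. Your appeal to Lemma \ref{lemma:Paffine} to bridge this does not work here: that lemma collapses the row-mixed condition to the single-matrix one only for affine operators, where $\nabla_x F$ is constant, while Theorem \ref{thm:scalar} covers general nonlinear scalar games.

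The repair is immediate and is what the paper does: since $n=1$, the $h$-th row of $\nabla_x F(x_{[h]})$ is $D^h(x_{[h]})e_h^\top + K^h(x_{[h]})G_{(h,:)}$, so the mixed matrix is $A_x=\mbox{blkd}[D^h(x_{[h]})]_{h=1}^N+\mbox{blkd}[K^h(x_{[h]})]_{h=1}^N\,G$, and the scaling $H_x=\mbox{blkd}[K^h(x_{[h]})]_{h=1}^N^{-1}$ makes $H_xA_x$ symmetric with the same lower bound $\tfrac{\kappa_1}{\kappa_2}+\lambda_{\textup{min}}(G)$ and the same uniform bound $\lambda_{\textup{max}}(H_x)\le 1/\nu$. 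So your argument survives verbatim once you run it on $A_x$ rather than on $\nabla_x F(x)$; as submitted, though, the condition you actually establish is weaker than the one Proposition \ref{prop:suff}c) consumes.
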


\section{Table 2: Proofs of technical statements}
\label{appendix:proof}

\subsection*{\textbf{Proof of Theorem \ref{thm:norm}: Strong monotonicity and $\boldsymbol{P_\Upsilon}$ condition under Assumption \ref{ass:gen}a}}

\begin{enumerate}
 \item To prove strong monotonicity note that $D(x)$ in \eqref{eq:DK} is a symmetric block diagonal matrix and, by definition \eqref{k1}, $\lambda_{\textup{min}}(D(x))\ge \kappa_1$. It follows from Lemma \ref{lem:s} in Appendix \ref{appendix:def} that  $D(x)\succeq \kappa_1 I$ for all $x\in\mathcal{X}$.  By \eqref{eq:grad_F}  it then holds

\begin{equation}\label{eq:start}
 \frac{\nabla_x F(x)+\nabla_x F(x)^\top}{2}\succeq   \kappa_1 I +\frac{K(x) \W + \W ^\top K(x)^\top}{2}.
\end{equation}

Note that 

\begin{align*}
\lambda_{\textup{min}}\left(\frac{K(x) \W + \W ^\top K(x)^\top}{2}\right)&\ge -\rho\left(\frac{K(x) \W + \W ^\top K(x)^\top}{2}\right) = - \Big\|\frac{K(x) \W + \W ^\top K(x)^\top}{2}\Big\|_2\\
&\ge - \frac{\|K(x) \W \|_2+ \|\W ^\top K(x)^\top\|_2}{2} = - \|K(x) \W \|_2 \ge  -\|K(x)\|_2  \| \PP \|_2  ,\end{align*}
where we used the fact that for symmetric matrices all eigenvalues are real, the spectral radius equals the $2$-norm (see Lemma~\ref{lem:s}), the triangular inequality, the fact that for any matrix $A$ it holds $\|A\|_2=\|A^\top\|_2$ and $\|\W \|_2=\|\PP  \otimes I_n\|_2=\|\PP \|_2$ (see Lemma \ref{lem:k}).
Moreover, note that 

\begin{align}\label{k2_bound}
\|K(x)\|_2 &\le \max_{x} \|K(x)\|_2 =  \max_{x} \max_{i} \|K^i(x)\|_2  =  \max_{i} \max_{x} \|K^i(x)\|_2  =  \kappa_2
\end{align}
since the norm of a block diagonal matrix equals the largest norm of its blocks.
Using the last result we can immediately see that 

 \begin{align*}
& \lambda_{\textup{min}}\left(\kappa_1 I + \frac{K(x) \W + \W ^\top K(x)^\top}{2}\right)  = \kappa_1 + \lambda_{\textup{min}}\left(\frac{K(x) \W + \W ^\top K(x)^\top}{2}\right) \ge \kappa_1 - \kappa_2 \|\PP\|_2 = \alpha_2.
\end{align*}
Hence \begin{equation}\label{eq:intera} \begin{aligned}
 \frac{\nabla_x F(x)+\nabla_x F(x)^\top}{2} & \succeq \lambda_{\textup{min}}\left(\kappa_1 I + \frac{K(x) \W + \W ^\top K(x)^\top}{2}\right)  I_{Nn} \succeq \alpha_2 I_{Nn}.
\end{aligned} 
\end{equation}
Since by Assumption \ref{ass:gen}a $\alpha_2>0$, equation \eqref{eq:intera} implies that $F(x)$ is strongly monotone with constant $\alpha_2$.
\item For network games the matrix $\Upsilon$ in Definition \ref{def:P_up} can be rewritten equivalently with the notation in \eqref{k1}  as
 
 $$\Upsilon:=\left[\begin{array}{cccc}\kappa_1^1 & -\kappa_2^1 \PP _{12} & \hdots &- \kappa_2^1 \PP _{1N} \\-\kappa_2^2 \PP _{21} & \ddots & 0 & -\kappa_2^2 \PP _{2N} \\\vdots & 0 & \ddots & \vdots \\-\kappa_2^N \PP _{N1} & \hdots & - \kappa_2^N \PP _{N(N-1)} & \kappa_1^N\end{array}\right] \in\R^{N\times N}. $$
Let us define
$M:= \kappa_1 I_N - \kappa_2\PP \in\R^{N\times N}$ and note that 
under  Assumption \ref{ass:gen}a 

\begin{align*}
\lambda_{\textup{min}}\left(\frac{M+M^\top}{2}\right)&=\lambda_{\textup{min}}\left(\kappa_1 I_N - \kappa_2 \frac{\PP+\PP^\top}{2}\right) =\kappa_1  - \kappa_2\lambda_{\textup{max}}\left( \frac{\PP+\PP^\top}{2}\right)\\&=\kappa_1  - \kappa_2\Big\| \frac{\PP+\PP^\top}{2}\Big\|_2 \ge \kappa_1  - \kappa_2\| \PP\|_2 =\alpha_2 >0.
\end{align*}
Hence $\frac{M+M^\top}{2} \succ 0$ and for any $w\neq 0$ it holds $w^\top \frac{M+M^\top}{2}w =w^\top M w = \sum_{h=1}^N w_h [Mw]_h>0$. Consequently, there exists $h\in\N[1,N]$ such that $w_h [Mw]_h>0$ and $M$ is a P-matrix. Since both $\Upsilon$ and $M$  are Z-matrices (i.e., all the elements outside the diagonal are non-positive) and $\Upsilon$ is greater or equal than $M$ element-wise, the fact that $M$ is a P-matrix implies that $\Upsilon$ is a P-matrix \cite[Theorem 4.2]{fiedler1962matrices}. Therefore the $P_\Upsilon$ condition holds.
\end{enumerate}

\subsection*{\textbf{Proof of Theorem \ref{thm:inf}:  $\boldsymbol{P_\Upsilon}$ condition  under Assumption \ref{ass:gen}b}}

Consider the matrix $M:= \kappa_1 I_N - \kappa_2\PP \in\R^{N\times N}$ as defined in the proof of Theorem \ref{thm:norm} part 2. Note that for each $i$ it holds $\sum_{j\neq i} |M_{ij}| \le \kappa_2 \|G\|_\infty$. By Gershgorin theorem all the eigenvalues of $M$ are therefore contained in a circle of radius $\kappa_2 \|G\|_\infty$ centered in $(\kappa_1,0)$. Since $\kappa_1-\kappa_2 \|G\|_\infty>0$ all the  eigenvalues of $M$ have positive real part. The same argument can be applied to any principal sub-matrix of $M$ to conclude that all its eigenvalues have positive real part. 
This is a sufficient condition for $M$ to be a P-matrix. The conclusion follows as in Theorem \ref{thm:norm} part 2.

\subsection*{\textbf{Proof of Theorem \ref{thm:min}: Strong monotonicity under Assumption \ref{ass:gen}c}}
The proof of this statement is similar to the proof of Theorem \ref{thm:norm} part 1. The only difference is the bound on $\lambda_{\textup{min}}\left(\frac{K(x) \W + \W ^\top K(x)^\top}{2}\right)$. Note that under Assumption \ref{ass:gen}c the matrix $\PP $ is   symmetric. It follows from Lemma \ref{lemma:P} that $\lambda_{\textup{min}}(\PP )<0$. 
Moreover, since $K^i(x)=\tilde K(x)$ for all $x$ we can write  $K(x)=(I_N\otimes \tilde K(x))$. Consequently, $K(x)\W =(I_N\otimes \tilde K(x))(\PP \otimes I_n)=(\PP \otimes \tilde K(x))$. By the properties of the Kronecker product, 

\begin{equation*}
\begin{aligned}
 \lambda_{\textup{min}}\!\left(\!\frac{K(x) \W + \W ^\top K(x)^\top}{2}\!\right)\!\!&= \lambda_{\textup{min}}\left(\frac{(\PP \otimes \tilde K(x))+(\PP \otimes \tilde K(x))^\top}{2}\right)=\lambda_{\textup{min}}\left(\frac{(\PP \otimes \tilde K(x))+(\PP^\top \otimes \tilde K(x)^\top)}{2}\!\!\right)\\
&=\lambda_{\textup{min}}\left(\frac{(\PP \otimes \tilde K(x))+(\PP \otimes \tilde K(x)^\top)}{2}\right)
=\lambda_{\textup{min}}\left(\PP \otimes \frac{ \tilde K(x)+\tilde K(x)^\top}{2}\right)\\
& =  \lambda_{\textup{min}}(\PP )  \lambda_{\textup{max}}\left(\frac{ \tilde K(x)+\tilde K(x)^\top}{2}\right)  = -| \lambda_{\textup{min}}(\PP ) | \Big\|\frac{ \tilde K(x)+\tilde K(x)^\top}{2}\Big\|_2\\
&\ge\!-| \lambda_{\textup{min}}(\PP )|  \frac{ \|\tilde K(x)\|_2\!+\!\|\tilde K(x)^\top\|_2}{2}\!=\!-|\lambda_{\textup{min}}(\PP )| \|\tilde K(x)\|_2
\!=\!-|\lambda_{\textup{min}}(\PP )| \|K(x)\|_2,
\end{aligned}
\end{equation*}
where we used the fact that the eigenvalues of the Kronecker product are the product of the eigenvalues,  all the eigenvalues of $\tilde K(x)+\tilde K(x)^\top$ are non-negative,  $ \lambda_{\textup{max}}(\frac{ \tilde K(x)+\tilde K(x)^\top}{2})=\|\frac{\tilde K(x)+\tilde K(x)^\top}{2}\|_2$ and $\|\tilde K(x)\|_2=\|K(x)\|_2$, since $K(x)= I_N \otimes \tilde K(x)$ (see Lemmas  \ref{lem:k} and \ref{lem:s}).
Combining this new bound with   \eqref{k2_bound} we get

 \begin{align*}
& \lambda_{\textup{min}}\left(\kappa_1 I + \frac{K(x) \W + \W ^\top K(x)^\top}{2}\right)  = \kappa_1 + \lambda_{\textup{min}}\left(\frac{K(x) \W + \W ^\top K(x)^\top}{2}\right) \ge \kappa_1 - \kappa_2 |\lambda_{\textup{min}}(\PP )| = \alpha_{\textup{min}}.
\end{align*}
Hence from \eqref{eq:start} we get

\begin{equation}\label{eq:inter} \begin{aligned}
 \frac{\nabla_x F(x)+\nabla_x F(x)^\top}{2} & \succeq \lambda_{\textup{min}}\left(\kappa_1 I + \frac{K(x) \W + \W ^\top K(x)^\top}{2}\right)  I_{Nn} \succeq \alpha_{\textup{min}} I_{Nn}.
\end{aligned} 
\end{equation}
Since by Assumption \ref{ass:gen}c $\alpha_{\textup{min}}>0$, equation \eqref{eq:inter} implies that $F(x)$ is strongly monotone with constant $\alpha_{\textup{min}}$.

\subsection*{\textbf{Proof of Theorem \ref{thm:scalar}: Uniform P-matrix condition for scalar games of strategic substitutes}}

 To  prove the first statement  we prove that $\nabla_x F(x)\in\R^{N\times N}$ satisfies the uniform P-matrix condition. The fact that $F$ is a uniform P-function then follows from Proposition \ref{prop:suff}c). Let us consider any set of $N$ vectors $\{x_{[h]}\in\mathcal{X}\}_{h=1}^N$,
and the corresponding matrix

$$A_x:=\left[\begin{smallmatrix} [ \nabla_x F(x_{[1]})]_{(1,:)}\\ \vdots \\   [ \nabla_x F(x_{[N]})]_{(N,:)} \end{smallmatrix}\right]\in\R^{N\times N}.$$
Because of formula \eqref{eq:grad_F} and since $n=1$ we get

$$A_x:=\left[\begin{smallmatrix} D^1(x_{[1]}) & &\\ & \ddots  &\\ & &   D^N(x_{[N]})\end{smallmatrix}\right] + \left[\begin{smallmatrix} K^1(x_{[1]}) & &\\ & \ddots  &\\ & &   K^N(x_{[N]}) \end{smallmatrix}\right] \PP.$$
Note that since $n=1$ for all   $\{x_{[h]}\in\mathcal{X}\}_{h=1}^N$ and $h\in\N[1,N]$,   $D^h(x_{[h]})$ and $K^h(x_{[h]})$ are positive numbers.
In particular, 

$$H_x:= \left[\begin{smallmatrix} K^1(x_{[1]}) & &\\ & \ddots  &\\ & &   K^N(x_{[N]}) \end{smallmatrix}\right]^{-1}$$
is well defined, diagonal and positive definite.
Overall, we get

$$H_xA_x=\left[\begin{smallmatrix} \frac{D^1(x_{[1]})}{K^1(x_{[1]})} & &\\ & \ddots  &\\ & &   \frac{D^N(x_{[N]})}{K^N(x_{[N]})} \end{smallmatrix}\right] + \PP.$$
Note that $H_xA_x$ is symmetric, therefore we can use Lemma \ref{lem:s} and conclude that 

$$H_xA_x \succeq \left(\min_h \frac{D^h(x_{[h]})}{K^h(x_{[h]})} +\lambda_{\textup{min}}(\PP)\right) I_N\succeq \left( \frac{\min_h D^h(x_{[h]})}{\max_h K^h(x_{[h]})} +\lambda_{\textup{min}}(\PP)\right) I_N \succeq \left(\frac{\kappa_1}{\kappa_2}+ \lambda_{\textup{min}}(\PP )\right) I_N=:\eta I_N.$$
It follows that if for any $w\in\R^N$ we set $H_{x,w}=H_x$, it holds $w^\top H_{x,w}A_x  w=w^\top H_xA_x  w \ge \eta\|w\|_2^2$. Moreover, by Assumption~\ref{ass:gen}c we have that $\eta=\frac{\alpha_{\textup{min}}}{\kappa_2}>0$ (recall $\lambda_{\textup{min}}(\PP )<0$ as by Lemma \ref{lemma:P}).
Finally note that for any fixed set of  vectors $\{x_{[h]}\in\mathcal{X}\}_{h=1}^N$,

$$\max_w \lambda_{\textup{max}} (H_{x,w})= \lambda_{\textup{max}} (H_x)= \max_h \frac{1}{K^h(x_{[h]})} = \frac{1}{\min_h K^h(x_{[h]})}\le \frac{1}{\nu}.$$
Hence $\max_{\{x_{[h]}\in\mathcal{X}\}_{h=1}^N} \max_w \lambda_{\textup{max}} (H_{x,w}) \le \frac{1}{\nu} < \infty $.

 To  prove the second statement  we  note that following similar arguments it is possible to prove that $\nabla_x F(x)$ is a P-matrix for all $x\in\mathcal{X}$. Since $n=1$, $\mathcal{X}$ is a rectangle and the conclusion that $F(x)$ is a P-function follows by \cite[Proposition 3.5.9 (a)]{facchinei2007finite}.

\section{A novel sufficient condition for uniform P-functions}
\label{appendix:uniform}

\begin{definition}[Uniform P-matrix condition] \label{def:P_matrix_uniform}
Given a  matrix valued function $x\mapsto A(x)\in\R^{d\times d}$,  we say that  $A(x)$ satisfies the uniform P-matrix condition over $\mathcal{X}$ if there exists $\eta>0$ such that for all sets of $d$ vectors $\{x_{[h]}\in\mathcal{X}\}_{h=1}^d$ and for any $w\in\R^d$ there exists a diagonal matrix $H_{x,w}\succ 0$  such that  it holds $w^\top H_{x,w} A_x w\ge \eta \|w\|_2^2$,
where 
$$A_x:=\left[\begin{smallmatrix} [A(x_{[1]})]_{(1,:)}\\ \vdots \\  [A(x_{[d]})]_{(d,:)} \end{smallmatrix}\right]$$ 
and $\max_{\{x_{[h]}\in\mathcal{X}\}_{h=1}^d} \max_{w\in\R^d} \lambda_{\textup{max}} (H_{x,w})<\infty$.
\end{definition}

\begin{remark}\label{remark}
The uniform P-matrix condition in Definition \ref{def:P_matrix_uniform} might seem complicated at first sight. This is a slightly stronger condition than requiring that: ``$A(x)$ is a P-matrix for all $x$'' (we refer to this condition as  6').  
If $A(x)=\nabla_x F(x)$ satisfies 6' instead of 6, one can  guarantee that $F$ is a P-function, as by Definition~\ref{def:mon2} in Appendix \ref{appendix:def}, but  not necessarily that $F$ is a \textit{uniform} P-function. Consequently, condition 6'  is not sufficient to guarantee existence of a VI solution \cite[Proposition 3.5.10]{facchinei2007finite}. 
 To obtain Definition \ref{def:P_matrix_uniform} from 6', we strengthen the requirement that $A(x)$ is a P-matrix for all $x$ (or equivalently that for any $x,w\neq 0$ there exists $H_{x,w}$ such that $w^\top H_{x,w} A(x) w>0$) in two ways. First, we require a uniform curvature bound (i.e., we require $w^\top H_{x,w} A(x) w\ge \eta \|w\|_2^2$ instead of $w^\top H_{x,w} A(x) w>0$). Second, we require that this condition holds  not only for all possible $A(x)$ matrices but also for all matrices $A_x$ obtained by selecting  each row from a different matrix in the set $\{A(x_{[h]})\}_{h=1}^d$ obtained by evaluating $A(x)$ at $d$ different points $\{x_{[h]}\}_{h=1}^d$. The technical reason for this is that in the proof of  Proposition~\ref{prop:suff}c)  we need to apply the mean value theorem to each row separately.\footnote{Whether a statement as in Proposition \ref{prop:suff}c) could be proven by only adding the first condition to 6'  is an open problem. We did not investigate this point further because for the network games studied in this paper,   Definition \ref{def:P_matrix_uniform} is satisfied.}
We also note that for the case $n=1$, Definition~\ref{def:P_matrix_uniform} is different from Definition \ref{def:P_up} (see Example~\ref{ex3}). Consequently, parts b) and c) of  Proposition~\ref{prop:suff} provide different sufficient conditions even for the case $n=1$.
\end{remark}

For gradients  of affine operators, an equivalent  but simpler condition for Definition \ref{def:P_matrix_uniform} is given next. 
\begin{lemma}\label{lemma:Paffine}
Consider an operator $F(x)=Ax+a$ where $ A\in\R^{d\times d}, a\in\R^d$. Then $\nabla_x F(x)$ satisfies the uniform P-matrix condition if and only if there exists $\eta>0$ such that for any $w\in\R^d$ there exists a diagonal matrix $H_w\succ 0$ such that $w^\top H_w A w \ge \eta \|w\|^2$ and $\max_w \lambda_{\textup{max}}(H_w)<\infty$. More simply, $\nabla_x F(x)$ satisfies the uniform P-matrix condition if there exists a diagonal matrix $H\succ 0$ such that $HA+A^\top H\succ 0$.
\end{lemma}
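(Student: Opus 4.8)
The plan is to exploit the fact that when $F(x)=Ax+a$ is affine, its gradient $\nabla_x F(x)=A$ is \emph{constant} in $x$. Consequently, for any collection of evaluation points $\{x_{[h]}\in\mathcal{X}\}_{h=1}^d$, the matrix $A_x$ assembled row-by-row in Definition \ref{def:P_matrix_uniform} has $h$th row equal to $[A(x_{[h]})]_{(h,:)}=[A]_{(h,:)}$, so that $A_x=A$ regardless of the chosen points. This is the observation that makes the ``for all sets of $d$ vectors'' quantifier collapse, and it is really the only conceptual step; everything else is bookkeeping of quantifiers.

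With this in hand I would prove the stated equivalence in both directions. For the ``only if'' direction, assume Definition \ref{def:P_matrix_uniform} holds with constant $\eta>0$; fixing any single reference set of points and writing $H_w$ for the corresponding $H_{x,w}$, the inequality $w^\top H_{x,w}A_x w\ge\eta\|w\|_2^2$ becomes $w^\top H_w A w\ge \eta\|w\|_2^2$ because $A_x=A$, while $\max_w\lambda_{\textup{max}}(H_w)\le \max_{\{x_{[h]}\}}\max_w\lambda_{\textup{max}}(H_{x,w})<\infty$. For the ``if'' direction, given $H_w$ from the affine-case condition, I would simply set $H_{x,w}:=H_w$ for every choice of $\{x_{[h]}\}$; since $A_x=A$ the required inequality holds verbatim and $\max_{\{x_{[h]}\}}\max_w\lambda_{\textup{max}}(H_{x,w})=\max_w\lambda_{\textup{max}}(H_w)<\infty$, recovering Definition \ref{def:P_matrix_uniform}.

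Finally, to obtain the simpler sufficient condition, suppose there is a diagonal $H\succ0$ with $HA+A^\top H\succ0$. I would verify the (just-established) affine-case condition with the \emph{single} choice $H_w:=H$ for all $w$. The key identity is that the scalar $w^\top H A w$ equals its own transpose, so using $H^\top=H$ one has $w^\top H A w=\frac12 w^\top(HA+A^\top H)w\ge \frac12\lambda_{\textup{min}}(HA+A^\top H)\|w\|_2^2$. Taking $\eta:=\frac12\lambda_{\textup{min}}(HA+A^\top H)>0$ and noting $\max_w\lambda_{\textup{max}}(H_w)=\lambda_{\textup{max}}(H)<\infty$ completes the argument. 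The only point requiring care is the symmetrization identity and checking that $H$ diagonal (hence symmetric) makes it valid; no genuine obstacle arises, since the constancy of the gradient trivializes the more elaborate features of Definition \ref{def:P_matrix_uniform}.
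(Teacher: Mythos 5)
Your proposal is correct and follows essentially the same route as the paper's proof: both rest on the observation that $\nabla_x F(x)=A$ is constant so $A_x=A$ for every choice of evaluation points, which collapses Definition \ref{def:P_matrix_uniform} to the stated affine condition, and both obtain the simpler sufficient condition by taking $H_w=H$ with $\eta=\tfrac12\lambda_{\textup{min}}(HA+A^\top H)$ via the symmetrization identity. Your write-up is merely a bit more explicit about the two directions of the equivalence; no substantive difference.
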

\begin{proof}
For affine operators $\nabla_x F(x)=A$ is a matrix independent of $x$. Consequently, $A_x=A$ for all $\{x_{[h]}\}_{h=1}^d$. Hence the uniform P-matrix condition is satisfied if there exists $\eta>0$ such that for any $w\in\R^d$ there exists a diagonal matrix $H_w\succ 0$ such that $w^\top H_w A w \ge \eta \|w\|^2$. Here $\max_{\{x_{[h]}\}_{h=1}^d} \max_w \lambda_{\textup{max}}(H_{w,x})=\max_w \lambda_{\textup{max}}(H_w)$.
Note that if there exists a diagonal matrix $H\succ 0$ such that $HA+A^\top H\succ 0$, then the previous statement holds with $H_w=H$ for all $w\in\R^d$, $\eta:=\frac{1}{2}\lambda_{\textup{min}}(HA+A^\top H)>0$ and $\lambda_{\textup{max}}(H)<\infty$.
\end{proof}

For the affine case, condition 6'  in Remark \ref{remark}  requires $A$ to be a P-matrix. Equivalently,   $F(x)=Ax+a$ is a P-function if for any $w\neq 0$ there exists a diagonal matrix $H_w\succ 0$ such that $w^\top H_w A w>0$. The only difference in Lemma \ref{lemma:Paffine} is that  it requires a uniformity condition on the eigenvalues of the $H_w$.

We next report  the proof of Proposition \ref{prop:suff}c) and some comparison with previous literature results. We start with the auxiliary Lemma \ref{lemma:part2}. Theorem \ref{lem:proof} proves Proposition \ref{prop:suff}c).

\begin{lemma}\label{lemma:part2}
Consider a matrix $A_x\in\R^{d\times d}$ and suppose that there exists $\eta>0$ such that for all $w\in\R^d$ there exists a diagonal matrix $H_{x,w}\succ 0$ such that  it holds 
$w^\top H_{x,w} A_x w \ge \eta \|w\|_2^2$. Then for all $w\in\R^d$ there exists $h\in\N[1,d]$ such that 
$$ w_h [A_x w]_h \ge \frac{\eta}{d M_x} \|w\|_2^2,$$
where $M_x:=\max_{w} \lambda_{max} (H_{x,w})$.
\end{lemma}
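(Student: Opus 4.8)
The plan is to fix an arbitrary $w\in\R^d$ (the case $w=0$ being trivial, since both sides vanish) and to extract the desired coordinate $h$ from the hypothesis by a simple averaging argument. First I would apply the hypothesis to this particular $w$, obtaining a diagonal matrix $H_{x,w}=\textup{diag}(\lambda_1,\ldots,\lambda_d)\succ 0$ with $w^\top H_{x,w} A_x w\ge \eta\|w\|_2^2$. The key observation is that the quadratic form expands coordinate-wise as
$$w^\top H_{x,w} A_x w=\sum_{h=1}^d \lambda_h\, w_h\,[A_x w]_h,$$
so the hypothesis becomes a statement about the weighted sum of the scalar products $c_h:=w_h[A_x w]_h$.

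Next I would select $h^\star\in\N[1,d]$ maximizing $c_h$ and set $c^\star:=c_{h^\star}=\max_h c_h$. For $w\neq 0$ the weighted sum $\sum_h \lambda_h c_h$ is at least $\eta\|w\|_2^2>0$ and all the weights $\lambda_h$ are positive, so at least one $c_h$ must be positive, whence $c^\star>0$. The heart of the argument is then the chain
$$\eta\|w\|_2^2\le \sum_{h=1}^d \lambda_h c_h\le \Big(\sum_{h=1}^d \lambda_h\Big)c^\star\le d\,M_x\,c^\star,$$
where the middle step uses $c_h\le c^\star$ together with $\lambda_h>0$ (so the inequality survives multiplication by $\lambda_h$ and summation over $h$), and the last step uses $\lambda_h\le \lambda_{\textup{max}}(H_{x,w})\le M_x$ for every $h$. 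Dividing by $d\,M_x>0$ gives $c^\star\ge \frac{\eta}{d M_x}\|w\|_2^2$, which is precisely the claim with $h=h^\star$.

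The argument is essentially a one-line averaging step, so I do not anticipate a genuine obstacle; the only point requiring care is the sign bookkeeping. Because some of the $c_h$ may be negative, the term-wise estimate $\lambda_h c_h\le \lambda_h c^\star$ must be justified from the positivity of the weights, and the subsequent replacement $\sum_h\lambda_h\le d\,M_x$ is useful only once we know $c^\star>0$; both facts follow from the positivity of the weighted sum noted above. This is the step I would write out most carefully.
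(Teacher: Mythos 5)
Your proposal is correct and uses essentially the same argument as the paper: the paper runs the identical averaging bound (each $[H_{x,w}]_h\le M_x$, each coordinate term bounded by the maximal one) as a proof by contradiction, while you write it as a direct chain of inequalities. The one point you flag — that $c^\star>0$ is needed before replacing $\sum_h\lambda_h$ by $dM_x$ — is handled correctly and is the only sign-bookkeeping subtlety in either version.
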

\begin{proof}
By contradiction suppose that it is not true. That is, there exists $w\in \R^d$ such that for all $h\in\N[1,d]$

\begin{equation} w_h [A_x w]_h < \frac{\eta}{d M_x} \|w\|_2^2.\label{step1P}
\end{equation}
Consider now the product $w^\top H_{x,w} A_x w$, recall that $H_{x,w}$ is diagonal positive definite, from \eqref{step1P} we get

\begin{align*}
w^\top H_{x,w} A_x w = \sum_{h=1}^d [H_{x,w}]_h  w_h [A_x w]_h < \sum_{h=1}^d [H_{x,w}]_h  \frac{\eta}{d M_x} \|w\|_2^2 \le \sum_{h=1}^d \frac{\eta}{d } \|w\|_2^2=\eta \|w\|_2^2,
\end{align*}
where we used $ [H_{x,w}]_h \le \lambda_{max} (H_{x,w})\le M_x$ for all $h\in\N[1,d]$.
This is absurd.
\end{proof}

\begin{theorem}\label{lem:proof}
Consider a continuously differentiable operator $F(x)$ defined in $\mathcal{X}\subseteq \R^d$. If $\nabla_x F(x)$ satisfies the uniform P-matrix condition then $F(x)$ is a uniform P-function.
\end{theorem}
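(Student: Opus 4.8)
The plan is to reduce the uniform P-function inequality to the uniform P-matrix condition through a \emph{componentwise} application of the mean value theorem, and then use Lemma \ref{lemma:part2} to convert the resulting quadratic-form bound into a lower bound on a single coordinate product.

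First I would fix arbitrary $x,y\in\mathcal{X}$ with $x\neq y$ and set $w:=x-y$. For each coordinate $h\in\N[1,d]$ I would apply the mean value theorem to the scalar map $t\mapsto F_h(y+t(x-y))$ on $[0,1]$; since $F$ is continuously differentiable and $\mathcal{X}$ is convex, this yields a point $x_{[h]}:=y+t_h(x-y)\in\mathcal{X}$ with $t_h\in(0,1)$ such that $[F(x)-F(y)]_h=[\nabla_x F(x_{[h]})]_{(h,:)}\,(x-y)$. The key observation is that these (generally distinct) intermediate points are exactly the data entering Definition \ref{def:P_matrix_uniform}: assembling the matrix $A_x$ whose $h$-th row is $[\nabla_x F(x_{[h]})]_{(h,:)}$, I obtain $[A_x w]_h=[F(x)-F(y)]_h$ for every $h$, and therefore $w_h[A_x w]_h=[x-y]_h\,[F(x)-F(y)]_h$.

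Next I would invoke the uniform P-matrix condition for this particular family $\{x_{[h]}\}_{h=1}^d$ and this particular $w$: there is a diagonal $H_{x,w}\succ 0$ with $w^\top H_{x,w}A_x w\ge \eta\|w\|_2^2$, and, crucially, $M_x:=\max_w\lambda_{\textup{max}}(H_{x,w})$ is bounded above by the finite constant $\bar M:=\max_{\{x_{[h]}\}}\max_w\lambda_{\textup{max}}(H_{x,w})$ supplied by the definition. Lemma \ref{lemma:part2} then produces an index $h$ with $w_h[A_x w]_h\ge \frac{\eta}{dM_x}\|w\|_2^2\ge \frac{\eta}{d\bar M}\|w\|_2^2$. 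Translating back, $[x-y]_h\,[F(x)-F(y)]_h\ge \frac{\eta}{d\bar M}\|x-y\|_2^2$, whence $\max_{h\in\N[1,d]}[F(x)-F(y)]_h[x-y]_h\ge \hat\eta\|x-y\|_2^2$ with $\hat\eta:=\eta/(d\bar M)>0$ independent of $x,y$. Since $x,y$ were arbitrary, $F$ is a uniform P-function.

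I expect the main obstacle, and the reason Definition \ref{def:P_matrix_uniform} is stated in its apparently awkward row-by-row form, to be precisely the fact that the mean value theorem must be applied separately in each coordinate, producing a different evaluation point $x_{[h]}$ per row. A naive formulation requiring only that $\nabla_x F(x)$ be a (uniform) P-matrix at each single point would not suffice, because the assembled matrix $A_x$ mixes rows from different points and need not coincide with any $\nabla_x F(x_{[h]})$. The finiteness requirement $\bar M<\infty$ is what upgrades the pointwise curvature bound $\eta$ into the uniform constant $\hat\eta$, and is the only place where uniformity (as opposed to mere positivity of the P-matrix form) is genuinely used.
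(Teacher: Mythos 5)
Your proposal is correct and follows essentially the same route as the paper's proof: a coordinatewise mean value theorem producing the row-mixed matrix $A_x$, followed by an application of Lemma \ref{lemma:part2} to extract a single coordinate with the required lower bound, with the finiteness of $\max_{\{x_{[h]}\}}\max_w\lambda_{\textup{max}}(H_{x,w})$ supplying the uniform constant. The only cosmetic difference is that the paper applies the mean value theorem to $\Phi_h(t)=F^h(y+t(x-y))[x-y]_h$ rather than to $F_h(y+t(x-y))$ alone, which is equivalent.
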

\begin{proof}
Let $[F(x)]_h=F^h(x)$.
Take any two vectors $x,y\in\mathcal{X}$ and for any $h\in\N[1,d]$ consider the function $\Phi_h(t):[0,1] \mapsto \R$ defined as

$$\Phi_h(t)=F^h(y+t(x-y))[x-y]_h.$$
Note that $\Phi_h(1)=F^h(x)[x-y]_h$ and $\Phi_h(0)=F^h(y)[x-y]_h$, hence

\begin{equation}\Phi_h(1)-\Phi_h(0)=(F^h(x)- F^h(y))[x-y]_h=[F(x)- F(y)]_h[x-y]_h.\label{thm7_1}\end{equation}
By the mean value theorem, there exists $\theta_h\in[0,1]$ such that

\begin{equation}\Phi_h(1)-\Phi_h(0)= \frac{d \Phi_h(\theta_h)}{dt} (1-0)= \frac{d \Phi_h(\theta_h)}{dt}.\label{thm7_2}\end{equation}
However, if we let $x_{[h]}:= y+\theta_h(x-y)\in\mathcal{X}$ we get 

\begin{equation}\frac{d \Phi_h(\theta_h)}{dt} = [x-y]_h \sum_{k=1}^d \frac{\partial F^h(x_{[h]})}{\partial x_k} [x-y]_k = [x-y]_h \sum_{k=1}^d [ \nabla_x F(x_{[h]})]_{h,k} [x-y]_k.\label{thm7_3}\end{equation}
Combining \eqref{thm7_1}, \eqref{thm7_2} and \eqref{thm7_3} we get

$$ [F(x)- F(y)]_h[x-y]_h=[x-y]_h \sum_{k=1}^d [ \nabla_x F(x_{[h]})]_{h,k} [x-y]_k.$$
Let us define the matrix

$$A_x:=\left[\begin{smallmatrix} [ \nabla_x F(x_{[1]})]_{(1,:)}\\ \vdots \\   [ \nabla_x F(x_{[d]})]_{(d,:)} \end{smallmatrix}\right]$$
then

$$ [F(x)- F(y)]_h[x-y]_h=[x-y]_h [A_x (x-y) ]_h.$$

Since $\nabla_x F(x)$ satisfies the uniform P-matrix condition, we have that for any  $w\in\R^d$ there exists a diagonal matrix $H_{x,w}\succ 0$ such that   it holds $w^\top H_{x,w} A_x w \ge \eta \|w\|_2^2$. From Lemma \ref{lemma:part2}, for any  $w\in\R^d$  there exists $h\in\N[1,d]$ such that 

$$ w_h [A_x w]_h \ge \frac{\eta}{d \max_w \lambda_{max} (H_{x,w})} \|w\|_2^2 \ge \frac{\eta}{d M} \|w\|_2^2, $$
where $M:=\max_w \lambda_{max} (H_{x,w})<\infty$. By selecting $w=x-y$ we get that there exists $h\in\N[1,d]$ such that

$$[F(x)- F(y)]_h[x-y]_h= [x-y]_h [A_x (x-y)]_h \ge  \frac{\eta}{d M} \|x-y\|_2^2. $$
Since this holds for all $x,y$ we get that $F$ is a uniform P-function with constant $\frac{\eta}{dM}>0$.

\end{proof}

Note that in \cite[Proposition 3.5.9(a)]{facchinei2007finite} it is shown that if $\nabla_x F(x)$ is a P-matrix for all $x\in\mathcal{X}$ then $F$ is a P-function (according to Definition \ref{def:mon2}).  Lemma  \ref{lem:proof} above shows that by substituting the condition ``$\nabla_x F(x)$ is a P-matrix for all $x\in\mathcal{X}$'' with ``$\nabla_x F(x)$ satisfies the uniform P-matrix condition'' one can get that $F$ is a uniform P-function which is stronger than $F$ being a P-function (see Figure \ref{fig:full}). In this work, we need the uniformity property  to prove existence of the VI solution without imposing that the set $\mathcal{X}$ is bounded.

\section{ Best response dynamics: Additional statements and proofs}
\label{appendix:br}

\begin{lemma} \label{lemma:descent}
Suppose that Assumption~\ref{cost}  holds. Let $d^i(x):=B^i(z^i(x))-x^i$ then
\begin{equation}
F^i(x)^\top d^i(x) \le - \kappa_1 \|d^i(x)\|^2_2.
\end{equation}
\end{lemma}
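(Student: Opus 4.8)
The plan is to combine the first-order optimality condition that defines the best response with the (uniform) strong convexity of $J^i$ in its own variable. Write $\hat x^i := B^i(z^i(x)) = \arg\min_{x^i\in\mathcal{X}^i} J^i(x^i,z^i(x))$, so that $d^i(x)=\hat x^i - x^i$. The key structural fact I would use throughout is that, since $\PP_{ii}=0$, the aggregate $z^i(x)=\sum_{j=1}^N \PP_{ij}x^j$ does not depend on $x^i$; hence $z^i(x)$ can be frozen as a parameter and both of the ingredients below are applied to the single convex function $x^i\mapsto J^i(x^i,z^i(x))$, whose gradient at the current point $x^i$ is exactly $F^i(x)$.

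First I would invoke the minimum principle for the convex program defining $\hat x^i$. By Assumption~\ref{cost}, $J^i(\cdot,z^i(x))$ is convex and continuously differentiable and $\mathcal{X}^i$ is closed and convex, so $\hat x^i$ solves the problem if and only if
$$\nabla_{x^i} J^i(\hat x^i,z^i(x))^\top (y-\hat x^i)\ge 0 \qquad \textup{for all } y\in\mathcal{X}^i.$$
Since $x^i\in\mathcal{X}^i$, I would specialize this to $y=x^i$, which (using $x^i-\hat x^i=-d^i(x)$) yields
$$\nabla_{x^i} J^i(\hat x^i,z^i(x))^\top d^i(x)\le 0.$$

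Next I would exploit strong convexity. From the definitions in \eqref{k1} we have $\nabla^2_{x^ix^i}J^i(x^i,z^i)\succeq \kappa_1 I$ for all $x^i$ and $z^i$, so $J^i(\cdot,z^i(x))$ is $\kappa_1$-strongly convex. Adding the two strong-convexity inequalities written at $\hat x^i$ and at $x^i$ gives the standard gradient inequality
$$\bigl(\nabla_{x^i} J^i(\hat x^i,z^i(x))-\nabla_{x^i} J^i(x^i,z^i(x))\bigr)^\top(\hat x^i-x^i)\ge \kappa_1\|\hat x^i-x^i\|_2^2.$$
Substituting $\nabla_{x^i} J^i(x^i,z^i(x))=F^i(x)$ and $\hat x^i-x^i=d^i(x)$ and rearranging, then using the previous display to discard the nonpositive term $\nabla_{x^i} J^i(\hat x^i,z^i(x))^\top d^i(x)$, I obtain
$$F^i(x)^\top d^i(x)\le \nabla_{x^i} J^i(\hat x^i,z^i(x))^\top d^i(x)-\kappa_1\|d^i(x)\|_2^2\le -\kappa_1\|d^i(x)\|_2^2,$$
which is the claim.

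The argument is essentially routine, so there is no serious obstacle; the only point that deserves care is the bookkeeping of which variable is held fixed. One must apply both the optimality condition and the gradient-monotonicity inequality to the \emph{same} frozen value $z^i(x)$ of the aggregate, which is legitimate precisely because $\PP_{ii}=0$ makes $z^i(x)$ independent of $x^i$; this is what lets $F^i(x)=\nabla_{x^i}J^i(x^i,z^i(x))$ be interpreted as the gradient of the scalar-in-$x^i$ convex function whose minimizer is $\hat x^i$.
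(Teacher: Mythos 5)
Your proof is correct and takes essentially the same route as the paper's: both start from the minimum principle at the best-response point $\hat x^i=B^i(z^i(x))$ with the aggregate $z^i(x)$ frozen, specialize the test vector to $x^i$, and then lower-bound $\bigl(\nabla_{x^i}J^i(\hat x^i,z^i)-\nabla_{x^i}J^i(x^i,z^i)\bigr)^\top d^i(x)$ by $\kappa_1\|d^i(x)\|_2^2$. The only cosmetic difference is that the paper obtains this last bound via the mean value theorem and the Hessian bound $D^i(\cdot)\succeq\kappa_1 I$, whereas you invoke the equivalent strong-monotonicity inequality for the gradient of a $\kappa_1$-strongly convex function.
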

\begin{proof}
For simplicity set $\beta^i:=B^i(z^i(x)):=\arg\min_{y^i\in\mathcal{X}^i} J^i(y^i,z^i(x))$. Then it holds
$
\nabla_{x^i} J^i(\beta^i,z^i)^\top(y^i-\beta^i) =   F^i(\beta^i,x^{-i})^\top(y^i-\beta^i)\ge 0,  \forall y^i\in\mathcal{X}^i.
$
Select $y^i=x^i$ then
\begin{align*}
0\le& F^i(\beta^i,x^{-i})^\top (x^i-\beta^i) = [ F^i(\beta^i,x^{-i})-F^i(x^i,x^{-i})]^\top (x^i-\beta^i) + F^i(x)^\top (x^i-\beta^i) \\
&=(\beta^i-x^i)^\top D^i(\tilde x^i,x^{-i}) (x^i-\beta^i) -F^i(x)^\top d^i(x),
\end{align*} 
where $\tilde x^i=(1-t)x^i+t\beta^i$ for some $t\in[0,1]$ by the mean value theorem. Hence
$$ F^i(x)^\top d^i(x) \le  - d^i(x)^\top D^i(\tilde x^i,x^{-i}) d^i(x) \le - \kappa_1 \|d^i(x)\|^2_2.$$
\end{proof}

 \begin{lemma}[Convergence in potential games]\label{lemma:potential}
Consider a potential  game satisfying Assumption~\ref{cost}   and with strongly convex potential $U(x)$. Then the continuous BR dynamics (Algorithm 1) and the discrete sequential BR dynamics (Algorithm 2 for $\mathcal{T}_i=N(\N-1)+i$) globally converge to the unique Nash equilibrium. 
\end{lemma}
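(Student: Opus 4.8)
The plan is to treat the two dynamics separately, in both cases exploiting the potential function $U(x)$ (with $\nabla_x U(x)=F(x)$) as the engine of convergence. First I would record the consequences of the hypotheses: strong convexity of $U$ on the nonempty closed convex set $\mathcal{X}$ guarantees a unique minimizer $x^\star$, which by the minimum principle and Proposition~\ref{prop:kkt} is the unique Nash equilibrium; moreover strong convexity makes $U$ coercive, so every sublevel set $\{x\in\mathcal{X}\mid U(x)\le c\}$ is compact. These two facts (uniqueness of the target and compactness of sublevel sets) are what both convergence arguments will lean on.

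For the continuous dynamics of Algorithm~\ref{br_c}, I would use $U$ as a Lyapunov function. Writing $d(x)=B(x)-x$ so that $\dot x(t)=d(x(t))$, the chain rule gives $\dot U(x(t))=\nabla_x U(x(t))^\top d(x(t))=F(x(t))^\top d(x(t))$, and inequality~\eqref{descent} (a consequence of Lemma~\ref{lemma:descent}) yields $\dot U(x(t))\le -\kappa_1\|d(x(t))\|_2^2\le 0$. Hence the trajectory stays in the compact sublevel set of its initial condition, and $U(x(t))$ is nonincreasing and bounded below, so it converges. I would then invoke LaSalle's invariance principle: the $\omega$-limit set of the trajectory is contained in the largest invariant set on which $\dot U=0$, i.e. where $d(x)=0$, equivalently $B(x)=x$, which is exactly the set of Nash equilibria. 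Since that set is the singleton $\{x^\star\}$, the trajectory converges to $x^\star$. Continuity of $d(\cdot)$, needed both for well-posedness of the flow and for LaSalle, follows from Assumption~\ref{cost} as in the footnote to Algorithm~\ref{br_c}.

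For the discrete sequential dynamics, the key observation is the exact-potential identity $\nabla_{x^i}J^i(x^i,z^i(x))=\nabla_{x^i}U(x)=F^i(x)$: since $J^i$ is convex in $x^i$ (Assumption~\ref{cost}) and $U$ is strongly convex, the agent's best-response update $x^i\mapsto\arg\min_{x^i\in\mathcal{X}^i}J^i(x^i,z^i(x))$ coincides with the minimizer of $U$ over the $i$-th block with the remaining blocks $x^{-i}$ fixed. The sequential scheme ($\mathcal{T}_i=N(\N-1)+i$) is therefore exact cyclic block coordinate descent applied to the strongly convex $U$ over the Cartesian product $\mathcal{X}=\mathcal{X}^1\times\cdots\times\mathcal{X}^N$. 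Each block minimization is nonincreasing in $U$, so $\{U(x_k)\}$ is monotone and bounded below hence convergent, and the iterates remain in a compact sublevel set. I would then invoke the classical convergence guarantee for cyclic exact block coordinate minimization of a smooth strongly convex function on a product of closed convex sets to conclude $x_k\to x^\star$.

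The main obstacle is the continuous case, where turning $\dot U\le 0$ into actual convergence to $x^\star$ requires the full invariance-principle machinery rather than a one-line Lyapunov estimate: one must separately verify boundedness of the trajectory (via coercivity of $U$), continuity and well-posedness of the flow (via Assumption~\ref{cost}), and that the only invariant set inside $\{x\mid d(x)=0\}$ is the equilibrium. In the discrete case the subtlety is instead to pass from convergence of the objective values $U(x_k)$ to convergence of the iterates themselves; strong convexity is precisely what closes this gap, since it bounds $\|x_k-x^\star\|_2^2$ in terms of $U(x_k)-U(x^\star)$ along the subsequence of completed sweeps.
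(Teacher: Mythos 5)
Your proposal is correct and follows essentially the same route as the paper: for the continuous dynamics the paper likewise uses $U(x)-U(x^\star)$ as a Lyapunov function together with the descent inequality $F(x)^\top d(x)\le-\kappa_1\|d(x)\|_2^2$ from Lemma~\ref{lemma:descent} and concludes via the Lyapunov/invariance argument, and for the sequential discrete dynamics it likewise identifies the best response with exact block minimization of $U$ and concludes by the standard block-coordinate-descent argument (citing \cite[Proposition 3.9]{tsitsiklis1989parallel}). Your additional remarks on coercivity, precompactness of trajectories, and passing from convergence of $U(x_k)$ to convergence of $x_k$ are correct refinements of details the paper leaves implicit, not a different proof.
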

\begin{proof}
This is a well known result, see e.g. \cite{monderer:shapley:96} and \cite[Theorem 7.1.3]{sandholm2010population}. The proof is reported for completeness.
For continuous BR dynamics consider $L(x):=U(x)-U(x^\star)$ as Lyapunov function. Note that $L(x)\ge 0$ for all $x$ and $L(x)=0$ if and only if $x=x^\star$, since the Nash equilibrium is the unique minimizer of  $U(x)$. Moreover,  $\dot{{L}}(x(t))=\nabla_x U(x(t))^\top\dot x(t) = F(x(t))^\top d(x(t)) \le -\kappa_1 \|d(x(t))\|_2^2\le 0$ by Lemma \ref{lemma:descent} and $\dot{{L}}(x)=0$ if and only if $d(x)=0$. Since $d(x)=0$ if and only if $x=x^\star$ it follows by Lyapunov theorem that $x(t)\rightarrow x^\star$.\\
For discrete sequential BR dynamics note that since for any agent $i$ we have
$$ B^i(z^i(x)):=\arg\min_{y^i\in\mathcal{X}^i} J^i(y^i,z^i(x))=\arg\min_{y^i\in\mathcal{X}^i} U(y^i,x^{-i})$$
then, if $i$ is the agent updating at time $k$, it holds  
$U(x_{k+1})=U(x^i_{k+1}, x^{-i}_k)=\min_{y^i\in\mathcal{X}^i} U(y^i,x^{-i}_k) \le U(x^i_k,x^{-i}_k)=U(x_{k})$. Following the same arguments of \cite[Proposition 3.9]{tsitsiklis1989parallel} it is possible to show that every limit point of $\{x_k\}_{k=1}^\infty$ minimizes $U(x)$. Since $U(x)$ is strongly convex there is a unique minimizer which is the Nash equilibrium. Hence all limit points of $\{x_k\}_{k=1}^\infty$ coincide and equal the Nash equilibrium. We can conclude that $x_k\rightarrow  x^\star$.
\end{proof}

\begin{lemma}\label{br_closed}
If the cost function is as in \eqref{br:cost} then
$$B^i(z^i(x))=\Pi_{\mathcal{X}^i}^{Q^i}[x^i-(Q^i)^{-1}F^i(x)]:=\arg\min_{y^i\in\mathcal{X}^i}\|y^i-x^i+Q_{i}^{-1}F^i(x) \|^2_{Q^i}=\arg\min_{y^i\in\mathcal{X}^i}\frac 12 \|y^i-x^i \|^2_{Q^i}+  F^i(x)^\top (y^i-x^i)$$
\end{lemma}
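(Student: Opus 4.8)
The plan is to verify the two claimed equalities by a direct algebraic computation, the key observation being that the objective functions appearing in the three $\arg\min$ expressions all coincide up to an additive constant that does not depend on the optimization variable $y^i$.

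First I would compute the $i$-th block of the game Jacobian for the cost \eqref{br:cost}. Since $\PP_{ii}=0$, the neighbor aggregate $z^i(x)=\sum_{j}\PP_{ij}x^j$ does not involve $x^i$, so differentiating $J^i(x^i,z^i(x))=\frac12\|x^i\|^2_{Q^i}+f^i(z^i(x))^\top x^i$ with respect to $x^i$ gives
$$F^i(x)=\nabla_{x^i}J^i(x^i,z^i(x))^\top=Q^ix^i+f^i(z^i(x)).$$

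Next I would unfold the weighted projection. By the definition of $\Pi_{\mathcal{X}^i}^{Q^i}$, and setting $w:=y^i-x^i+(Q^i)^{-1}F^i(x)$, expanding $\|w\|^2_{Q^i}=w^\top Q^iw$ and using $Q^i(Q^i)^{-1}=I$ to simplify the cross term yields
$$\|w\|^2_{Q^i}=\|y^i-x^i\|^2_{Q^i}+2(y^i-x^i)^\top F^i(x)+F^i(x)^\top(Q^i)^{-1}F^i(x).$$
The last summand is constant in $y^i$, so dividing by $2$ and discarding it identifies the leftmost projection expression with the second $\arg\min$ written in the statement.

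Finally I would substitute $F^i(x)=Q^ix^i+f^i(z^i(x))$ into $\frac12\|y^i-x^i\|^2_{Q^i}+F^i(x)^\top(y^i-x^i)$ and collect terms. Expanding the squared norm and the inner product, the two cross terms $-(y^i)^\top Q^ix^i$ and $(x^i)^\top Q^iy^i$ cancel by symmetry of $Q^i$, and what remains is $\frac12\|y^i\|^2_{Q^i}+f^i(z^i(x))^\top y^i=J^i(y^i,z^i(x))$ plus the constant $-\frac12\|x^i\|^2_{Q^i}-f^i(z^i(x))^\top x^i$, which is independent of $y^i$. Since adding a $y^i$-independent constant leaves the minimizer unchanged (and the minimizer is a single point, as $Q^i\succ0$ makes $J^i$ strongly convex in $x^i$ over the convex set $\mathcal{X}^i$ by Assumption~\ref{cost}), the $\arg\min$ equals $B^i(z^i(x))=\arg\min_{y^i\in\mathcal{X}^i}J^i(y^i,z^i(x))$, closing the chain of equalities. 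There is no substantive obstacle here; the only point needing care is the bookkeeping of the constant (in $y^i$) terms, so that all three $\arg\min$ problems are genuinely seen to share the same objective.
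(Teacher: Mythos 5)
Your proof is correct and follows essentially the same route as the paper's: both arguments reduce to completing the square in the weighted norm and discarding $y^i$-independent constants, so that the projection objective, the quadratic model $\frac12\|y^i-x^i\|^2_{Q^i}+F^i(x)^\top(y^i-x^i)$, and $J^i(\cdot,z^i(x))$ all share the same minimizer. The only cosmetic difference is the direction of the chain (you verify the equalities from the projection toward $B^i$, the paper runs from $B^i$ toward the projection), which is immaterial.
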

\begin{proof}
If the cost function is as in \eqref{br:cost} then $F^i(x)=Q^i x^i+f^i(z^i)$ and 
\begin{align*}
B^i(z^i(x))&= \arg\min_{y^i\in\mathcal{X}^i} \frac 12 \|y^i\|^2_{Q^i} + f^i(z^i)^\top y^i = \arg\min_{y^i\in\mathcal{X}^i}  \|y^i\|^2_{Q^i} + 2[F^i(x)-Q^i x^i]^\top y^i\\
&= \arg\min_{y^i\in\mathcal{X}^i}  \|y^i\|^2_{Q^i} + 2[(Q^i)^{-1}F^i(x)- x^i]^\top Q^i y^i\\
&= \arg\min_{y^i\in\mathcal{X}^i}  \|y^i\|^2_{Q^i} + 2[(Q^i)^{-1}F^i(x)- x^i]^\top Q^i y^i + \|(Q^i)^{-1}F^i(x)- x^i\|_{Q^i}^2\\&= \arg\min_{y^i\in\mathcal{X}^i}  \|y^i-x^i+(Q^i)^{-1}F^i(x) \|^2_{Q^i}.
\end{align*}
\end{proof}

\subsection*{\textbf{Proof of Lemma \ref{lemma:descent_smon}}}
This  fact coincides with \cite[Proposition 4.1]{fukushima1992equivalent} upon noticing that $B(x)=\arg\min_{y\in\mathcal{X}} F(x)^\top (y-x)+\frac12 \|x-y\|^2_Q$ as shown in Lemma \ref{br_closed}.  The following proof is reported for completeness.
Let $h(x,y):=F(x)^\top (y-x)+\frac12 \|x-y\|^2_Q$.  It follows by Lemma \ref{br_closed} that $B(x)=\arg\min_{y\in\mathcal{X}} h(x,y)$ and thus 
$$\tilde U(x)= -h(x,B(x))= - \min_{y\in\mathcal{X}} h(x,y).$$
Consequently, 
$$\nabla_x \tilde U(x)= - \nabla_x h(x,B(x))= F(x)-[\nabla_x F(x)-Q] d(x) $$
and 
\begin{align}
\nabla_x \tilde U(x)^\top d(x)& = F(x)^\top d(x)-d(x)^\top \nabla_x F(x)^\top d(x) + d(x)^\top Q d(x) 
\\&\le  -d(x)^\top Q d(x)  -d(x)^\top \frac{\nabla_x F(x)+\nabla_x F(x)^\top}{2} d(x) + d(x)^\top Q d(x) \le- \alpha \|d(x)\|_2^2,
\end{align}
where the first inequality follows from the proof of Lemma \ref{lemma:descent}  and the second from the fact that $F$ is strongly monotone.

\subsection*{\textbf{Proof of Theorem \ref{thm:br_smon}}}
Note that if the cost functions are as in \eqref{br:cost} then
$\tilde U(x)= -F(x)^\top d(x) - \frac12\|d(x)\|^2_Q \ge  \frac12 \|d(x)\|_Q^2$
by the proof of Lemma \ref{lemma:descent} (in fact $F(x)^\top d(x)\le -  \|d(x)\|_Q^2$).  Hence $\tilde U(x)\ge 0$ and $\tilde U(x)= 0$ if and only if $d(x)=0$.  Recall that $d(x)=0$ if and only if $x$ is a Nash equilibrium. This theorem is then an immediate consequence of Lemma \ref{lemma:descent_smon} using $\tilde U(x)$ as Lyapunov function, since 
$$\dot{\tilde U}(x(t))=\nabla_x \tilde U(x(t))^\top d(x(t))   \le -\alpha \|d(x(t))\|_2^2.$$
\subsection*{\textbf{Proof of Theorem \ref{cor:br}}}
For the discrete dynamics, convergence is a consequence of Lemma \ref{lemma:br_block} and \cite[Proposition 2.1]{tsitsiklis1989parallel}, as shown in \cite[Theorem 10]{scutari2014real}. 
For the continuous dynamics note that since $B(x)$ is a block contraction, then $(1-\tau)x+\tau B(x)$ is a block contraction  for any $\tau\in(0,1]$. In fact 
let $\|v\|_{c,\textup{block}}:=\max_i \frac{\|v^i\|_2}{c_i}$ then
$$\| (1-\tau)x+\tau B(x)\|_{c,\textup{block}}\le  (1-\tau)\|x\|_{c,\textup{block}}+\tau \|B(x) \|_{c,\textup{block}} \le [(1-\tau)+\tau \delta_c]\|x\|_{c,\textup{block}} $$
and $(1-\tau)+\tau \delta_c<1$ for any $\tau\in(0,1]$ since $\delta_c<1$.
Consequently the Euler discretization of the continuous BR dynamics  which is given by 
\begin{equation}
\label{euler}
x^i_{k+1}=(1-\tau) x^i_{k}+\tau B^i(z^i_k) 
\end{equation}
converges for any $\tau\in(0,1]$. Intuitively, in the limit for $\tau\rightarrow 0$ this argument yields convergence of the  continuous BR dynamics. { A more rigorous proof is provided next.
Let $\bar x$ be the unique Nash equilibrium (recall that the  $P_\Upsilon$ condition implies uniqueness). Recall that by the proof of \cite[Proposition 42]{scutari2014real} it holds
$$\|B^i(z^i)-B^i(\bar z^i)\|_2 \le \frac{1}{\kappa_{i,i}}\sum_{j\neq i} \kappa_{i,j} \|x^j-\bar x^j\|_2.$$
Let us define   $v(t)\in\R^N_{\ge0}$ component-wise such that 
$$v_i(t)=\frac{1}{2}\|x^i(t)-\bar x^i\|^2_2\ge 0.$$
Note that each component $v_i(t)$ is differentiable and it holds
\begin{align*}
\dot v_i &=  (x^i-\bar x^i)^\top \dot x^i=  (x^i-\bar x^i)^\top  (B^i(z^i)-x^i) =(x^i-\bar x^i)^\top  (B^i(z^i)-B^i(\bar z^i)+\bar x^i-x^i)\\
&=  (x^i-\bar x^i)^\top  (B^i(z^i)-B^i(\bar z^i)) - \|x^i-\bar x^i\|_2^2  \le     \|x^i-\bar x^i\|_2  \|B^i(z^i)-B^i(\bar z^i)\|_2 - \|x^i-\bar x^i\|_2^2  \\
&\le \frac{1}{\kappa_{i,i}} \|x^i-\bar x^i\|_2 \left[  \sum_{j\neq i} \kappa_{i,j} \|x^j-\bar x^j\|_2 - \kappa_{i,i} \|x^i-\bar x^i\|_2 \right] = - \frac{2}{\kappa_{i,i}}\sqrt{v_i} \left[  -\sum_{j\neq i} \kappa_{i,j} \sqrt{v_j} + \kappa_{i,i} \sqrt{v_i} \right]  \\
&= -\frac{2}{\kappa_{i,i}} \sqrt{v_i} \left[  \Upsilon \sqrt{v} \right]_i,
\end{align*}
where $ [\sqrt{v}]_i:= \sqrt{v_i}$. Therefore  $v(t)$ satisfies the differential inequality
$$\dot v_i  \le  -\frac{2}{\kappa_{i,i}} \sqrt{v_i} \left[  \Upsilon \sqrt{v} \right]_i,\quad  v_i(0) = \frac{1}{2}\|x^i(0)-\bar x^i\|^2_2\ge 0 .$$

The matrix $\Upsilon^\top$ is a P-matrix (by assumption) and a Z-matrix (by construction). It follows by \cite[Theorem 1]{plemmons1977m} that there exists a positive diagonal matrix $\Delta$ such that 
$$\Upsilon^\top \Delta+\Delta \Upsilon \succeq \delta I$$
for some $\delta>0$. Consider the Lyapunov function $L(v)=\sum_i \frac{\kappa_{i,i}}{2} \Delta_{ii} v_i\ge 0$ and note that $L(v)=0$ if and only if $v=0$, since $v\ge 0$ by definition. Moreover
\begin{align*}
\dot{L}(v(t))&=\sum_i \frac{\kappa_{i,i}}{2} \Delta_{ii} \dot v_i(t) \le - \sum_i \Delta_{ii}  \sqrt{v_i(t)} \left[  \Upsilon \sqrt{v(t)} \right]_i  = - \sqrt{v(t)}^\top \Delta \Upsilon \sqrt{v(t)}\\& = - \sqrt{v(t)}^\top \frac{\Delta \Upsilon +\Upsilon^\top \Delta}{2}\sqrt{v(t)} \le -\frac{\delta}{2} \|\sqrt{v(t)}\|_2^2 = -\frac{\delta}{2} \|v(t)\|_1.
\end{align*}
Again, since $v\ge 0$, it holds $\|v\|_1=0$ if and only if $v=0$. This guarantees that $\dot{L}(v)<0$ for all $v\neq 0$.
Note that instead $v(t)=0$ implies $x^i(t)=\bar x^i$ hence when $v(t)=0$
\begin{align*}
\dot{L}(0)&=\sum_i \frac{\kappa_{i,i}}{2} \Delta_{ii} \dot v_i(t) = \sum_i \frac{\kappa_{i,i}}{2} \Delta_{ii} (x^i(t)-\bar x^i)^\top \dot x^i(t) =0.
\end{align*}
 Hence it follows by the Lyapunov theorem that $v(t)\rightarrow 0$. Hence $x(t)\rightarrow \bar x$.
}

\subsection*{\textbf{Proof of Theorem \ref{lemma:br_sub}}}

Consider first the unconstrained case $\mathcal{X}^i=\R$. By Lemma \ref{br_closed} if the cost functions are as in \eqref{br:cost} and $Q=I$ the continuous BR dynamics are equivalent to 
$\dot x(t)=\Pi_{\mathcal{X}}[x- F(x)]-x=x-F(x)-x=-F(x).$
The error dynamics is thus
$
\dot e(t)=- F(e(t)+x^{\star} )
$
where $e(t)=x(t)-x^{\star}$. If we linearize around $ e^\star=0$ we get
$\dot e(t)=- \nabla_x F(x^{\star} ) e(t).$
{ Since $\nabla_x F(x^{\star} ) $ is a P-matrix,} all its real eigenvalues  are positive by \cite{fiedler1962matrices}. We next show that all the eigenvalues of $\nabla_x F(x^{\star} )$ are real. Note that
$ \nabla_x F(x^{\star} )= I+K(x)G.$
Thus it suffices to show that all eigenvalues of $K(x)G$ are real. Note that $K(x)$ is a diagonal positive definite matrix and recall that $G=G^\top$. The matrix $K(x)G$ is similar to $K(x)^{\frac12} G K(x)^{\frac12}$ which is symmetric, hence all its eigenvalues are real.
Overall, all the eigenvalues of $- \nabla_x F(x^{\star} )$ are real and negative, which  is a sufficient condition for local asymptotic stability of the error dynamics.  

If constraints are present (i.e. $\mathcal{X}^i=[a^i,b^i]$ for some $a^i,b^i\in\R$) but the equilibrium is non-degenerate (i.e.  strict complementarity  is met) then for the agents whose equilibrium strategies are at the boundary  it must be either  $x^{\star,i}-F^i(x^{\star}) < a^i$ or $ x^{\star,i}- F^i(x^{\star}) > b^i$. 
In fact, consider the case $x^{\star,i}=a^i$. From the fact that this is an  equilibrium we get $x^{\star,i}=B^i(z^i(x^\star))=\Pi_{[a^i,b^i]}[x^{\star,i}- F^i(x^\star)]=a^i$ and thus $x^{\star,i}- F^i(x^\star)\le a^i$. The fact that the equilibrium  is non degenerate then leads to $x^{\star,i}- F^i(x^\star)< a^i$.
For any $x$ sufficiently close to $x^\star$, by continuity of $F$ it  holds $x^{i}- F^i(x)< a^i$ and thus $B^i(z^i(x))=\Pi_{[a^i,b^i]}[x^{i}- F^i(x)]=a^i$. Similar statements apply to the case $x^{\star,i}=b^i$.
Overall, for local perturbations the dynamics  is either $\dot x^i(t)=a^i-x^i$ or $\dot x^i(t)=b^i-x^i$ hence the strategies of the agents whose equilibrium strategy is at the boundary converge back to equilibrium  independently of $x^{-i}$. For the rest of the agents the equilibrium is interior hence locally $\Pi_{\mathcal{X}^i}[x^i- F^i(x)]=x^i- F^i(x)$. The dynamics of these agents (which we collectively index by $I$) are
$\dot x_I(t) = - F_I(x_I,x_{-I})$ and the linearized error dynamics are 
$\dot e_I(t)=- [\nabla_x F(x^{\star} )]_{I,I} e_I(t).$
Similar arguments can be applied to prove that all eigenvalues of $- [\nabla_x F(x^{\star} )]_{I,I} $ are negative since $[\nabla_x F(x^{\star} )]_{I,I}$ is a principal submatrix of a P-matrix.

\subsection{Additional examples}

{ \begin{example}[Discrete BR dynamics under strong monotonicity]\label{ex:br}
A) Consider the game in Example \ref{ex:lq} with $N=4$, $K^i=0.5$, $G=G_c$ and  $a^i=1$. We have already shown in Example \ref{ex4}  that the game Jacobian is strongly monotone and Assumption \ref{ass:gen}c is met. It is also easy to see that this game is potential since
$$K^i G_{ij}=0.5=K^j G_{ji}.$$
Figure \ref{fig:A}A) shows that the simultaneous discrete BR dynamics do not converge. The fact that the game is potential  instead guarantees that both the continuous and discrete sequential BR dynamics converge. 
 B) We next show that if we remove the potential structure this result is not guaranteed anymore. To this end, consider the exact same example but assume $K_1=-1.5$ instead of $K_1=0.5$. Note that the game is not potential anymore because
$$K^1 G_{1j}=-1.5\neq 0.5=K^j G_{j1}.$$
Nonetheless, by numerical computation of the eigenvalues of $\nabla_xF(x)+\nabla_xF(x)^\top$ one can verify that $F(x)$ is strongly monotone. Figure \ref{fig:A}B) illustrates that both simultaneous and sequential  discrete BR dynamics oscillate.
\end{example}
\begin{figure}[H]

\begin{center}
\begin{minipage}{0.05\textwidth}
A) 
\end{minipage} 
\begin{minipage}{0.9\textwidth}
\includegraphics[height=0.16\textwidth]{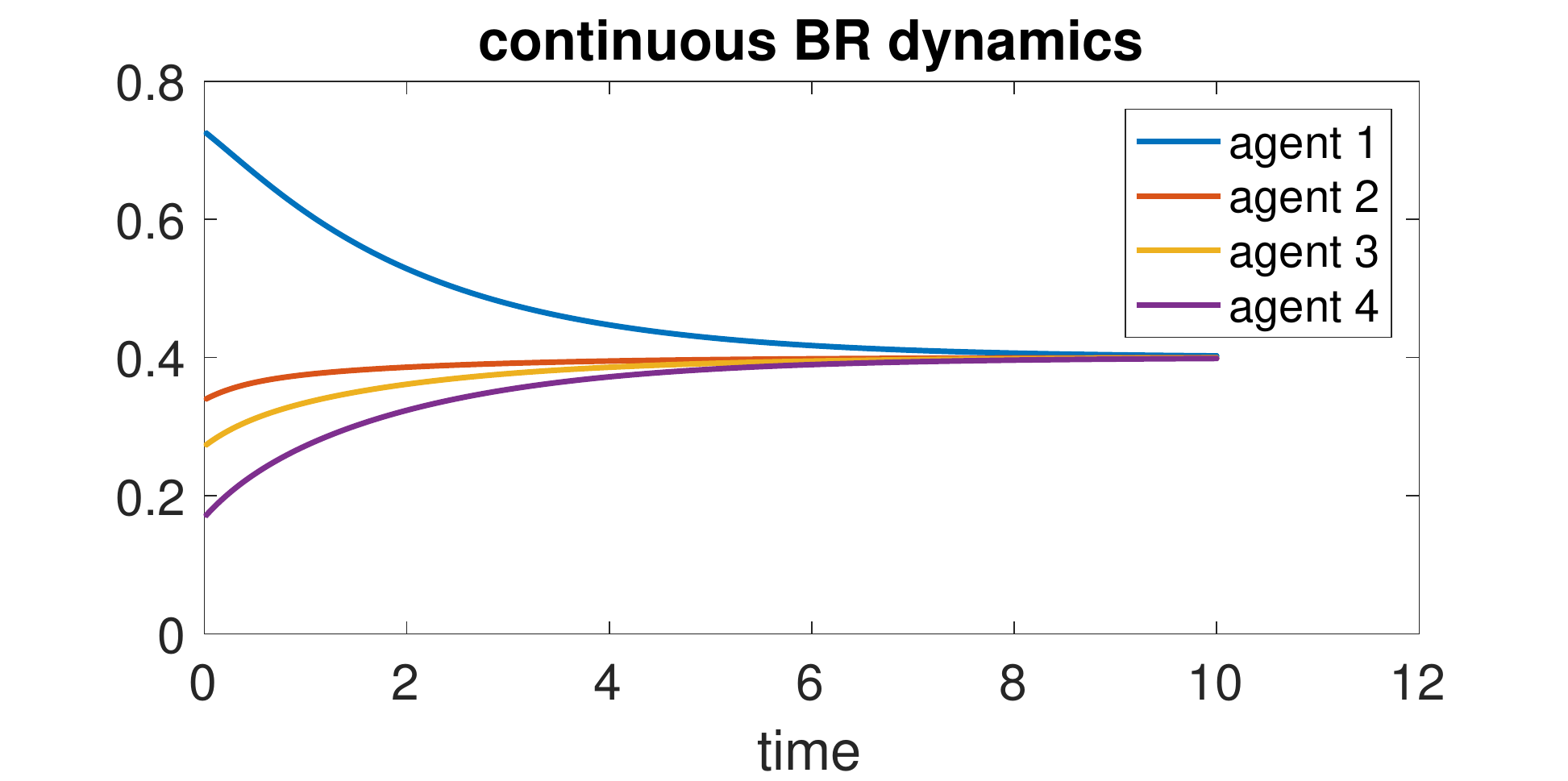}  
\includegraphics[height=0.16\textwidth]{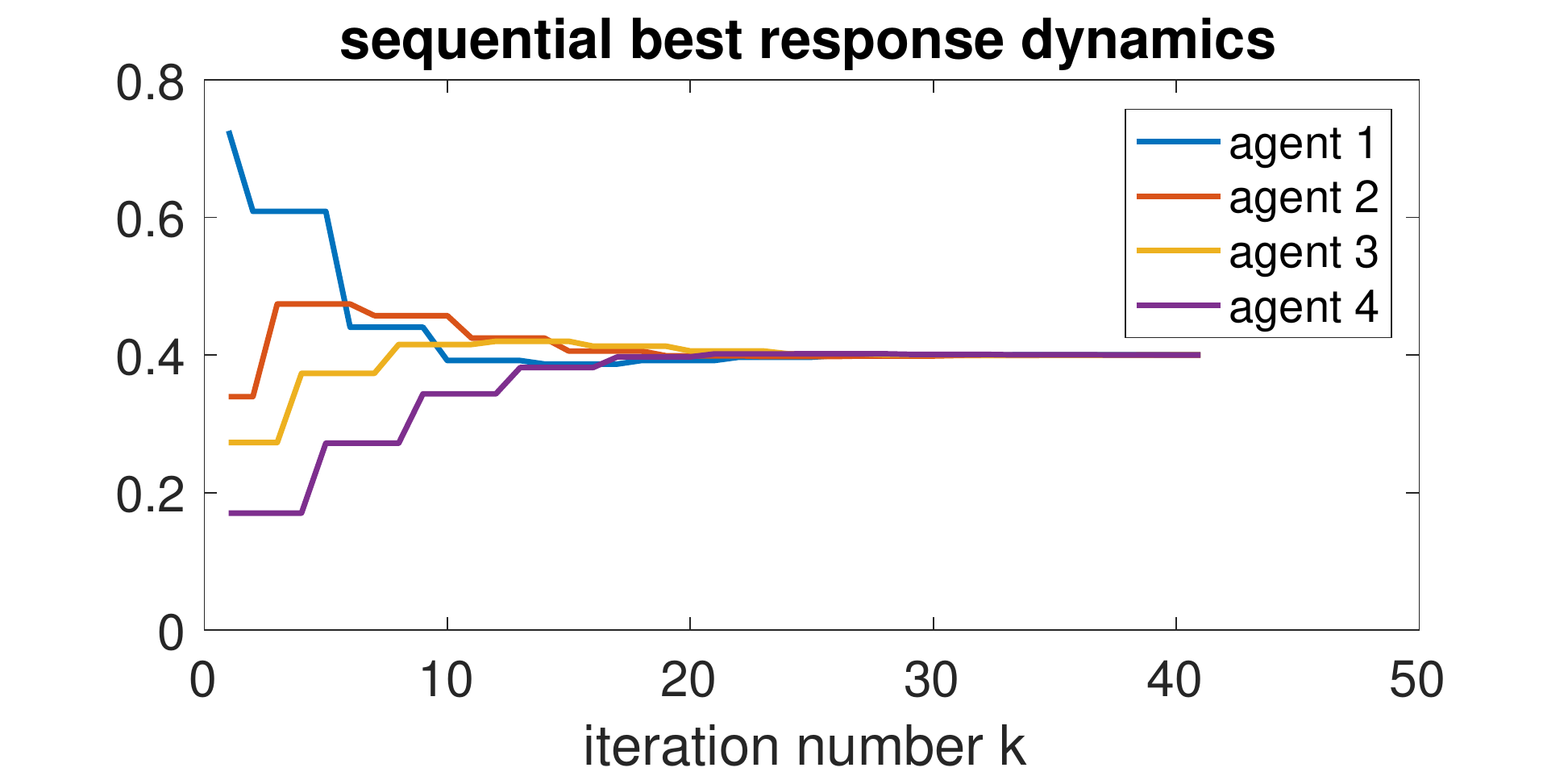}
\includegraphics[height=0.16\textwidth]{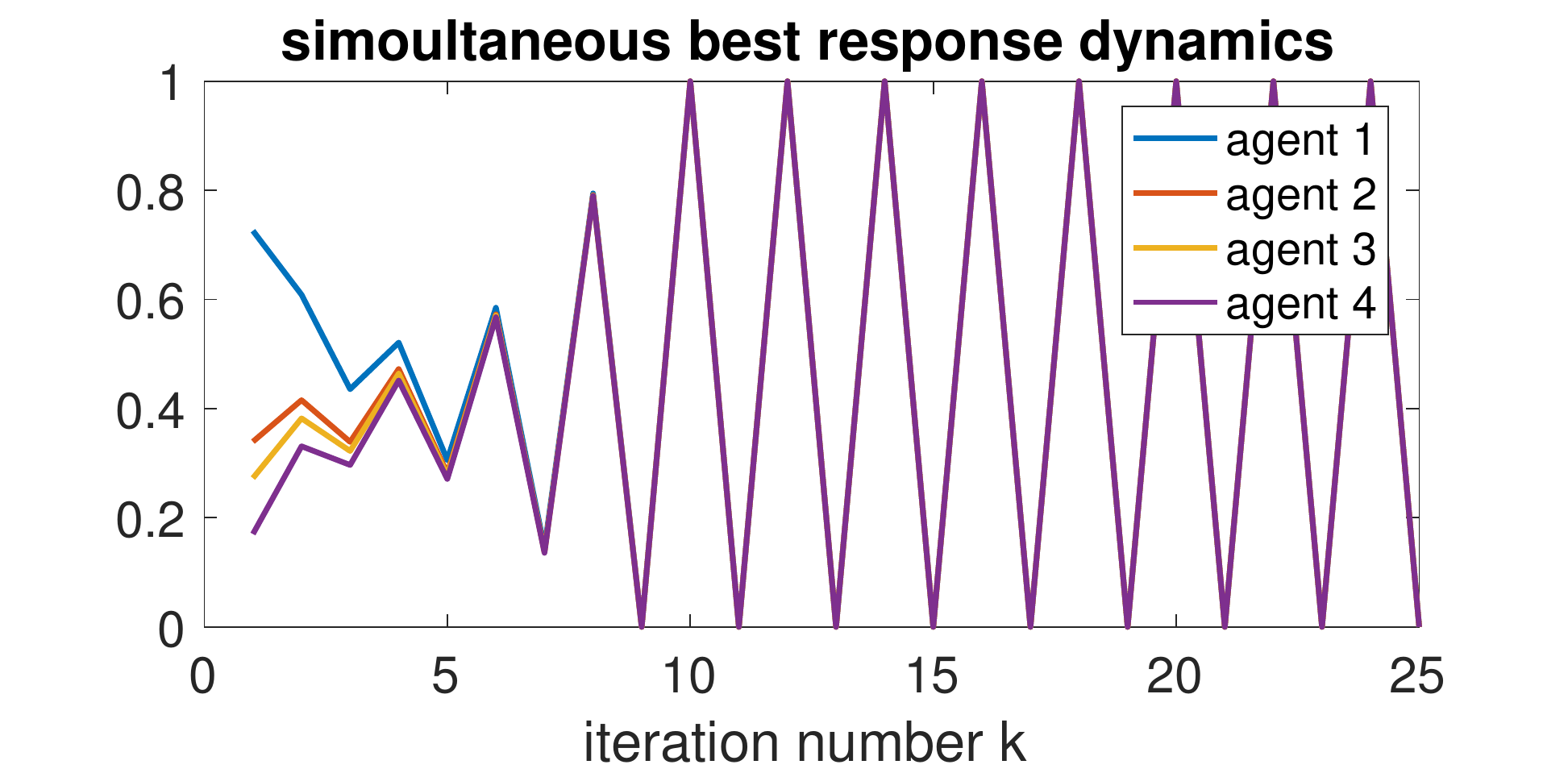}
\end{minipage} \\[0.2cm]

\begin{minipage}{0.05\textwidth}
B) 
\end{minipage} 
\begin{minipage}{0.9\textwidth}
 \includegraphics[height=0.16\textwidth]{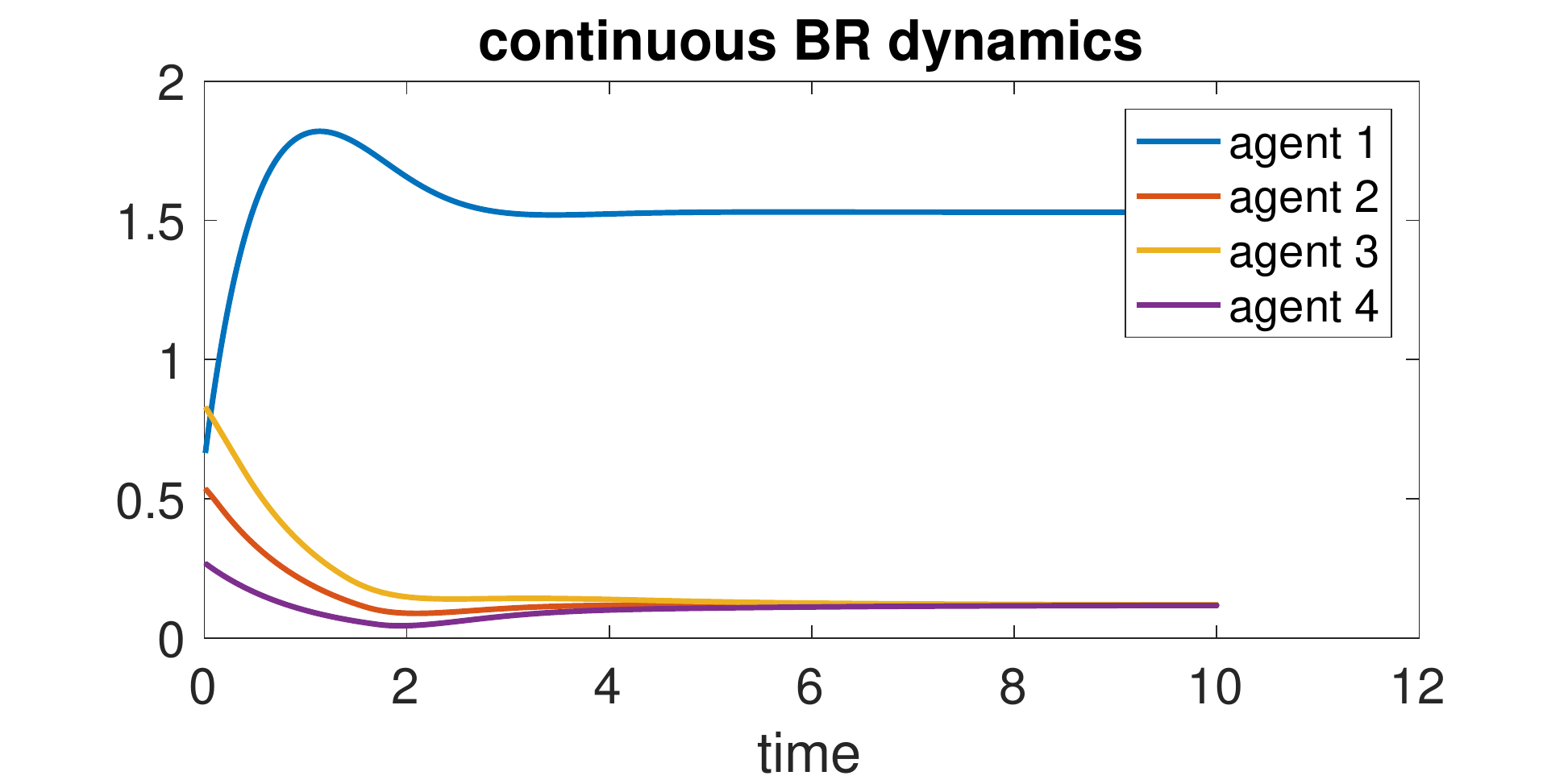}  
\includegraphics[height=0.16\textwidth]{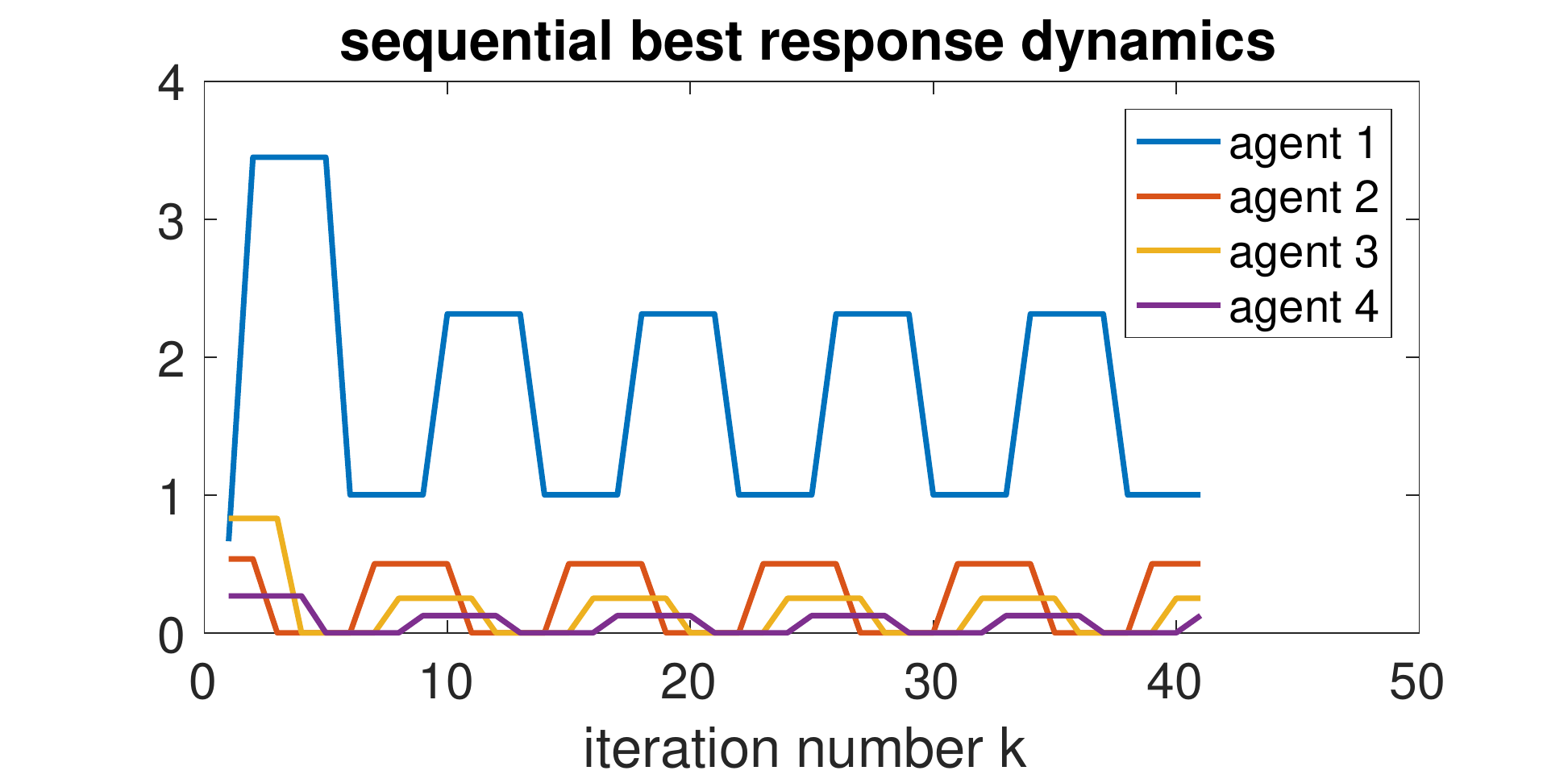}
\includegraphics[height=0.16\textwidth]{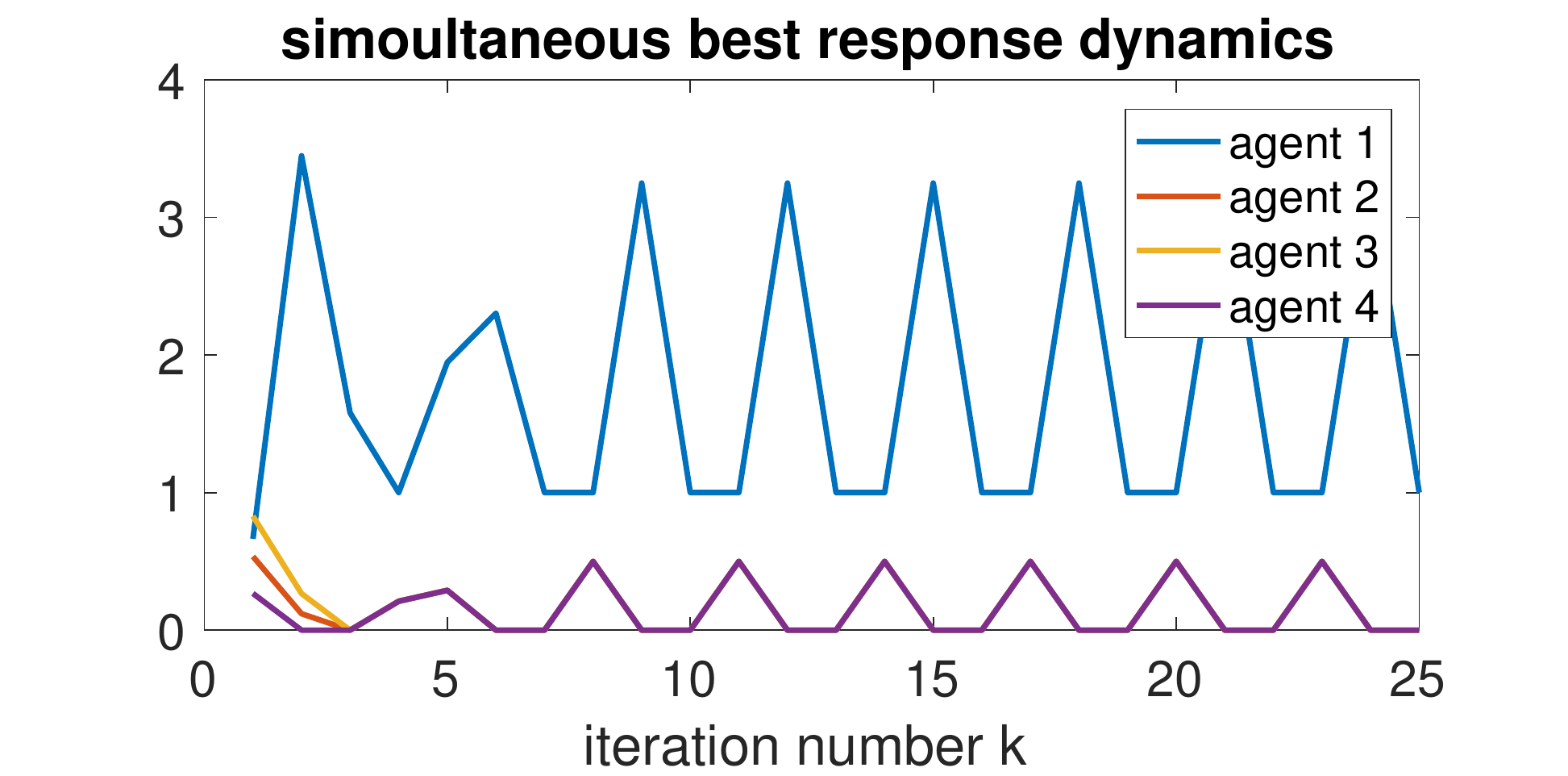}
\end{minipage} 
\end{center}
\caption{ Best response dynamics for the network games in Example \ref{ex:br}.}
\label{fig:A}
\end{figure}
}

{ \begin{example}[Strong monotonicity does not imply block-contraction]\label{ex:br2}
Consider the\\ game in Example \ref{ex:br}A) with  $\mathcal{X}^i=\mathbb{R}$ for simplicity. We have already shown therein that the game Jacobian is strongly monotone, however
the best response mapping is not a block contraction. To see this take any two vectors $x,y\in\R^N$ such that $x-y=\mathbbm{1}_N$. Then we get $\| B(x)- B(y)\|_\infty=\| K\PP_c(x- y)\|_\infty=0.5\|\PP_c\mathbbm{1}_N\|_\infty =0.5(N-1)\|\mathbbm{1}_N\|_\infty=0.5(N-1)\|x-y\|_\infty>\|x-y\|_\infty$. 
\end{example}}

\section{Comparative statics: Additional statements and proofs }
\label{appendix:comp}

\subsection*{\textbf{Proof of Theorem \ref{cor:Lipschitz}: Lipschitz continuity}}
We prove only the second statement, since the first one is similar. For simplicity let $\bar F(\cdot)=F(\cdot, \bar y,\bar \PP)$ and $ F(\cdot)=F(\cdot,  y, \PP)$. Also let $\bar x:=x^\star(\bar y, \bar G)$ be the unique solution of 
VI$(\mathcal{X}, \bar F)$ and $x:=x^\star( y, G)$ any  solution  of VI$(\mathcal{X}, F)$. 
Note that for each agent $i$,  $\bar x^i$ and $x^i$ are the best responses to $\bar x$ and $x$, respectively. Consequently, by the minimum principle for convex optimization   it holds

\begin{align*}
\bar F^i(\bar x)^\top(y^i-\bar x^i) \ge 0, \quad \forall y^i\in\mathcal{X}^i, \\
 F^i( x)^\top(y^i- x^i) \ge 0, \quad \forall y^i\in\mathcal{X}^i.
\end{align*}
By setting $y^i=x^i$ in the first line and $y^i=\bar x^i$ in the second line, we get 
\begin{align}\label{nag_step}
0\ge (\bar F^i(\bar x)-F^i( x))^\top( \bar x^i- x^i) =  (\bar F^i(\bar x) - \bar F^i( x))^\top( \bar x^i- x^i) + (\bar F^i( x) -F^i( x))^\top( \bar x^i- x^i).
\end{align}
Since $\bar F$ is a uniform block P function there exists at least one agent $i$ such that $(\bar F^i(\bar x) - \bar F^i( x))^\top( \bar x^i- x^i)\ge \bar \eta \|\bar x-x\|_2^2$. 
From \eqref{nag_step} for such agent we can write

\begin{align*}
\bar \eta \|\bar x-x\|_2^2 &\le  (\bar F^i(\bar x) - \bar F^i( x))^\top( \bar x^i- x^i) \le  
|(\bar F^i( x) -F^i( x))^\top( \bar x^i- x^i)|\\
& \le  
\|\bar F^i( x) -F^i( x)\|_2\| \bar x^i- x^i\|_2 \le \|\bar F^i( x) -F^i( x)\|_2\| \bar x- x\|_2.
\end{align*}
Dividing by $\| \bar x- x\|_2$ we get $\bar\eta \|\bar x-x\|_2 \le \|\bar F^i( x) -F^i( x)\|_2$.\footnote{Note that this is a result about the ``global'' stability of the solution $\bar x$ of a VI($\mathcal{X},\bar F)$ where $\mathcal{X}$ has a cartesian structure and $\bar F$ is a \textit{uniform block  P-function} with respect to the same partition. Stability of  VI solutions has been extensively studied, see e.g.  \cite[Chapter 5]{facchinei2007finite}.  Our proof follows the same lines as  the proof of \cite[Theorem~1.14]{nagurney2013network},  \cite[Theorem 2.1]{dafermos1988sensitivity}  where the same result is proven   under the more restrictive assumption of \textit{strong monotonicity}.} Hence

\begin{align*}
\bar \eta^2 \|\bar x - x\|^2_2& \le  \|\bar F^i( x) -F^i( x)\|_2^2 = \|\nabla_{x^i} J^i( x^i,\bar z^i( x),\bar y^i) - \nabla_{x^i} J^i( x^i, z^i( x),y^i) \|_2^2 \\
& \le  L^2 (\|\bar z^i( x) -  z^i(x) \|_2^2 + \|\bar y^i-y^i\|_2^2) \le L^2( \|\bar z( x) -  z( x) \|_2^2 + \|\bar y-y\|_2^2) \\
&\le L^2 ( \|\bar\PP- \PP\|_2^2 \|x\|_2^2 + \|\bar y-y\|_2^2)\le L^2 ( \|\bar\PP- \PP\|_2^2 \Delta^2 + \|\bar y-y\|_2^2),
\end{align*}
where for any $x\in\mathcal{X}$ we set $\bar z^i(x):=\sum_{j=1}^N \bar \PP _{ij} x^j$ and $ z^i(x):=\sum_{j=1}^N  \PP _{ij} x^j$.

{ \subsection*{\textbf{The case of  scalar non-negativity constraints}}\label{non_neg}

In the case of scalar strategies and non-negativity constraints, Assumption~\ref{ass:diff} and Proposition~\ref{thm:diff} can be  simplified. First of all, for any game $\mathcal{G}(y)$ let us call \textit{active} any agent $i\in\N[1,N]$ for which $x^{\star i}(y)>0$ and \textit{inactive}  any agent for which $x^{\star i}(y)=0$. Let $\mathcal{A}$ be the set of active players at $\bar y$.  Since only non-negativity constraints
are present $B=-I_N,b=0$ and $H,h$ are empty. This simplifies Assumption~\ref{ass:diff}. Specifically,  since $A$ is obtained by selecting only  rows from  $B=-I_N$  Assumption \ref{ass:diff}b) always holds.  If we write the KKT conditions for these constraints we have

\begin{subequations}
\begin{align}
&F(x^\star(y),y) - \lambda(y) =0 \label{KKT1nn} \\
&   \lambda(y)^\top x^\star(y)=0, \ x^\star(y)\ge 0,\   \lambda(y) \ge 0 \label{KKT2nn}.
\end{align}
\end{subequations}
Hence from \eqref{KKT1nn} we get that the dual variable $\lambda^i(y)$ associated with the constraint that the strategy of agent $i$ should be non-negative is 
$\lambda^i(y)= \nabla_{x^i} J^i(x^{\star i}(y),z^i(x^\star(y)),y^i).$
The strict complementarity condition in Assumption \ref{ass:diff}c) is then equivalent to the requirement that 

\begin{equation}\label{eq:scn}
x^{\star i}(\bar y)=0\quad \Rightarrow \quad \nabla_{x^i} J^i(x^{\star i}(\bar y),z^i(x^\star(\bar y)),\bar y^i)>0.
\end{equation}

When  Assumption \ref{ass:diff}   holds,  for any small perturbation of $y$ around $\bar y$ the set of active players does not change. This means that in a neighborhood  of $\bar y$ we can equivalently consider a reduced game $\mathcal{G}_\mathcal{A}(y)$ where we remove all the players that are inactive at $\bar y$. Let $\PP _\mathcal{A}$ be the  adjacency matrix obtained by deleting from $\PP $ all the rows and columns with indices not in $\mathcal{A}$. Any Nash equilibrium of the reduced game $\mathcal{G}_A(y)$, complemented with $x^{\star i}(y)=0$ for all $i\notin \mathcal{A}$   is  a Nash equilibrium of the full game $\mathcal{G}(y)$. Consequently, the vector of sensitivities can be computed applying formula \eqref{sensitivity} to the reduced game (for which there are no active constraints) resulting in 

\begin{align*}
\nabla_y x_\mathcal{A}^\star(\bar y) &= - [(\nabla_x F_\mathcal{A}(x,y) )^{-1} \nabla_y F_\mathcal{A}(x,y) ]_{\{x=x^\star_\mathcal{A}(\bar y),y=\bar y\}}\\
\nabla_y x_{\mathcal{A}^c}^\star(\bar y) &=0,
\end{align*}
where ${\mathcal{A}^c}$ denotes the set of inactive players at $\bar y$.
This result allows for example to derive \cite[Proposition 1]{allouch2015private}.
}


\bibliography{mit.bib}

\end{document}